\numberwithin{equation}{subsection}
\theoremstyle{plain}
\newcounter{dummy} \numberwithin{dummy}{section}
\newtheorem{lemma}[dummy]{Lemma}
\theoremstyle{definition}
\newcommand{\N}{\mathcal{N}}
\newcommand{\bH}{\mathrm{H}}
\newcommand{\bZ}{\mathrm{Z}}
\newcommand{\bJ}{\mathrm{J}}
\newcommand{\bB}{\mathrm{B}}
\newcommand{\bP}{\mathrm{P}}
\newcommand{\bQ}{\mathrm{Q}}
\newcommand{\J}{\boldsymbol{\mathrm{J}}}
\newcommand{\B}{\boldsymbol{\mathrm{B}}}
\renewcommand{\P}{\boldsymbol{\mathrm{P}}}
\newcommand{\Q}{\boldsymbol{\mathrm{Q}}}
\newcommand{\sH}{\mathsf{H}}
\newcommand{\sZ}{\mathsf{Z}}
\newcommand{\sJ}{\mathsf{J}}
\newcommand{\sB}{\mathsf{B}}
\newcommand{\sP}{\mathsf{P}}
\newcommand{\sQ}{\mathsf{Q}}
\newcommand{\s}{\mathfrak{s}}
\renewcommand{\r}{\mathfrak{r}}
\newcommand{\e}{\mathfrak{e}}
\newcommand{\p}{\mathfrak{p}}
\newcommand{\co}{\mathfrak{co}}
\renewcommand{\k}{\mathfrak{k}}
\newcommand{\g}{\mathfrak{g}}
\newcommand{\so}{\mathfrak{so}}
\renewcommand{\sp}{\mathfrak{sp}}
\newcommand{\n}{\mathfrak{n}}
\newcommand{\osp}{\mathfrak{osp}}
\renewcommand{\a}{\mathfrak{a}}
\newcommand{\mh}{H}
\newcommand{\mz}{Z}
\newcommand{\mb}{B}
\renewcommand{\mp}{P}
\newcommand{\hh}{\mathbb{h}}
\newcommand{\zz}{\mathbb{z}}
\newcommand{\bb}{\mathbb{b}}
\newcommand{\pp}{\mathbb{p}}
\newcommand{\vt}{\bm{\theta}}
\newcommand{\JJ}{\mathbb{J}}
\newcommand{\BB}{\mathbb{B}}
\newcommand{\PP}{\mathbb{P}}
\newcommand{\cc}{\mathbb{c}}
\newcommand{\dd}{\mathbb{d}}
\newcommand{\ee}{\mathbb{e}}
\newcommand{\ff}{\mathbb{f}}
\renewcommand{\gg}{\mathbb{g}}
\renewcommand{\ll}{\mathbb{l}}
\newcommand{\mm}{\mathbb{m}}
\newcommand{\nn}{\mathbb{n}}
\newcommand{\qq}{\mathbb{q}}
\newcommand{\rr}{\mathbb{r}}
\renewcommand{\ss}{\mathbb{s}}
\newcommand{\uu}{\mathbb{u}}
\newcommand{\vv}{\mathbb{v}}
\newcommand{\ii}{\mathbb{i}}
\newcommand{\jj}{\mathbb{j}}
\newcommand{\kk}{\mathbb{k}}
\newcommand{\overbar}[1]{\mkern 1.5mu\overline{\mkern-1.5mu#1\mkern-1.5mu}\mkern 1.5mu}
\newcommand{\Ad}{\operatorname{Ad}}
\renewcommand{\Re}{\operatorname{Re}}
\renewcommand{\Im}{\operatorname{Im}}
\newcommand{\Aut}{\operatorname{Aut}}
\newcommand{\Mat}{\operatorname{Mat}}
\newcommand{\ad}{\operatorname{ad}}
\newcommand{\G}{\operatorname{G}}
\newcommand{\GL}{\operatorname{GL}}
\newcommand{\SO}{\operatorname{SO}}
\newcommand{\Sp}{\operatorname{Sp}}
\newcommand{\Hom}{\operatorname{Hom}}
\newcommand{\cV}{\mathscr{V}}
\newcommand{\cJ}{\mathscr{J}}
\newcommand{\cS}{\mathscr{S}}
\definecolor{gris}{rgb}{0.5,0.5,0.5}
\begin{document}

\title{Generalised Bargmann Superalgebras}
\author[Grassie]{Ross Grassie}
\address{Maxwell Institute and School of Mathematics, The University
  of Edinburgh, James Clerk Maxwell Building, Peter Guthrie Tait Road,
  Edinburgh EH9 3FD, Scotland, United Kingdom}
\email[RG]{\href{mailto:ross.grassie@ed.ac.uk}{ross.grassie@ed.ac.uk}}
\begin{abstract}
The Bargmann algebra and centrally-extended Newton-Hooke algebras describe the non-relativistic symmetries of massive particles in flat and curved spacetimes,  respectively.  These three algebras all arise as deformations of the universal central-extension of the static kinematical Lie algebra.  In this paper, we classify the $\N=1$ super-extensions for each of these algebras in $(3+1)$-dimensions, up to isomorphism.  We then identify the non-empty branches of the algebraic variety describing the $\N=2$ super-extensions of these algebras. We find 9 isomorphism classes in the $\N=1$ case and 22 branches in the $\N=2$ case.  We then give a brief discussion on some applications of these Lie superalgebras, including their possible uses for non-relativistic supergravity and holography. 
\end{abstract}
\dedicatory{In memory of Winifred Whittle}
\maketitle
\tableofcontents
\section{Introduction}
\subsection{Background}
The idea of modelling spacetime geometrically was pioneered by Albert Einstein in his theory of general relativity.  The equivalence principle underlying this theory requires the spacetime geometry to be pseudo-Riemannian.  Thus, this model builds upon special relativity; namely, it has local Lorentz symmetry at its core.  However, not all physical phenomena of interest appear to us in an intrinsically Lorentzian relativistic way.  A large body of literature is emerging, which aims to generalise Einstein's idea to allow other kinematical symmetries to dictate the geometry of spacetime.  
\\ \\
The question of which symmetries may describe physical systems has lead to the recent classification of kinematical Lie algebras \cite{Andrzejewski:2018gmz, Figueroa-OFarrill:2017sfs, Figueroa-OFarrill:2017tcy, Figueroa-OFarrill:2017ycu}.  This classification divides into four distinct types: Riemannian, pseudo-Riemannian, Galilean and Carrollian.  The former two are those traditionally posited to describe kinematical systems, but it is the latter two that are receiving renewed interest.  
\\ \\
One reason for this interest has its roots in the search for a quantum theory of gravity. Conventional approaches try to access the quantum gravity corner of the Bronstein cube via general relativity or relativistic quantum field theory.  However, a possible third approach to relativistic quantum gravity may lie in first producing a non-relativistic quantum gravity theory.
\\ \\
 As is often the case when exploring new territory, it is best to start from somewhere familiar.  Beginning from a Lorentzian system in $(D+2)$-dimensions, a null reduction down to $(D+1)$-dimensions produces a non-relativistic counterpart.  This process has led not only to non-relativistic versions of gravity \cite{Grosvenor:2017dfs, Bergshoeff:2017dqq,JULIA1995291,10.1007/978-3-319-68445-1_43} but non-relativistic versions of string theory \cite{Kluson:2018egd,Harmark:2019upf}, holography \cite{Harmark:2017rpg,Christensen:2013lma} and supersymmetric quantum field theories \cite{Bergshoeff:2020baa, Auzzi:2019kdd}.  There are several other approaches to forming non-relativistic (super)gravity theories currently under investigation, including gauging known symmetry algebras, such as the Bargmann algebra \cite{Andringa:2010it,Hartong:2015zia,Bergshoeff_2019,Bergshoeff:2014gja, Bergshoeff:2017dqq} and super-Bargmann algebra \cite{Andringa:2013mma,Bergshoeff:2014gja}, producing new algebras to gauge and Lagrangians through expansion procedures \cite{Hansen:2018ofj, Hansen:2020pqs,Harmark:2019upf} and finding extensions that may allow for the production of an action \cite{Gomis:2018xmo,Gomis:2019nih,Bergshoeff:2020fiz,deAzcarraga:2019mdn,deAzcarraga:2002xi,Matulich:2019cdo,Papageorgiou:2009zc, Gomis:2019sqv, Bergshoeff:2019ctr, Romano:2019ulw}.  Applying these approaches to Maxwellian extensions of Newtonian gravity is also an active field of research \cite{Concha:2020sjt,Concha:2019lhn,Aviles:2018jzw, Concha:2020ebl}, and non-relativistic symmetries and geometries have even found their place in double field theory \cite{Blair:2019qwi,Morand:2017fnv,Ko:2015rha,Berman:2019izh}.
\\ \\
It may be thought that since the Newton-Cartan (NC) geometry one obtains by gauging the Bargmann algebra may be found via a null reduction of general relativity in flat space \cite{Bergshoeff:2017dqq}, a similar story may be true for the centrally-extended Newton-Hooke algebras, which are the curved equivalents.  Namely, that the non-relativistic geometries obtained through the gauging of the centrally-extended Newton-Hooke algebras may be obtained via null reductions of anti-de Sitter (AdS) and de Sitter (dS) space.  However, this is not the case.  Seeking a unified description of these algebras, we note that the Galilean and Newton-Hooke spacetimes are described by NC structures $(\tau, h)$ in which the clock one-form is closed, $d\tau = 0$ \cite{Figueroa-OFarrill:2019sex}.  In a recent work by José Figueroa-O'Farrill, it is shown that these NC structures arise via a null reduction of a particular type of Bargmann space \cite{Figueroa-OFarrill:2020gpr}.\footnote{These results are also shown in \cite{Gibbons:2003rv, PhysRevD.31.1841, Duval:1990hj}.}  More explicitly, these Bargmann spaces are described by a triple $(B, g, \xi)$ consisting of a $(D+2)$-dimensional Lorentzian manifold $B$, a Lorentzian metric $g$, and a null, nowhere-vanishing vector field $\xi$, such that $\nabla \xi = 0$ \cite{Duval:2014uoa}.  In other words, $g$ is a Brinkmann metric, or $(B, g)$ is a $pp$-wave \cite{Duval:1990hj}.\footnote{Thank you to José Figueroa-O'Farrill for this interpretation.}  Using light-cone coordinates, we may write
\begin{equation}
	g = 2 du dv + A(u, x) du^2 + B_i (u, x) du dx^i + g_{ij} (u, x) dx^i dx^j \quad \text{and} \quad \xi = \frac{\partial}{\partial v},
\end{equation}
where $u$ and $v$ are our light-cone coordinates, and $i, j = 0, 1, ..., D-1$.  Now, taking the quotient of $B$ with respect to the isometry subgroup generated by $\xi$, one obtains an NC space, $N = B/\langle \xi \rangle$.  Using the musical isomorphism
\begin{equation}
	\xi^\flat = g(\xi, -) = du
\end{equation}
produces the clock one-form on the base space, while the spatial metric $h$ is defined 
\begin{equation}
	h (\alpha, \beta) = g^{-1}( \pi^*(\alpha), \pi^*(\beta)),
\end{equation}
where $\alpha, \beta \in \Omega^1(N)$, and $\pi: B \rightarrow N$ is the projection of this trivial principle $\mathbb{R}$ bundle. The Galilean and Newton-Hooke spacetimes then arise through different choices of $h$ for our NC base space $(N, \tau, h)$.  In this picture, the central-extended Galilean and Newton-Hooke algebras may be thought of as the Lie algebras for the isometry groups of the total spaces, or, equivalently, in the following way.  Taking the flat case as our example, the Galilean algebra $\g$ arises as the centraliser of the null Killing vector $\xi$ in the Lie algebra of Killing vector fields $\mathcal{K}$:
\begin{equation}
	\begin{split}
		\mathcal{K} &= \{ X \in \mathfrak{X}(B) \, |\, \mathcal{L}_X g = 0 \} \\
		\g &= \{ X \in \mathcal{K} \,  | \,  [X, \xi] = 0 \}.
	\end{split}
\end{equation}
This Lie algebra may admit a central-extension $\langle \xi \rangle$ fitting into the short exact sequence 
\begin{equation}
	0  \rightarrow \langle \xi \rangle \rightarrow \g \rightarrow \g/ \langle \xi \rangle \rightarrow 0.
\end{equation}
The Bargmann algebra is then the non-trivial central extension, i.e. the extension that does not admit a splitting such that
$\g = \g/ \langle \xi \rangle \oplus \langle \xi \rangle$ as a Lie algebra.  The centrally-extended Newton-Hooke algebras then appear in an identical manner when $\g$ is replaced in this story by $\n_{\pm}$.
\\ \\
The primary aim of the present paper is to provide a set of super-extensions for the Bargmann and centrally-extended Newton-Hooke algebras in $(3+1)$-dimensions to facilitate future investigations into non-relativistic physics.  It may be read as a case study for the development of a formalism for classifying kinematical Lie superalgebras (KLSAs) which began in \cite{Figueroa-OFarrill:2019ucc}.  We will first address the issue of introducing a central extension, $Z$, before describing how to generalise the construction for extended supersymmetry.
\\ \\
As far as we are aware, this is the first classification of super-extensions for these algebras.  Several papers have asked what the possible \textit{"super-kinematics"} are, and discussed either $\N=1$ \cite{Rembielinski:1984xy} or $\N=2$ \cite{Kosinski:1986ts} super-extensions of the kinematical Lie algebras first identified by Bacry and Lévy-LeBlond in their pioneering paper \cite{doi:10.1063/1.1664490}.  Others consider contractions of the anti-de Sitter (AdS) superalgebra $\osp(1|4)$ \cite{Hussin_1999,CampoamorStursberg:2008hm,Huang:2014ega}.  However, we believe this is the first attempt to classify these centrally-extended kinematical Lie algebras.
\\ \\
In the $\N=1$ case, we will provide a full classification, analogous to that of the non-centrally extended kinematical Lie superalgebras in \cite{Figueroa-OFarrill:2019ucc}.  However, the $\N=2$ case is considerably more involved.  Therefore, in this instance, we will stop short of explicitly identifying each super-extension of the algebras and instead highlight which brackets are allowed to be non-vanishing.  Thus we are able to identify important features, such as which generators appear in the bracket $[\Q, \Q]$ and which act on the supercharges, without burdening the reader with excessive detail.  However, should the reader wish to investigate a particular type of super-extension further, we will provide the required tools. 
\\ \\
For the present paper, we will use the classification of non-trivial central extensions of kinematical Lie algebras in $(3+1)$-dimensions presented in \cite{Figueroa-OFarrill:2017ycu} as our starting point (see Table \ref{tab:algebras}).  This list includes the Bargmann $\hat{\g}$ and centrally-extended Newton-Hooke algebras $\hat{\n}_{\pm}$, as well as the centrally-extended static kinematical Lie algebra $\hat{\a}$, of which the former three are all deformations.\footnote{In this paper, we will not use the standard notation for the static kinematical Lie algebra $\s$; instead, we will use $\a$ for Aristotelian, since this is the sole example of an Aristotelian algebra in this paper. This notational change allows us to use $\s$ when referring to kinematical Lie superalgebras.}  From now on, we will refer to these algebras collectively as \textit{generalised Bargmann algebras}.
\begin{table}[h!]
  \centering
  \caption{Generalised Bargmann algebras in $D=3$}
  \label{tab:algebras}
  \rowcolors{2}{blue!10}{white}
    \begin{tabular}{l|*{5}{>{$}l<{$}}|l}\toprule
      \multicolumn{1}{c|}{GBA} & \multicolumn{5}{c|}{Nonzero Lie brackets in addition to $[\J,\J] = \J$, $[\J, \B] = \B$, $[\J,\P] = \P$} & \multicolumn{1}{c}{Comments}\\\midrule
      \hypertarget{a}{1} & [\B,\P] = \bZ & &  &  & & $\hat{\a}$ \\
      \hypertarget{n-}{2} & [\B,\P] = \bZ & [\bH,\B] = \B & [\bH,\P]= -\P &  & & $\hat{\n}_-$\\
      \hypertarget{n+}{3} & [\B,\P] = \bZ & [\bH,\B] = \P & [\bH,\P]= -\B & & & $\hat{\n}_+$\\
      \hypertarget{g}{4} & [\B,\P] = \bZ & [\bH,\B]=-\P & & & & $\hat{\g}$\\
      \bottomrule
    \end{tabular}
\end{table}
\subsection{Outline of Paper}
In section \ref{sec:formalism_intro}, we will introduce the objects of interest and describe the classification problem we wish to solve.  In particular, section \ref{subsec:FI_KLAs} will introduce kinematical Lie algebras and their central extensions, and setup the \textit{universal} generalised Bargmann algebra, which plays a crucial role in ensuring we do not duplicate our calculations later on.  In section \ref{subsec:FI_KLSAs}, we give a brief definition of kinematical Lie superalgebras and define the classification problem. The exact details of the quaternionic formalism in the $\N=1$ and $\N=2$ cases are left until sections \ref{subsec:N1_setup} and \ref{subsec:N2_setup}, respectively.  
\\ \\
In section \ref{sec:N1_classification}, we classify the $\N=1$ super-extensions of the generalised Bargmann algebras.  As mentioned above, section \ref{subsec:N1_setup} sets up the quaternionic formalism for the $\N=1$ super-extensions.  There are also some useful preliminary results presented here, and a discussion on the possible basis transformations $\G \subset \GL(\s_{\bar{0}}) \times \GL(\s_{\bar{1}})$ for the $\N=1$ kinematical Lie superalgebras.  In section \ref{subsec:N1_class}, we classify the $\N=1$ super-extensions of the four algebras before summarising our findings in section \ref{subsec:N1_summary}.  This final section includes a discussion on unpacking the quaternionic formalism to a notation that may be more familiar to the reader. 
\\ \\
Section \ref{sec:N2_classification} investigates the $\N=2$ super-extensions of the generalised Bargmann algebras.  Following the same pattern as section \ref{sec:N1_classification}, section \ref{subsec:N2_setup} introduces the quaternionic formalism for the $\N=2$ case.  It includes some preliminary results and the definition of the group of basis transformations $\G$ for these superalgebras.  Section \ref{subsec:N2_branches} identifies four different branches of possible super-extension for the investigation in section \ref{subsec:N2_class}.  The results are summarised in section \ref{subsec:N2_summary}, where there is also a brief discussion on unpacking the $\N=2$ quaternionic formalism.  Finally, in section \ref{sec:discussion}, we give some concluding remarks and discuss possible future directions for this work. 
\\ \\
The classifications in this paper adopt an unconventional formalism, and the details of how we arrive at our results can be quite involved, particularly for the $\N=2$ case.  For those pressed for time or interested solely in the superalgebras in the classification, the following sections present the required reading to access the results.  For the quaternionic formalism in the $\N=1$ case, see section \ref{subsec:N1_setup}, and, in the $\N=2$ case, see section \ref{subsec:N2_setup}.  Sections \ref{subsec:N1_summary} and \ref{subsec:N2_summary} include discussions on unpacking the notation to something more conventional in the $\N=1$ and $\N=2$ cases, respectively.  Table \ref{tab:N1_classification} contains the results of the $\N=1$ classification, and Table \ref{tab:N2_classification} contains the results of the $\N=2$ analysis. 
\section{Introduction to Formalism} \label{sec:formalism_intro}
In this section, we will introduce the concept of a kinematical Lie (super)algebra and set up the classification problem we wish to solve in this paper.  Of particular importance is the definition of our universal generalised Bargmann algebra, which will be used throughout both the $\N=1$ and $\N=2$ classification problems.  All the statements made in this section assume that we are working in three spatial dimensions, $D=3$. 
\subsection{Kinematical Lie Algebras} \label{subsec:FI_KLAs}
A kinematical Lie algebra (KLA) $\k$ is a 10-dimensional real Lie algebra containing a rotational subalgebra $\r$ isomorphic to $\so(3)$ such that, under the adjoint action of $\r$, it decomposes as $\k = \r \,\oplus\, 2V \,\oplus\, \mathbb{R}$, where $V$ is a three-dimensional $\so(3)$ vector module and $\mathbb{R}$ is a one-dimensional $\so(3)$ scalar module.  We will denote the real basis for this Lie algebra as $\{\bJ_i, \bB_i, \bP_i, \bH\}$, where $\bJ_i$ is the generator for the subalgebra $\r$, $\bB_i$ and $\bP_i$ span our two copies of $V$, and $\bH$ spans the $\so(3)$ scalar module.  The Lie brackets common to all kinematical Lie algebras are
\begin{equation} \label{eq:kinematical_brackets}
	[\bJ_i, \bJ_j] = \epsilon_{ijk} \bJ_k \quad [\bJ_i, \bB_j] = \epsilon_{ijk} \bB_k \quad [\bJ_i, \bP_j] = \epsilon_{ijk} \bP_k
	\quad [\bJ_i, \bH] = 0.
\end{equation}
The classification of these algebras was first presented in \cite{doi:10.1063/1.527306}, completing the work of \cite{doi:10.1063/1.1664490}, in which the kinematical Lie algebras admitting parity and time-reversal automorphisms were
classified. Throughout this paper, we will frequently use the following abbreviated notation for the Lie brackets of these algebras.
\begin{gather}
		[\bJ_i, \bB_j] = \epsilon_{ijk} \bB_k \quad \text{is equivalent to} \quad  [\J, \B] = \B \nonumber \\
		[\bH, \bB_i] = \bP_i \quad \text{is equivalent to} \quad [\bH, \B] = \P \\
		[\bB_i, \bP_j] = \delta_{ij} \bH \quad \text{is equivalent to} \quad [\B, \P] = \bH \nonumber ,
\end{gather}
\textit{et cetera}.  The static kinematical Lie algebra, of which all other KLAs are deformations, admits a universal central extension \cite{Figueroa-OFarrill:2017ycu}.  This enhancement of the algebra requires an additional $\so(3)$ scalar module in the underlying vector space.  Let $\bZ$ span this extra copy of $\mathbb{R}$.  Our centrally-extended static kinematical Lie algebra is spanned by $\J, \B, \P, \bH$, and $\bZ$, with non-vanishing brackets 
\begin{equation}
	[\J, \J] = \J \quad [\J, \B] = \B \quad [\J, \P] = \P \quad [\B, \P] = \bZ.
\end{equation}
All other centrally-extended kinematical Lie algebras are deformations of this algebra; therefore, these brackets are common to all such algebras. For a complete classification of the centrally-extended kinematical Lie algebras see Table \ref{tab:ce_algebras}, taken from \cite{Figueroa-OFarrill:2017ycu}.  The three sections of this table, starting from the top, are the non-trivial central extensions, the trivial central extensions, and, finally, the non-central extensions of kinematical Lie algebras. 
\begin{table}[h!]
  \centering
  \caption{Centrally-extended kinematical Lie algebras in $D=3$}
  \label{tab:ce_algebras}
  \rowcolors{2}{blue!10}{white}
  \resizebox{\textwidth}{!}{
    \begin{tabular}{l|*{5}{>{$}l<{$}}|l}\toprule
      \multicolumn{1}{c|}{KLA} & \multicolumn{5}{c|}{Nonzero Lie brackets in addition to $[\J ,\J] = \J$, $[\J , \B] = \B$, $[\J ,\P] = \P$} & \multicolumn{1}{c}{Comments}\\\midrule
      1 & [\B ,\P] = \bZ & &  &  & & $\hat{\a}$ \\
      2 & [\B ,\P] = \bZ & [\bH,\B] = \B & [\bH,\P]= -\P &  & & $\hat{\n}_-$\\
      3 & [\B ,\P] = \bZ & [\bH,\B] = \P & [\bH,\P]= -\B & & & $\hat{\n}_+$\\
      4 & [\B ,\P] = \bZ & [\bH,\B]=-\P & & & & $\hat{\g}$\\
	  \hline
      5 & [\B ,\P] = \bH & [\bH,\B] = \P &  &  & [\B , \B] = \J & $\e \oplus \mathbb{R}Z$ \\
      6 & [\B ,\P] = \bH & [\bH,\B] = -\P &  &  & [\B , \B] = -\J & $\p \oplus \mathbb{R}Z$\\
      7 & [\B ,\P] = \bH + \J & [\bH,\B] = -\B & [\bH,\P]= \P & & & $\so(4,1) \oplus \mathbb{R}Z$\\
      8 & [\B ,\P] = \bH & [\bH,\B]= \P & [\bH , \P] = -\B & [\P, \P] = \J & [\B, \B] = \J & $\so(5) \oplus \mathbb{R}Z$\\
      9 & [\B ,\P] = \bH & [\bH,\B] = -\P &  [\bH, \P] = \B & [\P, \P] = -\J  & [\B, \B] = -\J & $\so(3, 2) \oplus \mathbb{R}Z$ \\
      \hline
      10 & [\B ,\P] = \bZ & [\bH,\B] = \B & [\bH,\P]= \P & [\bH, \bZ] = 2 \bZ & & \\
      11 & [\B ,\P] = \bZ & [\bH,\B] = \gamma \B & [\bH,\P]= \P & [\bH, \bZ] = (\gamma + 1) \bZ & & $\gamma \in (-1, 1)$\\
      12 & [\B ,\P] = \bZ & [\bH,\B]= \B+ \P & [\bH, \P] = \P & [\bH, \bZ] = 2 \bZ &  & \\
      13 & [\B ,\P] = \bZ & [\bH,\B] = \alpha \B + \P & [\bH,\P]= -\B + \alpha \P & [\bH, \bZ] = 2\alpha \bZ & & $\alpha > 0$ \\
      14 & [\B ,\P] = \bZ & [\bZ,\B] = \P & [\bH,\P]= \P & [\bH, \bZ] = \bZ & [\B, \B] = \J & $\co(4) \ltimes \mathbb{R}^4 $\\
      15 & [\B ,\P] = \bZ & [\bZ,\B]= -\P & [\bH, \P] = \P & [\bH, \bZ] = \bZ &  [\B, \B] = - \J & $\co(3,1) \ltimes \mathbb{R}^{3, 1}$ \\ 
      \bottomrule
    \end{tabular}
    }
\end{table} \\
In the present paper, we will focus solely on the first of these sections, and it shall be exclusively the algebras of this section that we are referring to when using the term \textit{generalised Bargmann algebras}. It is useful for our later calculations to define a \textit{universal} generalised Bargmann algebra.  In addition to the standard kinematical brackets given in \eqref{eq:kinematical_brackets}, this algebra has non-vanishing brackets
\begin{equation}
	[\B, \P] = \bZ \qquad [\bH, \B] = \lambda \B + \mu \P \qquad [\bH, \P] = \eta \B + \varepsilon \P,
\end{equation}
where $\lambda, \mu, \eta, \varepsilon \in \mathbb{R}$.\footnote{Note, the universal generalised Bargmann algebra is not a Lie algebra for arbitrary $\lambda, \mu, \eta, \varepsilon$.  It is used here simply as a computational tool.}  Setting these four parameters to certain values allows us to reduce to the four cases of interest.  For example, 
$\hat{\g}$ is given by setting $\lambda=\eta=\varepsilon=0$ and $\mu=-1$.  By beginning with the universal algebra, and only picking our parameters, and, thus, our algebra, when we can no longer make progress in the universal case, we reduce the amount of repetition in our calculations.
\subsection{Kinematical Lie Superalgebras} \label{subsec:FI_KLSAs}
An $\N$-extended kinematical Lie superalgebra (KLSA) $\s$ is a real Lie superalgebra $\s = \s_{\bar{0}} \oplus \s_{\bar{1}}$, such that $\s_{\bar{0}} = \k$ is a kinematical Lie algebra, and $\s_{\bar{1}}$ consists of $\N$ copies of $S$, the real four-dimensional spinor module of the rotational subalgebra $\r \cong \so(3)$.
\\ \\
Under the isomorphism $\r \cong \sp(1)$, $S$ may be described as a copy of the quaternions where $\r$ acts via left quaternion multiplication.  We will denote the quaternions by $\mathbb{H}$ and the quaternion units by $\ii, \jj, \kk$, where $\ii \jj = \kk$, and $\jj \ii = -\kk$.  Now, to incorporate the $\s_{\bar{0}}$ generators into this formalism, we introduce the following injective $\mathbb{R}$-linear maps:\footnote{Solely to keep the notation consistent, when referring to the $\so(3)$ scalar module basis elements $\bH$ and $\bZ$ in this formalism, we will use $\sH$ and $\sZ$.  Therefore, if we consider $\sJ(\omega) = \omega_i \bJ_i$ to be the map between $\sJ$ and $\bJ$, the map between $\sH$ and $\bH$ is $\sH = \bH$.  Similarly, $\sZ = \bZ$.}
\begin{equation}
	\begin{split}
	\sJ: \Im(\mathbb{H}) \rightarrow \s_{\bar{0}} \quad &\text{such that} \quad \sJ(\omega) = \omega_i \bJ_i 
		\quad \text{where} \quad \omega = \omega_1 \ii + \omega_2 \jj + \omega_3 \kk \in \Im(\mathbb{H}) \\
	\sB: \Im(\mathbb{H}) \rightarrow \s_{\bar{0}} \quad &\text{such that} \quad \sB(\beta) = \beta_i \bB_i 
		\quad \text{where} \quad \beta = \beta_1 \ii + \beta_2 \jj + \beta_3 \kk \in \Im(\mathbb{H}) \\
	\sP: \Im(\mathbb{H}) \rightarrow \s_{\bar{0}} \quad &\text{such that} \quad \sP(\pi) = \pi_i \bP_i 
		\quad \text{where} \quad \pi = \pi_1 \ii + \pi_2 \jj + \pi_3 \kk \in \Im(\mathbb{H}). \\	
	\end{split}			
\end{equation}
We may now write the kinematical brackets of \eqref{eq:kinematical_brackets} using these maps and the standard quaternion calculus as follows.
\begin{equation}
	\begin{split}
	[\J, \J] = \J \quad &\implies \quad [\sJ(\omega), \sJ(\omega')] = \tfrac12 \sJ([\omega, \omega']) \\
	[\J, \B] = \B \quad &\implies \quad [\sJ(\omega), \sB(\beta)] = \tfrac12 \sB([\omega, \beta]) \\
	[\J, \P] = \P \quad &\implies \quad [\sJ(\omega), \sP(\pi)] = \tfrac12 \sP([\omega, \pi]) \\
	[\J, \bH] = 0 \quad &\implies \quad [\sJ(\omega), \sH] = 0 ,
	\end{split}
\end{equation}
where $\omega, \beta, \pi \in Im(\mathbb{H})$, $[\omega, \beta] := \omega\beta - \beta\omega$, and $\omega\beta$ is given by quaternion multiplication.  The additional brackets for the universal generalised Bargmann algebra may also be written in this quaternionic notation as
\begin{equation}
	\begin{split}
		[\B, \P] = \bZ \quad &\implies \quad [\sB(\beta), \sP(\pi)] = Re(\bar{\beta}\pi) \sZ \\
		[\bH, \B] = \lambda \B + \mu \P \quad &\implies \quad [\sH, \sB(\beta)] = \lambda \sB(\beta) + \mu \sP(\beta) \\
		[\bH, \P] = \eta \B + \varepsilon \P \quad &\implies \quad [\sH, \sP(\pi)] = \eta \sB(\pi) + \varepsilon \sP(\pi).
	\end{split}
\end{equation}
To write the $\s_{\bar{1}}$ generators in this language, we will use the injective $\mathbb{R}$-linear map
\begin{equation}
	\sQ: \mathbb{H}^{\N} \rightarrow \s_{\bar{1}} \quad \text{such that} \quad \sQ(\theta) = \theta_a \bQ_a
	 \quad \text{where} \quad \theta \in \mathbb{H}^{\N},
\end{equation}
and $\{\bQ_a\}$ form a real basis for $\s_{\bar{1}}$.\footnote{It may be important to note at this stage that $\theta = \theta_4 + \theta_1 \ii + \theta_2 \jj + \theta_3 \kk$ is just a quaternion with real components $\theta_i$, there are no Grassmann variables here.}  The exact form of the brackets involving
$\Q$ will depend on whether we are considering the $\N=1$ or $\N=2$ case; therefore, we will leave this discussion to sections \ref{subsec:N1_setup} and \ref{subsec:N2_setup}, respectively.  The only important point, for now, is that the bracket $[\J, \Q]$ is fixed from the outset in both these instances.
\\ \\
A super-extension $\s$ of one of our generalised Bargmann algebras $\k$ will be a kinematical Lie superalgebra such that $\s_{\bar{0}}= \k$.  To determine the super-extensions of the generalised Bargmann algebras, we, therefore, begin by letting $\s_{\bar{0}}$ be our universal generalised Bargmann algebra.  We then need to specify the Lie brackets $[\bH, \Q]$, $[\bZ, \Q]$, $[\B, \Q]$, $[\P, \Q]$, and $[\Q, \Q]$.  Each of the $[\s_{\bar{0}}, \s_{\bar{1}}]$ components of the bracket must be an $\r$-equivariant endomorphism of $\s_{\bar{1}}$, while the $[\s_{\bar{1}}, \s_{\bar{1}}]$ component must be an $\r$-equivariant map $\bigodot^2 \s_{\bar{1}} \rightarrow \s_{\bar{0}}$.  The space of possible brackets will be a real vector space $\cV$.  We then use the super-Jacobi identities to cut out an algebraic variety $\cJ \subset \cV$.  Since we are exclusively interested in supersymmetric extensions, we restrict ourselves to those Lie superalgebras for which $[\Q, \Q]$ is non-vanishing, which define a sub-variety $\cS \subset \cJ$.  This sub-variety may be unique to each generalised Bargmann algebra; therefore, it is at this stage we start to set the parameters of the universal algebra, where applicable. The isomorphism classes of the remaining kinematical Lie superalgebras are then in one-to-one correspondence with the orbits of $\cS$ under the subgroup $\G \subset \GL(\s_{\bar{0}}) \times \GL(\s_{\bar{1}})$.  The group $\G$ contains the automorphisms of $\s_{\bar{0}} = \k$ and additional transformations which are induced by the endomorphism ring of $\s_{\bar{1}}$. The form of this subgroup will be discussed in the $\N=1$ and $\N=2$ cases in sections \ref{subsubsec:N1_S_autos} and \ref{subsubsec:N2_S_autos}, respectively.  
\\ \\
In the $\N=1$ case, we will identify each orbit of $\cS$ explicitly, giving a full classification of the generalised Bargmann superalgebras in this instance.  However, in the $\N=2$ case, we will only identify the non-empty branches of $\cS$.  Each branch will have a unique set of $[\s_{\bar{0}}, \s_{\bar{1}}]$ and $[\s_{\bar{1}}, \s_{\bar{1}}]$ brackets for the associated generalised Bargmann algebra.  Thus we are able to highlight the form of the possible super-extensions without spending too much time pinpointing exact coefficients.
\section{$\N=1$ Extensions of the Generalised Bargmann Algebras} \label{sec:N1_classification}
Our investigation into generalised Bargmann superalgebras begins with the simplest
case, $\N=1$.  Section \ref{subsec:N1_setup} will complete our set up for this 
case by specifying the precise form of the $[\s_{\bar{0}}, \s_{\bar{1}}]$ and $[\s_{\bar{1}}, \s_{\bar{1}}]$ brackets.  Section
\ref{subsubsec:N1_S_preliminaries} then gives some preliminary results that will be useful in the
classification of the $\N=1$ extensions, and section \ref{subsubsec:N1_S_autos} will
define the group of basis transformations $\G \subset \GL(\s_{\bar{0}}) \times \GL(\s_{\bar{1}})$, which will allow us to pick out a single representative for each isomorphism class.  In section \ref{subsec:N1_class}, the classification is given before we summarise the results in section \ref{subsec:N1_summary}.
\subsection{Setup for the $\N=1$ Calculation} \label{subsec:N1_setup}
For completeness, recall that, in addition to the standard kinematical Lie brackets, the brackets for the universal generalised Bargmann superalgebra are
\begin{equation}
	\begin{split}
		[\sB(\beta), \sP(\pi)] &= \Re(\bar{\beta}\pi) \sZ \\ 
		[\sH, \sB(\beta)] &= \lambda \sB(\beta) + \mu \sP(\beta) \\
		[\sH, \sP(\pi)] &= \eta \sB(\pi) + \varepsilon \sP(\pi) ,
	\end{split}
\end{equation}
where $\beta, \pi \in \Im(\mathbb{H})$ and $\lambda, \mu, \eta, \varepsilon \in \mathbb{R}$.  We now want to specify the possible $[\s_{\bar{0}}, \s_{\bar{1}}]$
and $[\s_{\bar{1}}, \s_{\bar{1}}]$ brackets.  From \cite{Figueroa-OFarrill:2019ucc},
we have
\begin{equation} \label{eq:N1_general_brackets}
	\begin{split}
		[\sJ(\omega), \sQ(\theta)] &= \tfrac12 \sQ(\omega \theta) \\
		[\sB(\beta), \sQ(\theta)] &= \sQ(\beta \theta \bb) \\
		[\sP(\pi), \sQ(\theta)] &= \sQ(\pi \theta \pp) \\
		[\sH, \sQ(\theta)] &= \sQ(\theta\hh), \\
	\end{split}
\end{equation}
where $\omega, \pi, \beta \in \Im(\mathbb{H})$, $\theta, \bb, \pp, \hh \in \mathbb{H}$.
Since $\bZ$ is just another $\so(3)$ scalar module, and, therefore, the analysis of the 
bracket $[\bZ, \Q]$ will be identical to that of $[\bH, \Q]$, we know we can write
\begin{equation}
	[\sZ, \sQ(\theta)] = \sQ(\theta\zz),
\end{equation}
where $\zz \in \mathbb{H}$.  Having added this additional generator, 
the possible $[\s_{\bar{1}}, \s_{\bar{1}}]$ brackets are now $\so(3)$-equivariant elements
of $Hom_{\mathbb{R}}(\bigodot^2 S, \s_{\bar{0}}) = Hom_{\mathbb{R}}(3V \oplus \mathbb{R}, 3V \oplus 2 \mathbb{R})
= 9\, Hom_{\mathbb{R}}(V, V) \oplus 2\, Hom_{\mathbb{R}}(\mathbb{R}, \mathbb{R})$.
As may have been expected, given that we did not alter the vectorial sector of the algebra, the number of $Hom_{\mathbb{R}}(V, V)$ elements does not change. However, now that we have an additional $\so(3)$ scalar module, we have an additional scalar map, so
\begin{equation} \label{eq:N1_QQ}
	[\sQ(\theta), \sQ(\theta)] = n_0 |\theta|^2 \sH + n_1 |\theta|^2 \sZ - \sJ(\theta\nn_2\bar{\theta}) - \sB(\theta\nn_3\bar{\theta}) - \sP(\theta\nn_4\bar{\theta}),
\end{equation}
where $n_0, n_1 \in \mathbb{R}$ and $\nn_2, \nn_3, \nn_4 \in \Im(\mathbb{H})$.  This expression polarises to
\begin{equation}
	[\sQ(\theta), \sQ(\theta')] = n_0 Re(\bar{\theta}\theta') \sH + n_1 Re(\bar{\theta}\theta') \sZ - \sJ(\theta'\nn_2\bar{\theta}+ \theta \nn_2\bar{\theta'}) 
	- \sB(\theta'\nn_3\bar{\theta}+ \theta \nn_3\bar{\theta'}) -\sP(\theta'\nn_4\bar{\theta}+ \theta \nn_4\bar{\theta'}).
\end{equation}
\subsubsection{Preliminary Results} \label{subsubsec:N1_S_preliminaries}
Following the example of \cite{Figueroa-OFarrill:2019ucc}, we will now consider the super-Jacobi identities
for this super-extension of the generalised Bargmann algebra and use them to define our variety $\cJ$.  We have three types of Jacobi identity to consider
\begin{enumerate}
	\item $(\s_{\bar{0}}, \s_{\bar{0}}, \s_{\bar{1}})$,
	\item $(\s_{\bar{0}}, \s_{\bar{1}}, \s_{\bar{1}})$, and 
	\item $(\s_{\bar{1}}, \s_{\bar{1}}, \s_{\bar{1}})$.
\end{enumerate}
We do not need to consider the $(\s_{\bar{0}}, \s_{\bar{0}}, \s_{\bar{0}})$ case as we already know
that these are satisfied by the generalised Bargmann algebras. Equally, we do not need to include $\J$ in our investigations as the identities involving the rotational subalgebra $\r$ impose the $\so(3)$-equivariance of the brackets, which we already have by construction.  Now, let us consider each type of identity in turn.  In the following
discussions, we will only write down explicitly those identities which are not trivially satisfied.
\\
\paragraph{$(\s_{\bar{0}}, \s_{\bar{0}}, \s_{\bar{1}})$} \label{par:N1_S_P_001} ~\\ \\
By imposing these super-Jacobi identities, we ensure that $\s_{\bar{1}}$ is an 
$\s_{\bar{0}}$ module, not just an $\so(3)$ module. 
The $(\s_{\bar{0}}, \s_{\bar{0}}, \s_{\bar{1}})$ identities can be summarised as follows.
\begin{lemma}\label{lem:N1_001}
  The following relations between $\hh, \zz,\bb,\pp \in \mathbb{H}$ are implied by the
  corresponding $\k$-brackets:
  \begin{equation}
    \begin{split}
    	  [\bH,\bZ] = \lambda \bH + \mu \bZ & \implies [\zz, \hh] = \lambda \hh + \mu \zz \\
      [\bH,\B] = \lambda \B + \mu \P & \implies [\bb,\hh] = \lambda \bb + \mu \pp\\
      [\bH,\P] = \lambda \B + \mu \P & \implies [\pp,\hh] = \lambda \bb + \mu \pp\\
      [\bZ,\B] = \lambda \B + \mu \P & \implies [\bb,\zz] = \lambda \bb + \mu \pp\\
      [\bZ,\P] = \lambda \B + \mu \P & \implies [\pp,\zz] = \lambda \bb + \mu \pp\\
      [\B,\B] = \lambda \B + \mu \P + \nu \J & \implies \bb^2 = \tfrac12 \lambda \bb + \tfrac12 \mu \pp + \tfrac14 \nu\\
      [\P,\P] = \lambda \B + \mu \P + \nu \J & \implies \pp^2 = \tfrac12 \lambda \bb + \tfrac12 \mu \pp + \tfrac14 \nu\\
      [\B,\P] = \lambda \bH + \mu \bZ & \implies \bb \pp + \pp\bb = 0\quad\text{and}\quad [\bb,\pp] = \lambda \hh  + \mu \zz.
    \end{split}
  \end{equation}
\end{lemma}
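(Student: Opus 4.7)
The plan is to feed each of the $\k$-brackets $[X,Y]$ appearing in the statement into the super-Jacobi identity
\[
[[X,Y],\sQ(\theta)] \;=\; [X,[Y,\sQ(\theta)]] - [Y,[X,\sQ(\theta)]],
\]
expand both sides via \eqref{eq:N1_general_brackets} together with the analogous rule $[\sZ,\sQ(\theta)] = \sQ(\theta\zz)$, and then use injectivity of $\sQ$ (plus the freedom to vary $\theta\in\mathbb{H}$ and, where relevant, $\beta,\pi\in\Im(\mathbb{H})$) to reduce each case to an algebraic equation in $\hh,\zz,\bb,\pp$. The reason this works uniformly is that every action rule has the shape $\sQ(\theta)\mapsto\sQ(\zeta\theta\eta)$ for some $\zeta\in\mathbb{R}\oplus\Im(\mathbb{H})$ and $\eta\in\mathbb{H}$, so nested brackets produce nested quaternion multiplications whose outer factors can be cancelled once $\theta$ is arbitrary.

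For the scalar-scalar identity $[\bH,\bZ]$ and the four scalar-vector identities $[\bH,\B], [\bH,\P], [\bZ,\B], [\bZ,\P]$, the computation is essentially one line. For example, evaluating the Jacobi identity on $(\sH,\sB(\beta),\sQ(\theta))$ yields $\sQ(\beta\theta(\lambda\bb+\mu\pp)) = \sQ(\beta\theta[\bb,\hh])$; cancelling $\beta\theta$ (legitimate since $\theta$ is arbitrary in $\mathbb{H}$) produces $[\bb,\hh] = \lambda\bb+\mu\pp$, and the remaining scalar-vector cases follow the same pattern.

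The $[\B,\B]$ and $[\P,\P]$ relations require iterating $\sB$ (respectively $\sP$) twice, which introduces $\bb^2$ (resp.\ $\pp^2$) on one side and an antisymmetric factor $[\beta,\beta']\theta$ on both sides after swapping $\beta\leftrightarrow\beta'$. On the left one uses the quaternionic form $[\sB(\beta),\sB(\beta')] = \tfrac12\lambda\sB([\beta,\beta']) + \tfrac12\mu\sP([\beta,\beta']) + \tfrac12\nu\sJ([\beta,\beta'])$, and the extra $\tfrac12$ coming from $[\sJ,\sQ(\theta)] = \tfrac12\sQ(\omega\theta)$ explains the $\tfrac14\nu$ coefficient in the stated formula.

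The case I expect to be the main obstacle is $[\B,\P]$. Here the Jacobi identity produces
\[
\Re(\bar\beta\pi)\,\theta(\lambda\hh+\mu\zz) \;=\; \beta\pi\,\theta\,\pp\bb \;-\; \pi\beta\,\theta\,\bb\pp,
\]
whose right-hand side does not factor through $\Re(\bar\beta\pi)$, so one cannot cancel directly. My plan is to use the identity $\pi\beta + \beta\pi = -2\Re(\bar\beta\pi)$, valid for $\beta,\pi\in\Im(\mathbb{H})$, to rewrite the right-hand side as $\beta\pi\,\theta\,(\bb\pp+\pp\bb) + 2\Re(\bar\beta\pi)\,\theta\,\bb\pp$, and then to decompose $\beta\pi = -\Re(\bar\beta\pi) + \beta\times\pi$ into its real and imaginary parts. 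Since $\Re(\bar\beta\pi)$ and $\beta\times\pi$ vary independently as $\beta,\pi$ range over $\Im(\mathbb{H})$, their coefficients must vanish separately, yielding simultaneously $\bb\pp+\pp\bb=0$ and $[\bb,\pp] = \lambda\hh+\mu\zz$, which completes the lemma.
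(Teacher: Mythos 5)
Your proposal is correct and follows essentially the same route as the paper: substitute each $\k$-bracket into the super-Jacobi identity with $\sQ(\theta)$, expand via the general bracket forms, and use injectivity of $\sQ$; your separation of the $[\B,\P]$ identity into its $\Re(\bar\beta\pi)$ and $\beta\times\pi$ parts is exactly what the paper achieves by evaluating at $\beta=\pi=\ii$ and $\beta=\ii,\pi=\jj$. The only cosmetic difference is that the paper imports the cases not involving $\bZ$ from Lemma 1 of the earlier kinematical Lie superalgebra classification (treating $[\bZ,\B]$ and $[\bZ,\P]$ as \textit{mutatis mutandis} copies of $[\bH,\B]$ and $[\bH,\P]$) and only works out $[\bH,\bZ]$ and $[\B,\P]$ explicitly, whereas you rederive all cases directly.
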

\begin{proof}
All the results excluding $\bZ$ are taken from Lemma 1 in \cite{Figueroa-OFarrill:2019ucc}, and 
the $[\bZ, \B]$ and $[\bZ, \P]$ results are the same \textit{mutatis mutandis} as $[\bH, \B]$ and $[\bH, \P]$.
Therefore, the only new results are those for $[\bH, \bZ]$ and $[\B, \P]$.  The $[\bH, \bZ, \Q]$ 
identity is written
\begin{equation}
	[\sH, [\sZ, \sQ(\theta)] = [[\sH, \sZ], \sQ(\theta)] + [\sZ, [\sH, \sQ(\theta)]].
\end{equation}
Substituting in the relevant brackets, we find
\begin{equation}
	\sQ(\theta\zz\hh) = \lambda \sQ(\theta\hh) + \mu \sQ(\theta\zz) + \sQ(\theta\hh\zz).
\end{equation}
Using the injectivity of $\sQ$, we arrive at
\begin{equation}
	[\zz, \hh] = \lambda \hh + \mu \zz.
\end{equation}
Finally, the $[\B, \P, \Q]$ identity is
\begin{equation}
	[\sB(\beta), [\sP(\pi), \sQ(\theta)]] = [[\sB(\beta), \sP(\pi)], \sQ(\theta)] + [ \sP(\pi), [\sB(\beta), \sQ(\theta)]].
\end{equation}
Substituting in the brackets from \eqref{eq:N1_general_brackets}, we arrive at
\begin{equation}
	\sQ(\beta\pi \theta \pp\bb) = \Re(\bar{\beta}\pi) (\lambda \sQ(\theta\hh) + \mu \sQ(\theta\zz)) + \sQ(\pi\beta \theta \bb\pp)
\end{equation}
Letting $\beta = \pi = \ii$, we find
\begin{equation}
	[\bb, \pp] = \lambda \hh + \mu \zz.
\end{equation}
Now, let $\beta = \ii$ and $\pi = \jj$.  In this case, the $\lambda$ and $\mu$ terms vanish and
we are left with
\begin{equation}
	\bb\pp + \pp\bb = 0.
\end{equation}
\end{proof}
\paragraph{$(\s_{\bar{0}}, \s_{\bar{1}}, \s_{\bar{1}})$} \label{par:N1_S_P_011} ~ \\ \\
By imposing these super-Jacobi identities, we ensure that the $[\Q, \Q]$ bracket is an $\s_{\bar{0}}$-equivariant map  $\bigodot \s_{\bar{1}} \rightarrow \s_{\bar{0}}$.  The $(\s_{\bar{0}}, \s_{\bar{1}}, \s_{\bar{1}})$ identities can be difficult to
study if we are trying to be completely general; however, we know that 
all four algebras in Table \ref{tab:algebras} can be written as specialisations
of the universal generalised Bargmann algebra:
\begin{equation}
	[\B, \P] = \bZ \qquad [\bH, \B] = \lambda \B + \mu \P \qquad [\bH, \P] = \eta \B + \varepsilon \P,
\end{equation}
where $\lambda, \mu, \eta, \varepsilon \in \mathbb{R}$.  Therefore, we may use the
brackets of this algebra to obtain the following result.
\begin{lemma} \label{lem:N2_011}
The $[\bH, \Q, \Q]$ identity produces the conditions
\begin{equation}
	\begin{split}
		0 &= n_i \Re(\hh) \quad \text{where} \quad i \in \{0, 1\} \\
		0 &= \hh \nn_2 + \nn_2 \bar{\hh} \\
		\lambda \nn_3 + \eta \nn_4 &= \hh \nn_3 + \nn_3 \bar{\hh} \\
		\mu \nn_3 + \varepsilon \nn_4 &= \hh \nn_4 + \nn_4 \bar{\hh}.
	\end{split}
\end{equation}
The $[\bZ, \Q, \Q]$ identity produces the conditions
\begin{equation}
	\begin{split}
		0 &= n_i \Re(\hh) \quad \text{where} \quad i \in \{0, 1\} \\
		0 &= \hh \nn_j + \nn_j \bar{\hh} \quad \text{where} \quad j \in \{2, 3, 4\}.
	\end{split}
\end{equation}
The $[\B, \Q, \Q]$ identity produces the conditions
\begin{equation}
	\begin{split}
		0 &= n_0 \Re(\bar{\theta}\beta \theta \bb) \\
		0 &= \Re(\bar{\beta} \theta (\nn_4 + 2 n_1 \bar{\bb}) \bar{\theta}) \\
		0 &= \theta \nn_2 \overbar{\beta \theta \bb} + \beta \theta \bb \nn_2 \bar{\theta} \\
		\lambda n_0 |\theta|^2 \beta + \tfrac12 [\beta, \theta\nn_2\bar{\theta}] &= \theta \nn_3 \overbar{\beta \theta \bb}+ 
		\beta \theta \bb \nn_3 \bar{\theta} \\
		\mu n_0 |\theta|^2 \beta &= \theta \nn_4 \overbar{\beta \theta \bb} + \beta \theta \bb \nn_4 \bar{\theta}.
	\end{split}
\end{equation}
The $[\P, \Q, \Q]$ identity produces the conditions
\begin{equation}
	\begin{split}
		0 &= n_0 \Re(\bar{\theta}\pi \theta \pp) \\
		0 &= \Re(\bar{\pi} \theta (\nn_3 - 2 n_1 \bar{\pp}) \bar{\theta}) \\
		0 &= \theta \nn_2 \overbar{\pi \theta \pp} + \pi \theta \pp \nn_2 \bar{\theta} \\
		\eta n_0 |\theta|^2 \pi &= \theta \nn_3 \overbar{\pi \theta \pp}+ \pi \theta \pp \nn_3 \bar{\theta} \\
		\varepsilon n_0 |\theta|^2 \pi + \tfrac12 [\pi, \theta\nn_2\bar{\theta}] &= \theta \nn_4 \overbar{\pi \theta \pp} + \pi \theta \pp \nn_4 \bar{\theta}.
	\end{split}
\end{equation}
\end{lemma}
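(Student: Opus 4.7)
The plan is to substitute the universal generalised Bargmann brackets, the ansätze of \eqref{eq:N1_general_brackets}, and the polarisation formula following \eqref{eq:N1_QQ} into each $(\s_{\bar{0}}, \s_{\bar{1}}, \s_{\bar{1}})$ super-Jacobi identity in turn. After simplification both sides lie in $\s_{\bar{0}} = \k$, which decomposes as $\mathbb{R}\sH \oplus \mathbb{R}\sZ \oplus \sJ(\Im(\mathbb{H})) \oplus \sB(\Im(\mathbb{H})) \oplus \sP(\Im(\mathbb{H}))$ as an $\r$-module. Projecting onto each summand and using the injectivity of $\sH, \sZ, \sJ, \sB, \sP$ converts the identity into a family of quaternionic equations parametrised by $\theta \in \mathbb{H}$ (and, for the $\sB$- and $\sP$-cases, also by $\beta, \pi \in \Im(\mathbb{H})$). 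The extraction trick used throughout is that whenever an equation takes the form $\theta\, q\, \bar{\theta} = 0$ for all $\theta$, specialising to $\theta = 1$ forces $q = 0$.

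For the $(\bH, \Q, \Q)$ identity I would expand the left-hand side using $[\sH, \sH] = [\sH, \sZ] = 0$, $[\sH, \sJ(\omega)] = 0$, $[\sH, \sB(\beta)] = \lambda\sB(\beta) + \mu\sP(\beta)$ and $[\sH, \sP(\pi)] = \eta\sB(\pi) + \varepsilon\sP(\pi)$, and the right-hand side as $2[\sQ(\theta\hh), \sQ(\theta)]$ via $[\sH, \sQ(\theta)] = \sQ(\theta\hh)$ followed by the polarisation. Matching scalar components through $\Re(\overline{\theta\hh}\theta) = |\theta|^2 \Re(\hh)$ gives $n_i \Re(\hh) = 0$; matching the $\sJ$-component gives $\theta(\hh\nn_2 + \nn_2\bar{\hh})\bar{\theta} = 0$ and hence $\hh\nn_2 + \nn_2\bar{\hh} = 0$; matching the $\sB$- and $\sP$-components yields the two mixed equations coupling $\nn_3, \nn_4$ to $\lambda,\mu,\eta,\varepsilon$. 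The $(\bZ, \Q, \Q)$ identity is easier: since $\bZ$ is central in the universal algebra, the left-hand side vanishes entirely and matching the summands of the right-hand side $2[\sQ(\theta\zz), \sQ(\theta)]$ immediately produces $\zz\nn_j + \nn_j\bar{\zz} = 0$ for $j \in \{2,3,4\}$ and the scalar condition $n_i \Re(\zz) = 0$, which I read as the stated condition with $\zz$ in place of $\hh$.

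For the $(\B, \Q, \Q)$ identity the right-hand side is $2[\sQ(\beta\theta\bb), \sQ(\theta)]$, and the left-hand side is obtained by pulling $\sB(\beta)$ through the summands of $[\sQ(\theta), \sQ(\theta)]$ using $[\sB(\beta), \sH] = -\lambda\sB(\beta) - \mu\sP(\beta)$, $[\sB(\beta), \sZ] = 0$, $[\sB(\beta), \sJ(\omega)] = -\tfrac{1}{2}\sB([\omega, \beta])$, $[\sB(\beta), \sB(\beta')] = 0$, and $[\sB(\beta), \sP(\pi)] = \Re(\bar{\beta}\pi)\sZ$. The $\sZ$-component then couples the $n_1$ term on the left with the $\nn_4$ term on the right, giving the condition involving $\nn_4 + 2n_1\bar{\bb}$; the $\sB$-component picks up the extra $\tfrac{1}{2}[\beta, \theta\nn_2\bar{\theta}]$ from $[\sB(\beta), \sJ(\theta\nn_2\bar{\theta})]$, accounting for the inhomogeneous term in the stated equation. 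The $(\P, \Q, \Q)$ identity is handled identically with $\sB \leftrightarrow \sP$, and the sign flip $[\sP(\pi), \sB(\beta)] = -\Re(\bar{\beta}\pi)\sZ$ is what turns $+2n_1\bar{\bb}$ into $-2n_1\bar{\pp}$.

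The work is computational rather than conceptual; the principal technical nuisance is the bookkeeping of quaternionic rearrangements arising from the polarisation — in particular keeping track of which cross-terms survive under $\Re$ and how the $\sB$- and $\sP$-projections decouple when $\beta$ and $\pi$ are varied independently. No super-Jacobi identity of type $(\s_{\bar{1}}, \s_{\bar{1}}, \s_{\bar{1}})$ is needed at this stage, and Lemma \ref{lem:N1_001} plays no role here; its relations constrain $\hh, \zz, \bb, \pp$ from the $(\s_{\bar{0}}, \s_{\bar{0}}, \s_{\bar{1}})$ identities and are independent of the scalar and vector coefficients $n_0, n_1, \nn_2, \nn_3, \nn_4$ constrained above.
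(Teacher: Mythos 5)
Your proposal is correct and takes essentially the same route as the paper: substitute the bracket ansätze into each $(\s_{\bar{0}}, \s_{\bar{1}}, \s_{\bar{1}})$ identity, expand both sides, and compare the $\sH$, $\sZ$, $\sJ$, $\sB$, $\sP$ components using injectivity and linearity of the maps, with the $[\bZ, \Q, \Q]$ case following from the vanishing left-hand side and $[\P, \Q, \Q]$ obtained from $[\B, \Q, \Q]$ by the $\sB \leftrightarrow \sP$ substitution; your reading of the $[\bZ, \Q, \Q]$ conditions as constraints on $\zz$ (rather than the $\hh$ printed in the statement) is consistent with the derivation. The only slip is one of orientation: in the $[\B, \Q, \Q]$ identity the $\nn_4$ contribution to the $\sZ$-component arises on the left-hand side via $[\sB(\beta), -\sP(\theta\nn_4\bar{\theta})]$ while the $n_1$ contribution arises on the right-hand side $2[\sQ(\beta\theta\bb), \sQ(\theta)]$, not the other way round, which does not affect the resulting condition.
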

\begin{proof}
The $[\bH, \Q, \Q]$ super-Jacobi identity is written
\begin{equation}
	[\sH, [\sQ(\theta), \sQ(\theta)]] = 2 [ [\sH, \sQ(\theta)], \sQ(\theta)].
\end{equation}
Using \eqref{eq:N1_general_brackets} and \eqref{eq:N1_QQ}, we find
\begin{equation}
	\begin{split}
		- \sB( \theta (\lambda \nn_3 + \eta \nn_4) \bar{\theta}) - \sP(\theta(\mu \nn_3 + \varepsilon \nn_4) \bar{\theta})
		= &\, 2 n_0 \Re(\overbar{\theta\hh}\theta) \sH + 2 n_1 \Re(\overbar{\theta\hh} \theta) \sZ\\ & - \sJ(\theta \nn_2 \overbar{\theta\hh} 
		+ \theta\hh \nn_2 \bar{\theta}) - \sB(\theta \nn_3 \overbar{\theta\hh} + \theta\hh \nn_3 \bar{\theta}) - \sP(\theta\nn_4 \overbar{\theta\hh}
		+ \theta\hh \nn_4 \bar{\theta}).
	\end{split}
\end{equation}
Comparing $\sH$, $\sZ$, $\sJ$, $\sB$, and $\sP$ coefficients, and using the injectivity and 
linearity of the maps $\sJ, \sB$, and $\sP$, we find
\begin{equation}
	\begin{split}
		0 &= n_i \Re(\hh) \quad \text{where} \quad i \in \{0, 1\} \\
		0 &= \hh \nn_2 + \nn_2 \bar{\hh} \\
		\lambda \nn_3 + \eta \nn_4 &= \hh \nn_3 + \nn_3 \bar{\hh} \\
		\mu \nn_3 + \varepsilon \nn_4 &= \hh \nn_4 + \nn_4 \bar{\hh}.	
	\end{split}
\end{equation}
The calculations for the $[\bZ, \Q, \Q]$ identity follows in an analogous manner.  The key difference
is this case is that the L.H.S.
vanishes in all instances since $\bZ$ commutes with all basis elements.  The $[\B, \Q, \Q]$ identity
is 
\begin{equation}
	[\sB(\beta), [\sQ(\theta), \sQ(\theta)]] = 2 [[\sB(\beta), \sQ(\theta)], \sQ(\theta)].
\end{equation}
Substituting in the relevant brackets, the L.H.S. becomes
\begin{equation}
	\text{L.H.S.} = -\lambda n_0 |\theta|^2 \sB(\beta) - \tfrac12 \sB([\beta, \theta \nn_2\bar{\theta}]) - Re(\bar{\beta} \theta\nn_4 \bar{\theta}) \sZ,
\end{equation}
and the R.H.S. becomes
\begin{equation}
	\begin{split}
		\text{R.H.S.} = & \, 2 n_0 \Re(\bar{\theta}\beta \theta\bb) \sH + 2 n_1 \Re(\bar{\theta}\beta \theta\bb) \sZ \\
		& - \sJ(\theta\nn_2 \overbar{\beta \theta\bb} + \beta \theta \bb \nn_2 \bar{\theta}) - \sB(\theta\nn_3 \overbar{\beta \theta\bb} 
		+ \beta \theta \bb \nn_3 \bar{\theta}) - \sP(\theta\nn_4 \overbar{\beta \theta\bb} + \beta \theta \bb \nn_4 \bar{\theta}).
	\end{split}
\end{equation}
Again, comparing coefficients and using the injectivity and linearity of our maps, we
get
\begin{equation}
	\begin{split}
		0 &= n_0 \Re(\bar{\theta}\beta \theta \bb) \\
		0 &= \Re(\bar{\beta} \theta (\nn_4 + 2 n_1 \bar{\bb}) \bar{\theta}) \\
		0 &= \theta \nn_2 \overbar{\beta \theta \bb} + \beta \theta \bb \nn_2 \bar{\theta} \\
		\lambda n_0 |\theta|^2 \beta + \tfrac12 [\beta, \theta\nn_2\bar{\theta}] &= \theta \nn_3 \overbar{\beta \theta \bb}+ 
		\beta \theta \bb \nn_3 \bar{\theta} \\
		\mu n_0 |\theta|^2 \beta &= \theta \nn_4 \overbar{\beta \theta \bb} + \beta \theta \bb \nn_4 \bar{\theta}.
	\end{split}
\end{equation}
The $[\P, \Q, \Q]$ results follow in identical fashion by replacing $\bb$ with $\pp$ and $\beta$
with $\pi$.
\end{proof}
\paragraph{$(\s_{\bar{1}}, \s_{\bar{1}}, \s_{\bar{1}})$} \label{par:N1_S_P_111} ~\\ \\
The last super-Jacobi identity to consider is the $(\s_{\bar{1}}, \s_{\bar{1}}, \s_{\bar{1}})$ case,
$[\Q, \Q, \Q]$.
\begin{lemma}\label{lem:N1_111} 
  The $[\Q,\Q,\Q]$ identity produces the condition
  \begin{equation}
    n_0 \hh + n_1 \zz = \tfrac12 \nn_2  + \nn_3 \bb + \nn_4 \pp.
  \end{equation}
\end{lemma}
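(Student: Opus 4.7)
The strategy is a direct application of the $(\s_{\bar 1},\s_{\bar 1},\s_{\bar 1})$ super-Jacobi identity, which for three copies of the same odd element $\sQ(\theta)$ reduces (by graded symmetry) to the single requirement $[[\sQ(\theta),\sQ(\theta)],\sQ(\theta)]=0$. The plan is to substitute the explicit form of $[\sQ(\theta),\sQ(\theta)]$ from \eqref{eq:N1_QQ} into this, expand using the $[\s_{\bar 0},\s_{\bar 1}]$ brackets in \eqref{eq:N1_general_brackets}, and then simplify using the key quaternionic fact $\bar\theta\theta=|\theta|^2\in\mathbb{R}$.

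Concretely, bracketing each of the five summands of \eqref{eq:N1_QQ} with $\sQ(\theta)$ on the right produces
\begin{equation}
  n_0|\theta|^2\sQ(\theta\hh)+n_1|\theta|^2\sQ(\theta\zz)-\tfrac12\sQ((\theta\nn_2\bar\theta)\theta)-\sQ((\theta\nn_3\bar\theta)\theta\bb)-\sQ((\theta\nn_4\bar\theta)\theta\pp)=0,
\end{equation}
where the factor $\tfrac12$ in front of the $\nn_2$ term comes from the prefactor $\tfrac12$ in $[\sJ(\omega),\sQ(\theta)]=\tfrac12\sQ(\omega\theta)$. Because $\bar\theta\theta=|\theta|^2$ is real and hence central in $\mathbb{H}$, associativity of quaternion multiplication gives $(\theta\nn_i\bar\theta)\theta=|\theta|^2\theta\nn_i$ for $i=2,3,4$. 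Pulling $|\theta|^2\theta$ out of every term and invoking the injectivity of $\sQ$ yields
\begin{equation}
  |\theta|^2\theta\bigl(n_0\hh+n_1\zz-\tfrac12\nn_2-\nn_3\bb-\nn_4\pp\bigr)=0.
\end{equation}
Since this must hold for all $\theta\in\mathbb{H}$, picking any $\theta$ with $|\theta|^2\neq 0$ (equivalently $\theta\neq 0$) forces the bracketed quaternion to vanish, which is exactly the claimed identity.

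The only subtlety to watch is the sign bookkeeping in the graded Jacobi identity and the correct ordering of quaternion factors (since $\mathbb{H}$ is noncommutative, one must keep $\nn_i$ sandwiched as written rather than attempting to move it past $\theta$ or $\bar\theta$). Apart from that, the argument is a short direct computation: no case analysis in $\lambda,\mu,\eta,\varepsilon$ is needed, because the $\bH$ and $\bZ$ brackets do not enter the $[\Q,\Q,\Q]$ Jacobi identity at all.
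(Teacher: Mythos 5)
Your proposal is correct and follows essentially the same route as the paper: substitute \eqref{eq:N1_QQ} into $[[\sQ(\theta),\sQ(\theta)],\sQ(\theta)]=0$, expand with the brackets of \eqref{eq:N1_general_brackets}, use $\bar{\theta}\theta=|\theta|^2\in\mathbb{R}$, and conclude by injectivity of $\sQ$ (your extra step of factoring out $|\theta|^2\theta$ and cancelling it for $\theta\neq 0$ just makes explicit what the paper leaves implicit).
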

\begin{proof}
The $[\Q, \Q, \Q]$ identity is 
\begin{equation}
	0 = [[\sQ(\theta), \sQ(\theta)], \sQ(\theta)].
\end{equation}
Using \eqref{eq:N1_QQ}, and the brackets in \eqref{eq:N1_general_brackets}, 
we find
\begin{equation}
	\begin{split}
		0 &= [n_0 |\theta|^2 \sH + n_1 |\theta|^2 \sZ - \sJ(\theta\nn_2\bar{\theta}) - \sB(\theta\nn_3\bar{\theta}) - \sP(\theta\nn_4
		\bar{\theta}), \sQ(\theta) ] \\
		&= n_0 |\theta|^2 \sQ(\theta\hh) + n_1 |\theta|^2 \sQ(\theta\zz) - \tfrac12 \sQ(\theta\nn_2\bar{\theta}\theta) - \sQ(\theta\nn_3\bar{\theta}\theta\bb) - \sQ(\theta\nn_4
		\bar{\theta}\theta\pp)
	\end{split}
\end{equation}
Since $\sQ$ is injective, this gives us
\begin{equation}
	n_0 \hh + n_1 \zz = \tfrac12 \nn_2  + \nn_3 \bb + \nn_4 \pp.
\end{equation}
\end{proof}
\subsubsection{Basis Transformations} \label{subsubsec:N1_S_autos}
As well as modifying the super-Jacobi identities presented in \cite{Figueroa-OFarrill:2019ucc}, the new $\so(3)$ scalar also impacts the subgroup $\G \subset \GL(\s_{\bar{0}}) \times \GL(\s_{\bar{1}})$ of basis transformation for kinematical Lie superalgebras.  All the automorphisms in $\G$ generated by $\so(3)$ 
remain the same for $\bb$, $\pp$, and $\hh$, but we may now add how $\zz$ transforms.  These automorphisms
act by rotating the three imaginary quaternionic bases $\ii$, $\jj$, and $\kk$ by an element
of $\SO(3)$.  In particular, we have a homomorphism $\Ad: \Sp(1) \rightarrow \Aut(\mathbb{H})$
defined such that for $\uu \in \Sp(1)$ and $\ss \in \mathbb{H}$, $\Ad_{\uu}(\ss) = \uu \ss \bar{\uu}$.  The map $\Ad_{\uu}$ then acts trivially on the real component of $\ss$ and via $\SO(3)$ rotations on
$\Im(\mathbb{H})$.  Therefore, $\tilde{\sB} = \sB\circ \Ad_{\uu}$, $\tilde{\sP} = \sP\circ \Ad_{\uu}$, $\tilde{\sH} = \sH$, $\tilde{\sQ} = \sQ\circ \Ad_{\uu}$. Since $\bZ$ is an $\so(3)$ scalar, $\tilde{\sZ} = \sZ$.  Substituting this with $\tilde{\sQ} = \sQ\circ \Ad_{\uu}$ into the
$[\bZ, \Q]$ bracket, we find that $\tilde{\zz} = \bar{\uu}\zz \uu$.  Additionally, substituting these transformations into the $[\Q, \Q]$ bracket, we see that $c_1$ remains inert.  The other type of transformations to consider
are the $\so(3)$-equivariant maps $\s \rightarrow \s$.  Since we now have two $\so(3)$ 
scalars, we can have
\begin{equation}
	\begin{split}
		\tilde{\sH} &= a \sH + b \sZ \\
		\tilde{\sZ} &= c \sH + d \sZ
	\end{split} \quad \text{where} \quad 
	\begin{pmatrix}
		a & b \\ c & d
	\end{pmatrix} \in \GL(2, \mathbb{R}).
\end{equation}
The $\so(3)$ vector and spinor maps remain unchanged from those given in \cite{Figueroa-OFarrill:2019ucc}.
In particular, $\tilde{\sQ}(s) = \sQ(s \qq)$ for $\qq \in \mathbb{H}^\times$.
Substituting $\tilde{\sH}$, $\tilde{\sZ}$ and $\tilde{\sQ}$ into the brackets
\begin{equation}
	\begin{split}
		[\tilde{\sH}, \tilde{\sQ}(\theta)] &= \tilde{\sQ}(\theta\tilde{\hh}) \\
		[\tilde{\sZ}, \tilde{\sQ}(\theta)] &= \tilde{\sQ}(\theta\tilde{\zz}) \\
	\end{split} \qquad 
		[\tilde{\sQ}(\theta), \tilde{\sQ}(\theta)] = \tilde{n_0} |\theta|^2 \tilde{\sH} + \tilde{n_1} |\theta|^2 \tilde{\sZ} - \tilde{\sJ}(\theta\tilde{\nn_2}\bar{\theta}) - \tilde{\sB}(\theta\tilde{\nn_3}\bar{\theta}) - \tilde{\sP}(\theta\tilde{\nn_4}\bar{\theta}),
\end{equation}
we find 
\begin{equation}
	\begin{split}
		\tilde{\hh} &= \qq (a \hh + b \zz) \qq^{-1} \\
		\tilde{\zz} &= \qq (c \hh + d \zz) \qq^{-1}
	\end{split} \quad 
	\begin{split}
		\tilde{n_0} &= \frac{|\qq|^2}{ad -bc} (d n_0 - c n_1) \\
		\tilde{n_1} &= \frac{|\qq|^2}{ad- bc} (a n_1 - b n_0).
	\end{split}
\end{equation}
These amendments mean that the transformation in $\G$ produce the following basis changes
\begin{equation}
\begin{split}
\sJ &\mapsto \sJ\circ \Ad_{\uu} \\
\sB &\mapsto e \sB\circ \Ad_{\uu} + f \sP\circ \Ad_{\uu} \\
\sP &\mapsto h \sB\circ \Ad_{\uu} + i \sP\circ \Ad_{\uu} \\
\sH &\mapsto a \sH + b \sZ \\
\sZ &\mapsto c \sH + d \sZ \\
\sQ &\mapsto \sQ\circ \Ad_{\uu} \circ R_\qq.
\end{split}
\end{equation}
These transformations may be summarised by $(A = \big( \begin{smallmatrix} a & b \\ c & d \end{smallmatrix} \big),
C = \big(\begin{smallmatrix} e & f \\ h & i \end{smallmatrix}\big), \qq, \uu) \in \GL(\mathbb{R}^2) \times \GL(\mathbb{R}^2)
\times \mathbb{H}^\times \times \mathbb{H}^\times$.
\subsection{Classification} \label{subsec:N1_class}
The calculations for classifying the super-extensions of $\hat{\n}_{\pm}$ and $\hat{\g}$ all follow identically.  It
will, therefore, only be stated once below.  However, the central extension of the static 
kinematical Lie algbera is a little different, so will be treated first.
\subsubsection{$\hat{\a}$}
Using the preliminary results from Lemma \ref{lem:N1_001} in section \ref{par:N1_S_P_001}, we find
$\bb=\pp=\zz=0$.  Substituting these quaternions into the $[\B, \Q, \Q]$ and $[\P, \Q, \Q]$
identities with the relevant brackets, we get $\nn_2 = 0$, $\nn_3=0$ and  $\nn_4 = 0$ .  
Then, wanting $[\Q, \Q] \neq 0$, the $[\bH, \Q, \Q]$ conditions tells us that $\Re(\hh)=0$.  Finally,
Lemma \ref{lem:N1_111} reduces to $n_0 \hh = 0$.  Therefore, we have two possible cases: one in 
which $n_0 = 0$ and $\hh \in \Im(\mathbb{H})$ and another in which $\hh = 0$ and
$n_0$ is unconstrained.  In the former case, the subgroup $\G \subset \GL(\s_{\bar{0}}) \times \GL(\s_{\bar{1}})$ can be used to 
set $\hh = \kk$ and $n_1=1$, such that the only non-vanishing brackets 
involving $\sQ$ are
\begin{equation}
	[\sH, \sQ(\theta)] = \sQ(\theta\kk) \quad \text{and} \quad [\sQ(\theta), \sQ(\theta)] = |\theta|^2 \sZ.
\end{equation}
Notice, however, that this case also allows for $\hh = 0$, leaving only
\begin{equation}
	[\sQ(\theta), \sQ(\theta)] = |\theta|^2 \sZ.
\end{equation}
In the latter case, we can use $\G$ to scale $n_0$ and 
$n_1$, so the non-vanishing brackets are
\begin{equation}
	[\sQ(\theta), \sQ(\theta)] = |\theta|^2 \sH + |\theta|^2 \sZ.
\end{equation}
\subsubsection{$\hat{\n}_{\pm}$ and $\hat{\g}$}
Using the preliminary results of Lemmas \ref{lem:N1_001} and \ref{lem:N1_111}, we
 instantly find $\bb = \pp = \zz = 0$, and, subsequently, $n_0 \hh = \tfrac12 \nn_2$.
The super-Jacobi identity $[\B, \Q, \Q]$ then tells us that $n_0 = \nn_2 = \nn_4 = 0$ and the identity
$[\P, \Q, \Q]$ gives us $\nn_3 = 0$.  Thus, the $(\s_{\bar{1}}, \s_{\bar{1}}, \s_{\bar{1}})$ 
condition is trivially satisfied.  The only remaining condition is from $[\bH, \Q, \Q]$, which
tells us $n_1\Re(\hh)=0$.  Since we want $[\Q, \Q] \neq 0$, we must have $n_1 \neq 0$,
therefore, $\hh \in \Im(\mathbb{H})$.  Using $\G$ to set
$\hh = \kk$ and $n_1 = 1$, we have non-vanishing brackets
\begin{equation}
	[\sH, \sQ(\theta)] = \sQ(\theta\kk) \quad \text{and} \quad [\sQ(\theta), \sQ(s)] = |\theta|^2 \sZ.
\end{equation}
Similar to the $\hat{\a}$ case, the restriction $\hh \in \Im(\mathbb{H})$ does not remove the choice
$\hh = 0$.  Therefore, we may also have
\begin{equation}
	[\sQ(\theta), \sQ(s)] = |\theta|^2 \sZ
\end{equation}
as the only non-vanishing bracket.
\subsection{Summary} \label{subsec:N1_summary}
Table \ref{tab:N1_classification} lists all the $\N=1$ generalised Bargmann superalgebras we have classified.  Each Lie superalgebra is an $\N=1$ super-extension of one of the generalised Bargmann algebras given in Table \ref{tab:algebras}, taken from \cite{Figueroa-OFarrill:2017ycu}.  It is interesting to compare this list of $\N=1$ super-extensions of centrally-extended kinematical Lie algebras to the list of centrally-extended $\N=1$ kinematical Lie superalgebras given in Table \ref{tab:central-ext}.  This table is a reduced and adapted version of one given in \cite{Figueroa-OFarrill:2019ucc}, where we have only kept those extensions built upon the static, Newton-Hooke, and Galilean algebras.
\\ \\
Notice that only one of the generalised Bargmann superalgebras is present in the classification of centrally-extended kinematical Lie superalgebras, \hyperlink{S2}{S2}.  Although it does not match exactly, we can use the basis transformations in $\G \subset \GL(\s_{\bar{0}}) \times \GL(\s_{\bar{1}})$ to bring it into the same form as the second superalgebra in Table \ref{tab:central-ext}.  This result shows us that, in general, super-extending and centrally-extending kinematical Lie algebras are not commutative processes. 
\begin{table}[h!]
  \centering
  \caption{$\N=1$ Generalised Bargmann superalgebras (with $[\Q,\Q]\neq 0$)}
  \label{tab:N1_classification}
  \setlength{\extrarowheight}{2pt}
  \rowcolors{2}{blue!10}{white}
    \begin{tabular}{l|l*{5}{|>{$}c<{$}}}\toprule
      \multicolumn{1}{c|}{S} & \multicolumn{1}{c|}{$\k$} & \multicolumn{1}{c|}{$\hh$}& \multicolumn{1}{c|}{$\zz$}& \multicolumn{1}{c|}{$\bb$} & \multicolumn{1}{c|}{$\pp$} & \multicolumn{1}{c}{$[\sQ(\theta),\sQ(\theta)]$}\\
      \toprule
      1 & \hyperlink{a}{$\hat{\a}$} & & & & & |\theta|^2 \sZ  \\
      2 & \hyperlink{a}{$\hat{\a}$} & \kk & & & & |\theta|^2 \sZ  \\ 
      \hypertarget{S3}{3} & \hyperlink{a}{$\hat{\a}$} & & &&  & |\theta|^2 \sH + |\theta|^2 \sZ \\
      4 & \hyperlink{n-}{$\hat{\n}_-$} & & & & & |\theta|^2 \sZ  \\
      5 & \hyperlink{n-}{$\hat{\n}_-$} & \kk & & & & |\theta|^2 \sZ  \\
      6 & \hyperlink{n+}{$\hat{\n}_+$} &  & & & & |\theta|^2 \sZ  \\      
      7 & \hyperlink{n+}{$\hat{\n}_+$} & \kk & & & & |\theta|^2 \sZ  \\
      8 & \hyperlink{g}{$\hat{\g}$} & & & & & |\theta|^2 \sZ  \\
      9 & \hyperlink{g}{$\hat{\g}$} & \kk & & & & |\theta|^2 \sZ  \\
      \bottomrule
    \end{tabular}
    \caption*{The first column gives each generalised Bargmann superalgebra $\s$ a unique identifier,
    and the second column tells us the underlying generalised Bargmann algebra $\k$.
    The next four columns tells us how the $\s_{\bar{0}}$ generators $\bH, \bZ, \B$, and $\P$ act on $\Q$.
    Recall, the action of $\J$ is fixed, so we do not need to state this explicitly.  The final column then specifies the $[\Q, \Q]$ bracket. 
    }
\end{table}
\begin{table}[h!]
  \centering
  \caption{Central extensions of $\N=1$ kinematical Lie superalgebras}
  \label{tab:central-ext}
  \setlength{\extrarowheight}{2pt}
  \rowcolors{2}{blue!10}{white}
  \begin{tabular}{l*{6}{|>{$}c<{$}}}\toprule
    \multicolumn{1}{c|}{$\k$} & \multicolumn{1}{c|}{$[\sB(\beta),\sP(\pi)]$} & \multicolumn{1}{c|}{$\hh$} & \multicolumn{1}{c|}{$\zz$} & \multicolumn{1}{c|}{$\bb$}  & \multicolumn{1}{c|}{$\pp$} & \multicolumn{1}{c}{$[\sQ(\theta),\sQ(\theta)]$} \\
    \toprule
    $\a$ & & \tfrac12 \kk & & & & |\theta|^2 \sZ - \sP(\theta\kk\bar{\theta}) \\
    $\a$ & \Re(\bar{\beta}\pi) \sZ & & & & & |\theta|^2 \sH \\
    $\a$ & & & & & & |\theta|^2 \sZ - \sB(\theta\jj\bar{\theta}) - \sP(\theta\kk\bar{\theta}) \\
    $\a$ & & & & & & |\theta|^2 \sZ -\sP(\theta\kk\bar{\theta}) \\
    $\g$ & & \kk & & & & |\theta|^2 \sZ -\sP(\theta\kk\bar{\theta}) \\
    $\g$ & & & & & & |\theta|^2 \sZ -\sP(\theta\kk\bar{\theta}) \\
    $\n_+$ & & \tfrac12 \jj & & & & |\theta|^2 \sZ - \sB(\theta\ii\bar{\theta}) - \sP(\theta\kk\bar{\theta}) \\
    \bottomrule
  \end{tabular}
  \caption*{The first column identifies the kinematical Lie algebra $\k$ underlying the extensions.
  The second column indicates whether the central extension has been introduced in the $[\B, \P]$
  bracket.  The next four columns show the $[s_{\bar{0}}, \s_{\bar{1}}]$ brackets for the KLSA.  As 
  we can see, the only non-vanishing case is $[\sH, \sQ(\theta)] = \sQ(\theta\hh)$, where $\theta \in \mathbb{H}$ and
  $\hh \in \Im(\mathbb{H})$.  The final column then tells us whether the central extension enters
  the $[\Q, \Q]$ bracket.}
\end{table}
\subsubsection{Unpacking the Notation}
Although the quaternionic formulation of these superalgebras is convenient for our purposes, it is perhaps unfamiliar to the reader. Therefore, in this section, we will convert the $\N=1$ super-extension for the Bargmann algebra into a more conventional format. The brackets for this algebra, excluding the $\s_{\bar{0}}$ brackets which have already been discussed in section \ref{subsec:FI_KLSAs}, take the form 
\begin{equation}
	[\sH, \sQ(\theta)] = \sQ(\theta\kk) \quad \text{and} \quad [\sQ(\theta), \sQ(\theta)] = |\theta|^2 \sZ.
\end{equation}
Letting $\sQ(\theta) = \sum_{a=1}^{4} \theta_a \bQ_a$, where $\theta = \theta_4 + \theta_1 \ii + \theta_2 \jj + \theta_3 \kk$, we can rewrite these brackets as
\begin{equation}
	[\bH, \bQ_a] = \sum_{a = 1}^{4} \bQ_b \tensor{\Sigma}{^b_a} \quad \text{and} \quad  [\bQ_a, \bQ_b] = \delta_{ab} \bZ,
\end{equation}
where, for $\sigma_2$ being the second Pauli matrix, 
\begin{equation}
	\Sigma = \begin{pmatrix}
		 0 & i \sigma_2 \\ -i \sigma_2 & 0
	\end{pmatrix}.
\end{equation}
\section{$\N=2$ Extensions of the Generalised Bargmann Algebras} \label{sec:N2_classification}
Having established the introduction of the central extension $\bZ$ into our classification problem in section \ref{sec:N1_classification}, we now look to introduce an additional $\so(3)$ spinor module.  Section \ref{subsec:N2_setup} will describe the setup up for the $\N=2$ generalised Bargmann superalgebras, before extending the preliminary results from \cite{Figueroa-OFarrill:2019ucc} to this case.  It is also in this section that the group of basis transformations $\G$ will be adapted for extended supersymmetry.  The number of additional parameters in this case means that there are several branches of super-extension for each generalised Bargmann algebra.  In section \ref{subsec:N2_branches}, we use the preliminary results from the $(\s_{\bar{0}}, \s_{\bar{0}}, \s_{\bar{1}})$ super-Jacobi identities to identify four possible branches.  Each branch is then explored in detail in section \ref{subsec:N2_class} where we identify the non-empty sub-branches, which are summarised in section \ref{subsec:N2_summary}.
\subsection{Setup for the $\N=2$ Calculation} \label{subsec:N2_setup}
For completeness, recall that, in addition to the standard kinematical Lie brackets, the brackets for the universal generalised Bargmann superalgebra are
\begin{equation}
	\begin{split}
		[\sB(\beta), \sP(\pi)] &= \Re(\bar{\beta}\pi) \sZ \\ 
		[\sH, \sB(\beta)] &= \lambda \sB(\beta) + \mu \sP(\beta) \\
		[\sH, \sP(\pi)] &= \eta \sB(\pi) + \varepsilon \sP(\pi),
	\end{split} 
\end{equation}
where $\beta, \pi \in \Im(\mathbb{H})$ and $\lambda, \mu, \eta, \varepsilon \in \mathbb{R}$.  Because we now have two spinor modules, the brackets involving
$\s_{\bar{1}}$ need to be adapted from those given in \cite{Figueroa-OFarrill:2019ucc}.  We will continue to use the map $\sQ$ for the odd dimensions; however, it now acts on $\vt$, a vector in $\mathbb{H}^2$.  We will choose $\mathbb{H}^2$ to
be a left quaternionic vector space such that $\mathbb{H}$ acts linearly from the 
left and all $2 \times 2$ $\mathbb{H}$ matrices act on the right.  Therefore,
writing $\vt$ out in its components, we have
\begin{equation}
\vt= \begin{pmatrix}
\theta_1 & \theta_2
\end{pmatrix},
\end{equation}
where $\theta_1, \theta_2 \in \mathbb{H}$.
The $[\s_{\bar{0}}, \s_{\bar{1}}]$ brackets are again the $\so(3)$-equivariant 
endomorphisms of $\s_{\bar{1}}$.  Since we choose $\so(3)$ to act via
left quaternionic multiplication, the commuting endomorphisms are all
those that may act on the right.  In the present case, these are elements 
of $\Mat_{2}(\mathbb{H})$.  Thus the brackets containing the $\so(3)$ 
scalars are 
\begin{equation} \label{eq:N2_S_general_scalar_brackets}
\begin{split}
[\sH, \sQ(\vt)] &= \sQ(\vt\mh) \\
[\sZ, \sQ(\vt)] &= \sQ(\vt\mz),
\end{split}
\end{equation}
where $\mh, \mz \in \Mat_2(\mathbb{H})$.
In \cite{Figueroa-OFarrill:2019ucc}, $[\J, \Q] $, $[\B, \Q]$ and $[\P, \Q]$ were described by Clifford multiplication $V \otimes S \rightarrow S$, which is given by left quaternionic multiplication by $\Im(\mathbb{H})$.  The space of such maps was isomorphic to the space of $\r$-equivariant maps of $S$, which is a copy of the quaternions.  Now with $\s_{\bar{1}} = S \oplus S$, we have four possible endomorphisms of this type.  Labelling the two spinor modules $S_1$ and $S_2$, we may use the Clifford action to map $S_1$ to $S_1$, $S_1$ to $S_2$, $S_2$ to $S_1$, or $S_2$ to $S_2$.  All of these maps may be summarised as follows
\begin{equation}  \label{eq:N2_S_general_vector_brackets}
\begin{split}
[\sJ(\omega), \sQ(\vt)] &= \tfrac12 \sQ(\omega \vt) \\
[\sB(\beta), \sQ(\vt)] &= \sQ(\beta \vt \mb) \\
[\sP(\pi), \sQ(\vt)] &= \sQ(\pi \vt \mp).
\end{split}
\end{equation}
Here, $\omega, \beta,\pi \in \Im(\mathbb{H})$ and $\mb, \mp \in \Mat_2(\mathbb{H})$.  
Finally, consider the $[\Q, \Q]$ bracket.  This will consist of the $\so(3)$-equivariant
$\mathbb{R}$-linear maps $\bigodot^2 \s_{\bar{1}} \rightarrow \s_{\bar{0}}$.  To write down these
maps, we make use of the $\so(3)$-invariant inner product on $\s_{\bar{1}}$
\begin{equation}
\langle \vt, \vt' \rangle = \Re(\vt \vt^\dagger) \quad \text{where} \quad \vt, \vt' \in \mathbb{H}^2 \quad 
\vt^\dagger = \bar{\vt}^T.
\end{equation}
This bracket's $\so(3)$-invariance is clear on considering left multiplication by
$\uu \in \sp(1)$ and noting $\sp(1) \cong \so(3)$.  We can now use this bracket 
to identify $\bigodot^2 \s_{\bar{1}}$ with the symmetric $\mathbb{R}$-linear endomorphisms of
$\s_{\bar{1}} \cong S^2 \cong \mathbb{H}^2$, i.e.\ the maps $\mu: \mathbb{H}^2 \rightarrow \mathbb{H}^2$ such
that $\langle \mu(\vt), \vt' \rangle = \langle \vt, \mu(\vt') \rangle$.  A general 
$\mathbb{R}$-linear map of $\mathbb{H}^2$ may be written
\begin{equation}
\mu(\vt) = \qq \vt M \quad \text{where} \quad \qq \in \mathbb{H} \; \;  \text{and} \; \;  M \in \Mat_2(\mathbb{H}).
\end{equation}
Now, inserting this definition into the condition for a symmetric endomorphism, we 
obtain the following two cases: 
\begin{enumerate}
	\item $\qq \in \mathbb{R}$ and $M = M^\dagger$
	\item $\qq \in \Im(\mathbb{H})$ and $M = -M^\dagger$.
\end{enumerate}
The first instance gives us our $\so(3)$ scalar modules in $\bigodot^2 S^2$; therefore, 
these will map to either $\bH$ or $\bZ$ in $\s_{\bar{0}}$ to ensure we have
$\so(3)$-equivariance.  The condition on $M$ states that it must be of the form
\begin{equation}
M = \begin{pmatrix}
a & b + \mm \\ b - \mm & c
\end{pmatrix} = 
a \begin{pmatrix}
1 & 0 \\ 0 & 0
\end{pmatrix} + 
b \begin{pmatrix}
 0 & 1 \\ 1 & 0
\end{pmatrix} + 
c \begin{pmatrix}
 0 & 0 \\ 0 & 1
\end{pmatrix} + 
\mm \begin{pmatrix}
0 & 1 \\ -1 &  0
\end{pmatrix},
\end{equation}
where $a, b, c \in \mathbb{R}$ and $\mm \in \Im(\mathbb{H})$.  We can make sense of
this result using the decomposition of the odd part of the superalgebra, 
$\s_{\bar{1}} \cong S^2 \cong S \otimes \mathbb{R}^2$, and our knowledge of
the maps in  $\Hom(\bigodot^2 S, \s_{\bar{0}})$ derived from \cite{Figueroa-OFarrill:2019ucc}.  
Symmetrising the decomposed $\s_{\bar{1}}$, we get $\bigodot^2 S^2 \cong \bigodot^2 S \otimes \bigodot^2 \mathbb{R}^2 \oplus \bigwedge^2 S \otimes \bigwedge^2 \mathbb{R}^2$.  Notice that the coefficients
for the $\bigodot^2 \mathbb{R}^2$ basis elements result in multiplying $\vt \in S^2$ by
$\mathbb{R}$ on both the left and right.  Thus we find three copies of the scalar map in $\bigodot^2 S 
\rightarrow \mathbb{R} \oplus 3V$, one for each of the $\bigodot^2 \mathbb{R}$ basis.  
Similarly, the coefficient for the $\bigwedge^2 \mathbb{R}^2$
basis element produces the three scalar maps in $\bigwedge^2 S \rightarrow
3 \mathbb{R} \oplus V$.  Thus the above matrix accounts for the $\so(3)$ scalar maps in both
the symmetric and anti-symmetric components of the decomposition.  
\\ \\
Now, the second case gives us our $\so(3)$ vector modules in 
$\bigodot^2 S^2$; therefore, these will map to $\B$, $\P$, or $\J$ to ensure 
$\so(3)$-equivariance.  The condition on $M$ in this case produces
\begin{equation}
M = \begin{pmatrix}
\nn & d + \ll \\ -d + \ll & \rr
\end{pmatrix} = 
\nn \begin{pmatrix}
1 & 0 \\ 0 & 0
\end{pmatrix} + 
\ll \begin{pmatrix}
 0 & 1 \\ 1 & 0
\end{pmatrix} + 
\rr \begin{pmatrix}
 0 & 0 \\ 0 & 1
\end{pmatrix} + 
d \begin{pmatrix}
0 & 1 \\ -1 &  0
\end{pmatrix},
\end{equation}
where $\nn, \ll, \rr \in \Im(\mathbb{H})$ and $d \in \mathbb{R}$.  Again, we can understand
this result through the decomposition of $\s_{\bar{1}}$.  From \cite{Figueroa-OFarrill:2019ucc},
we know that the $\so(3)$ vectors in $\Hom(\bigodot^2 S, \s_{\bar{0}})$ come from 
simultaneous left and right quaternionic multiplication by $\Im(\mathbb{H})$.  These
are precisely the maps we find in the coefficients for the $\bigodot^2 \mathbb{R}^2$ basis, 
as expected through the decomposition of $\bigodot^2 S^2$.  Finally, we recover
the $\so(3)$ vector in $\bigwedge^2 S \rightarrow 3 \mathbb{R} \oplus V$ as 
the coefficient to the $\bigwedge^2 \mathbb{R}^2$ basis.
\\ \\
Putting all this together, we may write
\begin{equation}
[\sQ(\vt), \sQ(\vt)] = \langle \vt, \vt N_0 \rangle \sH + \langle \vt, \vt N_1 \rangle \sZ + \langle \vt, \JJ \vt N_2 \rangle
+ \langle \vt, \BB \vt N_3 \rangle + \langle \vt, \PP \vt N_4 \rangle,
\end{equation}
where $N_0$, $N_1$ are quaternion Hermitian, and $N_2$, $N_3$, $N_4$ are quaternion
skew-Hermitian, as stated above, and 
\begin{equation}
\JJ = \bJ_1 \ii + \bJ_2 \jj + \bJ_3 \kk \quad \BB = \bB_1 \ii + \bB_2 \jj + \bB_3 \kk \quad 
\PP = \bP_1 \ii + \bP_2 \jj + \bP_3 \kk.
\end{equation}
Using the fact that $\Re(\bar{\omega}\JJ) = \sJ(\omega)$ and $N_i = N_i^\dagger$ for 
$i \in \{0, 1\}$, we can write
\begin{equation}
[\sQ(\vt), \sQ(\vt)] = \Re(\vt N_0 \vt^\dagger) \sH + \Re(\vt N_1 \vt^\dagger) \sZ - \sJ(\vt N_2 \vt^\dagger) 
- \sB(\vt N_3 \vt^\dagger) - \sP(\vt N_4 \vt^\dagger).
\end{equation}
This polarises to
\begin{equation}
\begin{split}
[\sQ(\vt), \sQ(\vt')] = \frac{1}{2} \big( & \Re(\vt N_0 \vt'^\dagger + \vt' N_0 \vt^\dagger) \sH 
+ \Re(\vt N_1 \vt'^\dagger + \vt' N_1 \vt^\dagger) \sZ \\ & - \sJ(\vt N_2 \vt'^\dagger + \vt' N_2 \vt^\dagger) 
- \sB(\vt N_3 \vt'^\dagger + \vt' N_3 \vt^\dagger) - \sP(\vt N_4 \vt'^\dagger + \vt' N_4 \vt^\dagger) \big) .
\end{split}
\end{equation}

\subsubsection{Preliminary Results} \label{subsubsec:N2_S_preliminaries}
As in the $\N=1$ case, we can form a number of universal results that will help us when 
investigating the super-extensions of the generalised Bargmann algebras.  The following 
subsections will cover the $(\s_{\bar{0}}, \s_{\bar{0}}, \s_{\bar{1}})$, 
$(\s_{\bar{0}}, \s_{\bar{1}}, \s_{\bar{1}})$, and $(\s_{\bar{1}}, \s_{\bar{1}}, \s_{\bar{1}})$
identities, respectively. \\
\paragraph{$(\s_{\bar{0}}, \s_{\bar{0}}, \s_{\bar{1}})$} \label{par:N2_S_P_011} ~\\ \\
In the $\N=1$ case, $\hh, \zz, \bb, \pp \in \mathbb{H}$,
and in the $\N=2$ case $\mh, \mz, \mb, \mp \in \Mat_2(\mathbb{H})$.  Since
$\mathbb{H}$ and $\Mat_2(\mathbb{H})$ are both associative, non-commutative
algebras, the algebraic manipulations are the same in both cases.  Therefore, the 
$\N=1$ results generalise to the $\N=2$ case; the only difference being that the
variables are $2 \times 2$ $\mathbb{H}$ matrices rather than $\mathbb{H}$ elements. 
\begin{lemma}\label{lem:N2_001}
  The following relations between $\mh, \mz, \mb, \mp \in \Mat_2(\mathbb{H})$ are implied by the
  corresponding $\k$-brackets:
  \begin{equation}
    \begin{split}
    	  [\bH,\bZ] = \lambda \bH + \mu \bZ & \implies [\mz, \mh] = \lambda \mh + \mu \mz \\
      [\bH,\B] = \lambda \B + \mu \P & \implies [\mb,\mh] = \lambda \mb + \mu \mp\\
      [\bH,\P] = \lambda \B + \mu \P & \implies [\mp,\mh] = \lambda \mb + \mu \mp\\
      [\bZ,\B] = \lambda \B + \mu \P & \implies [\mb,\mz] = \lambda \mb + \mu \mp\\
      [\bZ,\P] = \lambda \B + \mu \P & \implies [\mp,\mz] = \lambda \mb + \mu \mp\\
      [\B,\B] = \lambda \B + \mu \P + \nu \J & \implies \mb^2 = \tfrac12 \lambda \mb + \tfrac12 \mu \mp + \tfrac14 \nu\\
      [\P,\P] = \lambda \B + \mu \P + \nu \J & \implies \mp^2 = \tfrac12 \lambda \mb + \tfrac12 \mu \mp + \tfrac14 \nu\\
      [\B,\P] = \lambda \bH + \mu \bZ & \implies \mb \mp + \mp\mb = 0\quad\text{and}\quad [\mb,\mp] = \lambda \mh  + \mu \mz.
    \end{split}
  \end{equation}
\end{lemma}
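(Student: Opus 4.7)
The plan is to observe that the statement of this lemma is structurally identical to Lemma \ref{lem:N1_001}, with the scalars $\hh,\zz,\bb,\pp \in \mathbb{H}$ uniformly replaced by the matrices $\mh,\mz,\mb,\mp \in \Mat_2(\mathbb{H})$, and to argue that the same calculation transfers verbatim. The key structural point is that $\Mat_2(\mathbb{H})$ is an associative (though non-commutative) $\mathbb{R}$-algebra acting on $\mathbb{H}^2$ from the right, and the $\N=2$ brackets \eqref{eq:N2_S_general_scalar_brackets} and \eqref{eq:N2_S_general_vector_brackets} are written precisely so that, on the $\sQ$-slot, the relevant $\s_{\bar{0}}$-generator acts by right multiplication on $\vt$ by the associated element of $\Mat_2(\mathbb{H})$. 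Since the super-Jacobi identity is multilinear and the $\sQ$ map is injective, every computation in the proof of Lemma \ref{lem:N1_001} reduces to an equation in an associative algebra, and thus makes sense without change after the substitution $\mathbb{H}\rightsquigarrow\Mat_2(\mathbb{H})$.

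Concretely, I would walk through the identities in order, indicating just the characteristic computation for each row. For $[\bH,\bZ,\sQ]$, expanding $[\sH,[\sZ,\sQ(\vt)]] = [[\sH,\sZ],\sQ(\vt)] + [\sZ,[\sH,\sQ(\vt)]]$ with \eqref{eq:N2_S_general_scalar_brackets} yields $\sQ(\vt\mz\mh) = \lambda\sQ(\vt\mh) + \mu\sQ(\vt\mz) + \sQ(\vt\mh\mz)$, and injectivity of $\sQ$ gives $[\mz,\mh]=\lambda\mh+\mu\mz$. The rows for $[\bH,\B]$, $[\bH,\P]$, $[\bZ,\B]$, $[\bZ,\P]$ follow identically by switching the generator on the right-hand side (the right multiplication by $\mb$ or $\mp$ simply sits on the outside of every $\sQ$-argument). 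The rows for $[\B,\B]$ and $[\P,\P]$ come from evaluating the identity at $\beta=\pi=\ii$ (say) and extracting the $\Re(\bar\beta\beta)=1$ piece; the factor $\tfrac14$ arises from $\ii^2=-1$ exactly as in the $\N=1$ computation.

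The only row where any genuine choice is made is the last, $[\B,\P,\sQ]$. Here the identity reads
\begin{equation}
\sQ(\beta\pi\,\vt\,\mp\mb) \;=\; \Re(\bar\beta\pi)\bigl(\lambda\sQ(\vt\mh) + \mu\sQ(\vt\mz)\bigr) + \sQ(\pi\beta\,\vt\,\mb\mp).
\end{equation}
Specialising to $\beta=\pi=\ii$ gives $\beta\pi=\pi\beta=-1$ and $\Re(\bar\beta\pi)=1$, whence injectivity of $\sQ$ forces $[\mb,\mp]=\lambda\mh+\mu\mz$. Specialising instead to $\beta=\ii$, $\pi=\jj$ gives $\Re(\bar\beta\pi)=0$, $\beta\pi=\kk$, $\pi\beta=-\kk$, so that the identity collapses to $\sQ(\kk\,\vt\,(\mp\mb+\mb\mp))=0$, and injectivity of $\sQ$ together with $\kk\neq 0$ yields $\mb\mp+\mp\mb=0$. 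Crucially, these manipulations never use commutativity in the algebra to which $\mh,\mz,\mb,\mp$ belong, only associativity and the $\mathbb{R}$-bilinearity of right multiplication on $\vt$.

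There is no real obstacle here; the work is essentially bookkeeping, and the main thing to emphasise in the write-up is precisely why the $\N=1$ argument goes through unchanged, namely that (i) $\Mat_2(\mathbb{H})$ acts on $\mathbb{H}^2$ from the right and commutes with the left $\Im(\mathbb{H})$-action used to define $[\sJ,\sQ]$, $[\sB,\sQ]$ and $[\sP,\sQ]$, and (ii) the map $\sQ:\mathbb{H}^2\to\s_{\bar 1}$ is $\mathbb{R}$-linear and injective, so equations in $\s_{\bar 1}$ pull back to equations in $\Mat_2(\mathbb{H})$ exactly as equations in $\s_{\bar 1}$ pulled back to equations in $\mathbb{H}$ in the $\N=1$ setting.
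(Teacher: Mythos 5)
Your proposal is correct and matches the paper's own treatment: the paper proves this lemma by simply noting that $\mathbb{H}$ and $\Mat_2(\mathbb{H})$ are both associative, non-commutative algebras, so the algebraic manipulations of Lemma \ref{lem:N1_001} carry over verbatim with quaternions replaced by $2\times 2$ quaternionic matrices. Your explicit spot-checks of the $[\bH,\bZ,\Q]$ and $[\B,\P,\Q]$ identities (including the $\beta=\pi=\ii$ and $\beta=\ii,\pi=\jj$ specialisations) are exactly the computations the paper performs in the $\N=1$ case and correctly transfer.
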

\begin{proof}
See the proof of Lemma \ref{lem:N1_001} for the algebraic manipulations that produce the
above results.
\end{proof}
\paragraph{$(\s_{\bar{0}}, \s_{\bar{1}}, \s_{\bar{1}})$} \label{par:N2_S_P_011} ~\\ \\
As in the $\N=1$ case, we use the universal generalised Bargmann algebra to simplify
our analysis here.  Recall the brackets for this algebra are
\begin{equation}
	[\B, \P] = \bZ \qquad [\bH, \B] = \lambda \B + \mu \P \qquad [\bH, \P] = \eta \B + \varepsilon \P,
\end{equation}
where $\lambda, \mu, \eta, \varepsilon \in \mathbb{R}$.  Using these brackets, we obtain the 
following result.
\begin{lemma} \label{lem:N2_011}
\begin{equation}
	\begin{split}
		\text{The $[\bH, \Q, \Q]$ identity produces the conditions} \\
		0 &= \mh N_i + N_i \mh^\dagger \quad \text{where} \quad i \in \{0, 1, 2\} \\
		\lambda N_3 + \eta N_4 &= \mh N_3 + N_3 \mh^\dagger \\
		\mu N_3 + \varepsilon N_4 &= \mh N_4 + N_4 \mh^\dagger. \\
	\text{The $[\bZ, \Q, \Q]$ identity produces the conditions} \\
	0 &= \mz N_i + N_i \mz^\dagger \quad \text{where} \quad i \in \{0, 1, 2, 3, 4\}. \\
	\text{The $[\B, \Q, \Q]$ identity produces the conditions} \\
		0 &= \mb N_0 - N_0 \mb^\dagger \\
		N_4 &= \mb N_1 - N_1 \mb^\dagger \\
		0 &= \beta \vt \mb N_2 \vt^\dagger + \vt N_2 (\beta \vt \mb)^\dagger \\
		\lambda \Re(\vt N_0 \vt^\dagger) \beta + \tfrac12 [\beta, \vt N_2 \vt^\dagger] &= \beta \vt \mb N_3 \vt^\dagger
		+ \vt N_3 (\beta \vt \mb)^\dagger \\
		\mu \Re(\vt N_0\vt^\dagger) \beta &= \beta \vt \mb N_4 \vt^\dagger + \vt N_4 (\beta \vt \mb)^\dagger. \\
	\text{The $[\P, \Q, \Q]$ identity produces the conditions} \\
		0 &= \mp N_0 - N_0 \mp^\dagger \\
		-N_3 &= \mp N_1 - N_1 \mp^\dagger \\
		0 &= \pi \vt \mp N_2 \vt^\dagger + \vt N_2 (\pi \vt \mp)^\dagger \\
		\eta \Re(\vt N_0\vt^\dagger) \pi &= \pi \vt \mp N_3 \vt^\dagger + \vt N_3 (\pi \vt \mp)^\dagger \\
		\varepsilon \Re(\vt N_0 \vt^\dagger) \pi + \tfrac12 [\pi, \vt N_2 \vt^\dagger] &= \pi \vt \mp N_4 \vt^\dagger
		+ \vt N_4 (\pi \vt \mp)^\dagger , \\
	\end{split}		
\end{equation}
where $\beta, \pi \in \Im(\mathbb{H})$ and $\vt \in \mathbb{H}^2$.
\end{lemma}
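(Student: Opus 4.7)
The plan is to work through each of the four super-Jacobi identities $[\bH,\Q,\Q]$, $[\bZ,\Q,\Q]$, $[\B,\Q,\Q]$, $[\P,\Q,\Q]$ in turn, expanding using the brackets for the universal generalised Bargmann superalgebra together with the $\N=2$ bracket conventions \eqref{eq:N2_S_general_scalar_brackets} and \eqref{eq:N2_S_general_vector_brackets}. The structure of the argument is identical to the $\N=1$ case treated in Lemma \ref{lem:N2_011}'s $\N=1$ analogue: substitute, collect terms on each generator of $\s_{\bar{0}}$, and extract conditions using the injectivity and linearity of $\sJ$, $\sB$, $\sP$, and $\sQ$. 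The only change is algebraic: the quaternionic scalars $\hh,\zz,\bb,\pp$ are replaced by matrices in $\Mat_2(\mathbb{H})$, and quaternion conjugation is replaced by the Hermitian conjugate $\dagger$ whenever a term crosses the inner product $\langle -, -\rangle = \Re(-(-)^\dagger)$.

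First I would handle $[\bH,\Q,\Q]$. The left-hand side equals $[\sH,[\sQ(\vt),\sQ(\vt)]]$, which after substitution produces a sum of $\sH,\sZ,\sJ,\sB,\sP$ terms with coefficients built from $\mh$ acting on the $N_i$; the presence of $\lambda,\mu,\eta,\varepsilon$ comes from $[\sH,\sB]$ and $[\sH,\sP]$. The right-hand side $2[[\sH,\sQ(\vt)],\sQ(\vt)]$ produces terms of the form $\vt N_i (\vt \mh)^\dagger + \vt \mh N_i \vt^\dagger = \vt(\mh N_i + N_i \mh^\dagger)\vt^\dagger$ (and similarly inside the $\sJ,\sB,\sP$ slots). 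Matching coefficients and using injectivity yields the stated four relations, noting that the scalar pieces give $\mh N_i + N_i \mh^\dagger=0$ for $i\in\{0,1\}$ while the vector pieces give the three $(N_2,N_3,N_4)$ conditions. The $[\bZ,\Q,\Q]$ identity follows verbatim, except that every left-hand-side term vanishes because $\bZ$ is central on $\s_{\bar{0}}$, producing the uniform conditions $\mz N_i + N_i \mz^\dagger = 0$ for all $i\in\{0,1,2,3,4\}$.

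Next I would do $[\B,\Q,\Q]$: $[\sB(\beta),[\sQ(\vt),\sQ(\vt)]] = 2[[\sB(\beta),\sQ(\vt)],\sQ(\vt)]$. The left-hand side produces a $\sZ$-contribution from $[\sB,\sP]=\Re(\bar\beta\cdot)\sZ$ as well as a $\sB$-term from $[\sH,\sB]$ (via the $\sH$-piece of $[\sQ,\sQ]$) and a $\sJ$-commutator $[\beta,\vt N_2\vt^\dagger]$. The right-hand side contains $\beta\vt\mb N_i \vt^\dagger + \vt N_i(\beta\vt\mb)^\dagger$ for each $i$. Comparing $\sH,\sZ,\sJ,\sB,\sP$ coefficients and using the fact that $\beta\in\Im(\mathbb{H})$ so $\bar\beta=-\beta$, the $\sH$ equation gives $\mb N_0 - N_0 \mb^\dagger=0$ (after noting that $\Re(\beta \vt \mb N_0\vt^\dagger)$ is a real quaternion pairing and rearranging), and similarly the $\sZ$ equation gives $N_4 = \mb N_1 - N_1 \mb^\dagger$. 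The $\sJ,\sB,\sP$ equations give the three stated matrix-valued identities. The $[\P,\Q,\Q]$ calculation is mutatis mutandis, swapping $\mb\to\mp$, $\beta\to\pi$, and using $[\sH,\sP]=\eta\sB+\varepsilon\sP$ together with $[\sB,\sP]=\Re(\bar\beta\pi)\sZ$; the only sign change worth noting is that the $\sZ$ equation gives $-N_3 = \mp N_1 - N_1 \mp^\dagger$, because the $\bZ$-coefficient arising from $[\sB(\beta),\sP(\pi)]$ and from $[\sP(\pi),\sB(\beta)]$ differ by a sign.

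The main obstacle is purely bookkeeping: one must keep careful track of the non-commutativity of $\Mat_2(\mathbb{H})$, in particular that $\dagger$ is an anti-involution so $(\vt N\mb)^\dagger = \mb^\dagger N^\dagger \vt^\dagger$, and one must correctly separate Hermitian from skew-Hermitian contributions when extracting the $\sH,\sZ$ coefficients (which live under $\Re$) from the $\sJ,\sB,\sP$ coefficients (which do not). Once this bookkeeping is in place, every step reduces to the $\N=1$ argument applied coefficient-by-coefficient, so no genuinely new idea is required beyond the matrix-valued generalisation already used in Lemma \ref{lem:N2_001}.
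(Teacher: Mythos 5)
Your proposal follows the paper's proof essentially verbatim: expand each of the four identities, compute the left-hand side from the universal generalised Bargmann brackets (with the $[\bZ,\Q,\Q]$ left-hand side vanishing by centrality), compute the right-hand side as $\vt(\mathsf{X} N_i \pm N_i \mathsf{X}^\dagger)\vt^\dagger$-type terms, and compare $\sH,\sZ,\sJ,\sB,\sP$ coefficients using injectivity, with the same use of $\bar\beta=-\beta$ and cyclicity of $\Re$ to extract the scalar conditions and the same sign observation explaining $-N_3 = \mp N_1 - N_1\mp^\dagger$. This matches the paper's argument, so no further comment is needed.
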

\begin{proof}
Beginning with the $[\bH, \Q, \Q]$ identity, we have
\begin{equation}
	[\sH, [\sQ(s), \sQ(s)]] = 2 [[\sH, \sQ(s)], \sQ(s)].
\end{equation}
Focussing on the L.H.S., note the general form of the $[\Q, \Q]$ bracket
has components along each of the $\s_{\bar{0}}$ basis elements; however,
$\bH$ only commutes with $\B$ and $\P$.  Therefore, 
\begin{equation}
	\begin{split}
		L.H.S. &= -[\sH, \sB(\vt N_3\vt^\dagger)] - [\sH, \sP(\vt N_4\vt^\dagger)] \\
		&= - \sB( \vt (\lambda N_3 + \eta N_4) \vt^\dagger) - \sP( \vt (\mu N_3 + \varepsilon N_4) \vt^\dagger ).
	\end{split}
\end{equation}
Substituting $[\sH, \sQ(\vt)] = \sQ(\vt\mh)$ into the R.H.S. and using the polarised form of the $[\Q, \Q]$ bracket,
we find
\begin{equation}
	\begin{split}
		R.H.S. = & \Re( \vt (\mh N_0  + N_0 \mh^\dagger) \vt^\dagger) \sH + \Re( \vt (\mh N_1  + N_1 \mh^\dagger) \vt^\dagger) \sZ \\
		& - \sJ( \vt (\mh N_2 + N_2 \mh^\dagger) \vt^\dagger ) - \sB( \vt (\mh N_3 + N_3 \mh^\dagger) \vt^\dagger )
		- \sP( \vt (\mh N_4 + N_4 \mh^\dagger) \vt^\dagger ).
	\end{split}
\end{equation}
Comparing coefficients and using the injectivity and linearity of the maps $\sJ, \sB$ and $\sP$, we get the
desired conditions.  The $[\bZ, \Q, \Q]$ result follows in an analogous manner.  Consider the 
$[\B, \Q, \Q]$ Jacobi identity
\begin{equation}
	[ \sB(\beta), [\sQ(\vt), \sQ(\vt)]] = 2 [[\sB(\beta), \sQ(\vt)], \sQ(\vt)].
\end{equation}
Since $\B$ commutes with $\bZ$ and $\B$, the L.H.S. takes the following form
\begin{equation}
	\begin{split}
		L.H.S. &= [\sB(\beta), \Re(\vt N_0\vt^\dagger) \sH - \sJ(\vt N_2\vt^\dagger) - \sP(\vt N_4 \vt^\dagger)] \\
		&= - \Re(\vt N_0 \vt^\dagger) (\lambda \sB(\beta) + \mu \sP(\beta) ) +\tfrac12 \sB([\vt N_2\vt^\dagger, \beta])
		- \Re(\bar{\beta} \vt N_4 \vt^\dagger) \sZ.
	\end{split}
\end{equation}
Turning attention to the R.H.S., we find
\begin{equation}
	\begin{split}
		R.H.S. = & \Re( \beta \vt \mb N_0 \vt^\dagger + \vt N_0 \mb^\dagger \vt^\dagger \bar{\beta}) \sH
		+ \Re( \beta \vt \mb N_1 \vt^\dagger + \vt N_1 \mb^\dagger \vt^\dagger \bar{\beta}) \sZ \\
		& - \sJ(\beta \vt \mb N_2 \vt^\dagger + \vt N_2 \mb^\dagger \vt^\dagger \bar{\beta}) 
		- \sB(\beta \vt \mb N_3 \vt^\dagger + \vt N_3 \mb^\dagger \vt^\dagger \bar{\beta})
		- \sP(\beta \vt \mb N_4 \vt^\dagger + \vt N_4 \mb^\dagger \vt^\dagger \bar{\beta}).
	\end{split}
\end{equation}
Using the property $\bar{\beta} = -\beta$, since $\beta \in \Im(\mathbb{H})$, and the cyclic
property of $\Re$, the first two terms can have their coefficients written in the form
\begin{equation}
	\Re( \beta \vt \mb N_i \vt^\dagger + \vt N_i \mb^\dagger \vt^\dagger \bar{\beta})
	= \Re( \beta \vt (\mb N_i - N_i \mb^\dagger ) \vt^\dagger ),
\end{equation} 
for $i \in \{0, 1\}$.  Again, comparing coefficients we obtain the desired results.  The 
$[\P, \Q, \Q]$ case follows identically.
\end{proof}
\paragraph{$(\s_{\bar{1}}, \s_{\bar{1}}, \s_{\bar{1}})$} \label{par:N2_S_P_111} ~\\ \\
The last super-Jacobi identity to consider is the $(\s_{\bar{1}}, \s_{\bar{1}}, \s_{\bar{1}})$ case,
$[\Q, \Q, \Q]$.
\begin{lemma}\label{lem:N2_111}
The $[\Q,\Q,\Q]$ identity produces the condition
\begin{equation}
\Re(\vt N_0 \vt^\dagger) \vt \mh + \Re(\vt N_1 \vt^\dagger) \vt \mz = \tfrac12 \vt N_2 \vt^\dagger \vt + \vt N_3 \vt^\dagger \vt \mb 
+ \vt N_4 \vt^\dagger \vt \mp.
\end{equation}
\end{lemma}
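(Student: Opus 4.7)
The plan is to expand the super-Jacobi identity $0 = [[\sQ(\vt),\sQ(\vt)],\sQ(\vt)]$ by substituting the unpolarised form of $[\sQ(\vt),\sQ(\vt)]$ derived in section \ref{subsec:N2_setup}, computing each of the five resulting brackets against $\sQ(\vt)$ using \eqref{eq:N2_S_general_scalar_brackets} and \eqref{eq:N2_S_general_vector_brackets}, and then invoking injectivity of $\sQ$ to extract an identity in $\mathbb{H}^2$.

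Before bracketing the $\sJ(\vt N_2\vt^\dagger)$ term with $\sQ(\vt)$, I would first check that the argument of $\sJ$ is legitimately imaginary: since $N_2$ is quaternion skew-Hermitian, one has $(\vt N_2\vt^\dagger)^\dagger = \vt N_2^\dagger \vt^\dagger = -\vt N_2\vt^\dagger$, which is exactly the condition for a quaternion to lie in $\Im(\mathbb{H})$. The analogous observation for $N_3, N_4$ ensures the vectorial brackets are well-defined too.

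The five term-by-term computations are then straightforward. The two scalar terms produce $\Re(\vt N_0\vt^\dagger)\sQ(\vt\mh)$ and $\Re(\vt N_1\vt^\dagger)\sQ(\vt\mz)$, since the real coefficients commute past everything. The rotational term yields $-\tfrac12\,\sQ\bigl((\vt N_2\vt^\dagger)\vt\bigr)$, where the factor of $\tfrac12$ comes from the prefactor in $[\sJ(\omega),\sQ(\vt)] = \tfrac12\sQ(\omega\vt)$. Finally, the $\sB$ and $\sP$ terms contribute $-\sQ\bigl((\vt N_3\vt^\dagger)\vt\mb\bigr)$ and $-\sQ\bigl((\vt N_4\vt^\dagger)\vt\mp\bigr)$. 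Summing all five, pulling $\sQ$ outside, and using its injectivity gives the claim after trivial rearrangement.

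There is no serious obstacle here. Unlike the $\N=1$ analogue of Lemma \ref{lem:N1_111}, we cannot simplify further: although $\vt\vt^\dagger$ is a real scalar, $\vt N_i\vt^\dagger$ is only a quaternion, so no overall $|\vt|^2$ factor can be stripped off to leave a relation purely among $\mh,\mz,\mb,\mp,N_0,\dots,N_4$. The only risks are bookkeeping ones: misplacing the $\tfrac12$ on the $N_2$ term, mishandling the minus signs attached to the $\sJ$, $\sB$, $\sP$ components of $[\sQ(\vt),\sQ(\vt)]$, or forgetting that the Hermiticity/skew-Hermiticity of the $N_i$ is what allows each intermediate object to sit in the correct submodule.
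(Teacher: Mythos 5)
Your proposal is correct and follows exactly the paper's own route: substitute the unpolarised $[\sQ(\vt),\sQ(\vt)]$ bracket into $0=[[\sQ(\vt),\sQ(\vt)],\sQ(\vt)]$, evaluate the five resulting terms with \eqref{eq:N2_S_general_scalar_brackets} and \eqref{eq:N2_S_general_vector_brackets}, and conclude by injectivity of $\sQ$. Your additional check that skew-Hermiticity of $N_2,N_3,N_4$ puts $\vt N_i\vt^\dagger$ in $\Im(\mathbb{H})$ is a sound (if implicit in the paper) verification, and your signs and the factor $\tfrac12$ are all handled correctly.
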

\begin{proof}
The $[\Q, \Q, \Q]$ identity is written
\begin{equation}
	0 = [[\sQ(\vt), \sQ(\vt)], \sQ(\vt)].
\end{equation}
Substituting in the $[\Q, \Q]$ bracket, this becomes
\begin{equation}
	0 = [ \Re(\vt N_0 \vt^\dagger) \sH + \Re(\vt N_1 \vt^\dagger) \sZ - \sJ(\vt N_2 \vt^\dagger) 
- \sB(\vt N_3 \vt^\dagger) - \sP(\vt N_4 \vt^\dagger), \sQ(\vt)]. 
\end{equation}
Finally, using the brackets in \eqref{eq:N2_S_general_scalar_brackets} and \eqref{eq:N2_S_general_vector_brackets} and the injectivity of $\sQ$, we obtain the desired result.
\end{proof}
\subsubsection{Basis Transformations} \label{subsubsec:N2_S_autos}
We will investigate the subgroup $\G \subset \GL(\s_{\bar{0}}) \times 
\GL(\s_{\bar{1}})$ by first looking at the transformations induced by the adjoint
action of the rotational subalgebra $\r \cong \so(3)$.  
We will then look at the $\so(3)$-equivariant maps transforming
the basis of the underlying vector space.  These will act via Lie algebra automorphisms in $\s_{\bar{0}}$
and endomorphisms of the $\so(3)$ module $S^2$ in $\s_{\bar{1}}$. 
Note, in the former case, where the automorphism is induced by
$\ad_{\bJ_i}$, each $\so(3)$ module will transform into itself, while,
in the latter case, when the transformation is some $\so(3)$-equivariant map,
the modules transform into one another.  For completeness, at the end
of the section, we determine the automorphisms of each generalised Bargmann algebra.
\\ \\
Recall that $\Sp(1)$ is the double-cover of $\Aut(\mathbb{H})$, and $\Aut(\mathbb{H}) 
\cong \SO(3)$.  We, therefore, write $\lambda \in \Aut(\mathbb{H})$ as $\lambda(\ss) =
\uu\ss\bar{\uu}$ for some $\uu \in \Sp(1)$, which will act trivially on the real component 
of $\ss$ and rotate the imaginary components.  Using this result, we can represent the action
of $\Aut(\mathbb{H})$ on the $\so(3)$ vector modules in $\s_{\bar{0}}$ by pre-composing the linear maps
$\sJ$, $\sB$, and $\sP$ with $\Ad_{\uu}$, for $\uu \in \Sp(1)$.  To preserve the kinematical
brackets in $[\s_{\bar{0}}, \s_{\bar{0}}]$, we must pre-compose with the same $\uu$
for each map. Note, $\so(3)$ acts trivially on $\bH$ and $\bZ$, so
these basis elements will be left invariant under these automorphisms. 
For $\s_{\bar{1}}$, we restrict to the individual copies of $S$ through diagonal
matrices.  To preserve the $[\J, \Q]$ bracket, we must 
pre-compose with the same $u$ as above.  Therefore, we write
$\tilde{\sQ}(\vt) = \sQ(\uu \vt\bar{\uu} \mathbb{1})$.  We can now investigate how these automorphisms
affect our brackets
\begin{equation}
\begin{gathered}
\begin{split}
[\sJ(\omega), \sQ(\vt)] &= \tfrac12 \sQ(\omega \vt) \\
[\sB(\beta), \sQ(\vt)] &= \sQ(\beta \vt \mb) \\
[\sP(\pi), \sQ(\vt)] &= \sQ(\pi \vt \mp)
\end{split} \qquad
\begin{split}
[\sH, \sQ(\vt)] &= \sQ(\vt\mh) \\
[\sZ, \sQ(\vt)] &= \sQ(\vt\mz)
\end{split} \\ \\
[\sQ(\vt), \sQ(\vt)] = \Re(\vt N_0 \vt^\dagger) \sH + \Re(\vt N_1 \vt^\dagger) \sZ - \sJ(\vt N_2 \vt^\dagger) 
- \sB(\vt N_3 \vt^\dagger) - \sP(\vt N_4 \vt^\dagger).
\end{gathered}
\end{equation}
Transforming the basis, we have
\begin{equation} \label{eq:transformed_brackets}
\begin{gathered}
\begin{split}
[\tilde{\sJ}(\omega), \tilde{\sQ}(\vt)] &= \tfrac12 \tilde{\sQ}(\omega \vt) \\
[\tilde{\sB}(\beta), \tilde{\sQ}(\vt)] &= \tilde{\sQ}(\beta \vt \tilde{\mb}) \\
[\tilde{\sP}(\pi), \tilde{\sQ}(\vt)] &= \tilde{\sQ}(\pi \vt \tilde{\mp})
\end{split} \qquad
\begin{split}
[\tilde{\sH}, \tilde{\sQ}(\vt)] &= \tilde{\sQ}(\vt\tilde{\mh}) \\
[\tilde{\sZ}, \tilde{\sQ}(\vt)] &= \tilde{\sQ}(\vt\tilde{\mz})
\end{split} \\ \\
[\tilde{\sQ}(\vt), \tilde{\sQ}(\vt)] = \Re(s\vt\widetilde{N_0} \vt^\dagger) \tilde{\sH} + \Re(\vt \widetilde{N_1} \vt^\dagger) \tilde{\sZ}
 - \tilde{\sJ}(\vt \widetilde{N_2} \vt^\dagger) - \tilde{\sB}(\vt \widetilde{N_3} \vt^\dagger) - \tilde{\sP}(\vt \widetilde{N_4} \vt^\dagger),
 \end{gathered}
\end{equation}
with $\tilde{\sH} = \sH$, $\tilde{\sZ} = \sZ$, $\tilde{\sJ} = \sJ\circ \Ad_{\uu}$, 
$\tilde{\sB} = \sB\circ \Ad_{\uu}$, $\tilde{\sP} = \sP\circ \Ad_{\uu}$, and $\tilde{\sQ} = \sQ\circ \Ad_{\uu}$,
where it is understood that $\Ad_{\uu}$ acts diagonally on the $\s_{\bar{1}}$ basis, $\sQ$.  The transformed
matrices are 
\begin{equation}
\begin{split}
\tilde{\mh} &= D \mh D^{-1}\\\
\tilde{\mz} &= D \mz D^{-1}
\end{split} \quad
\begin{split}
\tilde{\mb} &= D \mb D^{-1} \\
\tilde{\mp} &= D \mp D^{-1}
\end{split} \quad
\widetilde{N_i} = D N_i D^\dagger,
\end{equation}
where $D = \uu \mathbb{1}$ for $\uu \in \Sp(1)$ and $i \in \{0, 1, ..., 4\}$.  Therefore, $D^{-1} = D^\dagger = \bar{\uu}\mathbb{1}$.
These automorphisms simultaneously rotate all quaternions, all the components of the matrices $\mh, \mz, \mb, \mp$ and $N_i$, by the same $\Sp(1)$ element.
\\ \\
Next, we want to consider the $\so(3)$-equivariant linear maps 
which leave the rotational subalgebra invariant: $(\sJ, \sB, \sP, \sH, \sZ, \sQ) \rightarrow
(\sJ, \tilde{\sB}, \tilde{\sP}, \tilde{\sH}, \tilde{\sZ}, \tilde{\sQ})$.  These take the general form
\begin{equation} \label{eq:autos_2}
\begin{split}
\tilde{\sH} &= a \sH + b \sZ \\
\tilde{\sZ} &= c \sH + d \sZ \\
\tilde{\sB}(\beta) &= e \sB(\beta) + f \sP(\beta) + g \sJ(\beta) \\
\tilde{\sP}(\pi) &= h \sB(\pi) + i \sP(\pi) + j \sJ(\pi) \\
\tilde{\sQ}(\vt) &= \sQ(\vt M),
\end{split}
\end{equation}
where $a, ..., j \in \mathbb{R}$ and $M \in \GL(\mathbb{H}^2)$.  Crucially, 
\begin{equation}
	A = \begin{pmatrix}
		a & b \\ c & d
	\end{pmatrix} \in \GL(2, \mathbb{R}) \quad \text{and} \quad
	C = \begin{pmatrix}
		e & f & g \\ h & i & j \\ 0 & 0 & 1
	\end{pmatrix} \in \GL(3, \mathbb{R}),
\end{equation}
act on $(\sH, \sZ)^T$ and $(\sB, \sP, \sJ)^T$, respectively.  Each of the generalised
Bargmann algebra allows different transformations of this type; however, there are some
important general results.  Therefore, we will begin by working through the analysis of
these maps with the universal generalised Bargmann algebra before focussing on each algebra separately.
\\ \\
As in the $\N = 1$ case, the checking of brackets that include $\J$ is really verifying that
the above maps are $\so(3)$-equivariant, so this does not give us any information not already presented.  The first bracket 
we will consider is $[\B, \P] = \bZ$.  Substituting in the maps of \eqref{eq:autos_2}, we find
the following important results:
\begin{equation}
	d = ei - fh, \quad c = 0, \quad \text{and} \quad g = j = 0.
\end{equation}
The vanishing of $c$ tells us that $d \neq 0$ if we are to have $A \in \GL(2, \mathbb{R})$.  Also,
the vanishing of $g$ and $j$ shows that we can reduce $C$ to an element of $\GL(2, \mathbb{R})$,
\begin{equation}
	C = \begin{pmatrix}
		e & f \\ h & i 
	\end{pmatrix},
\end{equation}
acting on  $(\sB, \sP)^T$.  The remaining $[\s_{\bar{0}}, \s_{\bar{0}}]$ brackets are $[\bH, \B]$ and $[\bH, \P]$, which produce
\begin{equation} \label{eq:AB_constraints}
	\begin{split}
		0 &= \lambda e (a-1) + \eta af - \mu h\\
		0 &= \lambda f - \varepsilon af + \mu (i - ea) 
	\end{split} \quad \text{and} \quad
	\begin{split}
		0 &= \eta (e-ai) + \varepsilon h - \lambda ah \\
		0 &= \eta f + \varepsilon i (1-a) - \mu ah,
	\end{split}
\end{equation}
respectively.  Clearly, these conditions are dependent on the exact choice of generalised 
Bargmann algebra, so we will leave these results in this form for now.
\\ \\
Now, since the $[\s_{\bar{0}}, \s_{\bar{1}}]$ and $[\s_{\bar{1}}, \s_{\bar{1}}]$ brackets
are so far independent of the chosen algebra, the following results will hold for all the generalised Bargmann algebras.  
Reusing \eqref{eq:transformed_brackets}, in this instance we find
\begin{equation}
\tilde{\mh} = M (a \mh + b\mz) M^{-1} \quad 
\tilde{\mz} = d M \mz M^{-1} \quad
\tilde{\mb} = M (e\mb + f \mp) M^{-1} \quad 
\tilde{\mp} = M (h \mb + i\mp) M^{-1}  \nonumber
\end{equation} 
\begin{equation} \label{eq:N2_S_basis_transformations}
\begin{split}
\widetilde{N_0} &= \frac{1}{a} M N_0 M^\dagger \\
\widetilde{N_1} &= \frac{1}{ad} M (a N_1 - b N_0) M^\dagger
\end{split} \quad
\begin{split}
\widetilde{N_2} &= M N_2 M^\dagger \\
\widetilde{N_3} &= \frac{1}{ie-fh} M (i N_3 - h N_4) M^\dagger \\
\widetilde{N_4} &= \frac{1}{ie-fh} M (e N_4 - f N_3) M^\dagger.
\end{split}
\end{equation}
Putting the two types of transformation in $\G$ together, we have 
\begin{equation}
\begin{split}
\sJ &\mapsto \sJ\circ \Ad_{\uu} \\
\sB &\mapsto e \sB\circ \Ad_{\uu} + f \sP\circ \Ad_{\uu} \\
\sP &\mapsto h \sB\circ \Ad_{\uu} + i \sP\circ \Ad_{\uu} \\
\sH &\mapsto a \sH + b \sZ \\
\sZ &\mapsto d \sZ \\
\sQ &\mapsto \sQ\circ \Ad_{\uu} \circ R_M.
\end{split}
\end{equation}
These transformations may be summarised by $(A = \big( \begin{smallmatrix} a & b \\ 0 & d\end{smallmatrix} \big),
C = \big(\begin{smallmatrix} e & f \\ h & i \end{smallmatrix}\big), M, \uu) \in \GL(\mathbb{R}^2) \times \GL(\mathbb{R}^2)
\times \GL(\mathbb{H}^2) \times \mathbb{H}^\times$.  Now that we have the most general element of
the subgroup $G \subset \GL(\s_{\bar{0}}) \times \GL(\s_{\bar{1}})$ for
$\s_{\bar{0}} = \k$ the universal generalised Bargmann algebra, we can restrict ourselves to 
the automorphisms of $\s_{\bar{0}}$ and set the parameters
$\lambda, \mu, \eta, \varepsilon \in \mathbb{R}$ to determine the automorphism group
for each of the generalised Bargmann algebras.  The results of this investigation are presented
in Table \ref{tab:GBA-auts}.
\\
\paragraph{$\hat{\a}$} ~\\ \\
In this instance, all the conditions vanish as $\lambda = \mu = \eta
= \varepsilon = 0$; therefore, the matrices $A$ and $C$ are left as stated above.
\\ 
\paragraph{$\hat{\n}_-$}  ~\\ \\
Having $\lambda = - \varepsilon = 1$ and $\mu = \eta = 0$, the
conditions in \eqref{eq:AB_constraints} become
\begin{equation}
	\begin{split}
		0 &= e (a -1) \\
		0 &= f (1 +a) 
	\end{split}  \quad \text{and} \quad
	\begin{split}
		0 &= h (1+a) \\
		0 &= i (1-a).
	\end{split}
\end{equation}
Notice, if $a \notin \{ \pm 1\}$ then $C$ must vanish, which cannot happen if we are to retain
the basis elements $\B$ and $\P$.  Therefore, we are left with two cases: $a = 1$ and $a = -1$.
In the former instance, we have automorphisms with
\begin{equation}
	A = \begin{pmatrix}
		1 & b \\ 0 & ei
	\end{pmatrix} \quad \text{and} \quad 
	C = \begin{pmatrix}
		e & 0 \\ 0 & i
	\end{pmatrix}.
\end{equation}
In the latter instance, we have
\begin{equation}
	A = \begin{pmatrix}
		-1 & b \\ 0 & -hf
	\end{pmatrix} \quad \text{and} \quad 
	C = \begin{pmatrix}
		0 & h \\ f & 0
	\end{pmatrix}.
\end{equation}
\paragraph{$\hat{\n}_+$} ~\\ \\ In this case, $\lambda = \varepsilon = 0$ and $\mu = - \eta = 1$.
Therefore, our constraints become
\begin{equation}
	\begin{split}
		0 &= h + af \\
		0 &= i - ae
	\end{split} \quad \text{and} \quad
	\begin{split} 
		0 &= e - ai \\
		0 &= f + ah.
	\end{split}
\end{equation}
Taking the expressions for $h$ and $i$ from the conditions on the left and substituting them
into the conditions on the right, we find
\begin{equation}
	0 = (1-a^2) f \qquad 0 = (1 - a^2) e.
\end{equation}
If $a^2 \neq 1$, we would need both $f$ and $e$ to vanish, which contradicts our assumption
that $C \in \GL(2, \mathbb{R})$.  Therefore, we need $a^2 = 1$, which presents two cases:
$a = 1$ and $a=-1$.  In the former instance, we find automorphisms of the form
\begin{equation}
	A = \begin{pmatrix}
		1 & b \\ 0 & e^2 + h^2
	\end{pmatrix} \quad \text{and} \quad 
	C = \begin{pmatrix}
		e & h \\ -h & e
	\end{pmatrix}.
\end{equation}
In the latter instance, we get
\begin{equation}
	A = \begin{pmatrix}
		-1 & b \\ 0 & -e^2 -h^2
	\end{pmatrix} \quad \text{and} \quad
	C = \begin{pmatrix}
		e & h \\ h & -e
	\end{pmatrix}.
\end{equation}
\paragraph{$\hat{\g}$} ~\\ \\Finally, we have $\lambda = \eta = \varepsilon = 0$ and $\mu = -1$,
which, when substituted into \eqref{eq:AB_constraints}, produces
\begin{equation}
		0 = h \quad \text{and} \quad i = ae.
\end{equation}
Therefore, automorphisms for the Bargmann algebra take the form
\begin{equation}
	A = \begin{pmatrix}
		a & b \\ 0 & ie
	\end{pmatrix} \quad \text{and} \quad
	C = \begin{pmatrix}
		e & f \\ 0 & ae
	\end{pmatrix}.
\end{equation}
\begin{table}[h!]
  \centering
  \caption{Automorphisms of the generalised Bargmann algebras}
  \label{tab:GBA-auts}
  \setlength{\extrarowheight}{2pt}
  \begin{tabular}{l|>{$}l<{$}}\toprule
    \multicolumn{1}{c|}{$\k$} & \multicolumn{1}{c}{General $(A, C) \in \GL(\mathbb{R}^2) \times \GL(\mathbb{R}^2)$} \\
    \toprule
    $\hat{\a}$ & \left(\begin{pmatrix} a & b \\ 0 & d\end{pmatrix}, \quad  \begin{pmatrix} e & f \\ h & i \end{pmatrix}\right) \\
    $\hat{\n}_-$ & \left(\begin{pmatrix}
		1 & b \\ 0 & ei
	\end{pmatrix}, \quad 
	\begin{pmatrix}
		e & 0 \\ 0 & i
	\end{pmatrix}\right) \cup \left(\begin{pmatrix}
		-1 & b \\ 0 & -hf
	\end{pmatrix}, \quad 
	 \begin{pmatrix}
		0 & h \\ f & 0
	\end{pmatrix}\right) \\
    $\hat{\n}_+$ & \left(\begin{pmatrix}
		1 & b \\ 0 & e^2 + h^2
	\end{pmatrix}, \quad 
	 \begin{pmatrix}
		e & h \\ -h & e
	\end{pmatrix}\right) \cup \left(\begin{pmatrix}
		-1 & b \\ 0 & -e^2 -h^2
	\end{pmatrix}, \quad 
	 \begin{pmatrix}
		e & h \\ h & -e
	\end{pmatrix}\right) \\
    $\hat{\g}$ & \left(\begin{pmatrix}
		a & b \\ 0 & ie
	\end{pmatrix},\quad 
	\begin{pmatrix}
		e & f \\ 0 & ae
	\end{pmatrix}\right) \\
    \bottomrule
  \end{tabular}
\end{table}
\subsection{Establishing Branches} \label{subsec:N2_branches}
Before proceeding to the discussion in which the non-empty sub-branches are identified, we first establish the possible $[\s_{\bar{0}}, \s_{\bar{1}}]$
brackets.  More specifically, we establish the possible forms for $\mz, \mh, \mb, \mp \in \Mat_2(\mathbb{H})$.
In this section, we focus solely on the results of Lemma \ref{lem:N2_001}
concerning the $(\s_{\bar{0}}, \s_{\bar{0}}, \s_{\bar{1}})$ super-Jacobi identities.
Using the universal generalised Bargmann algebra, we find that $\mb, \mp \in \Mat_2(\mathbb{H})$,
which encode the brackets $[\B, \Q]$ and $[\P, \Q]$, respectively, form a
double complex.  Analysing this structure, we identify four possible
cases:
\begin{enumerate}
	\item $\mb = 0$ and $\mp = 0$
	\item $\mb = 0$ and $\mp \neq 0$
	\item $\mb \neq 0$ and $\mp = 0$
	\item $\mb \neq 0$ and $\mp \neq 0$. 
\end{enumerate}
Taking each of these cases in turn, we find forms for $\mz$ and $\mh$ to establish four branches.  These branches
will form the basis for our investigations into the possible super-extensions for each
of the generalised Bargmann algebras in section \ref{subsec:N2_class}.
\\ \\
Using the results of Lemma \ref{lem:N2_001}, we notice that $\mb^2 = \mp^2 = 0$ and 
$\mb\mp + \mp\mb = 0$; therefore, $\mb$ and $\mp$ are the differentials of a double complex in which
the modules are $\s_{\bar{1}}$.  What does 
this mean for the form of $\mb$ and $\mp$?  Notice that we could simply set $\mb$ and $\mp$ to zero.
However, assuming at least one component of these matrices is non-vanishing, we find the following
cases.  Take $\mp$ as our example and let
\begin{equation}
	\mp = \begin{pmatrix} \pp_1 & \pp_2 \\ \pp_3 & \pp_4 \end{pmatrix}.
\end{equation}
The fact that this squares to zero tells us
\begin{equation} \label{eq:sqrd_mats}
\pp_1^2 + \pp_2 \pp_3 = 0 \qquad \pp_1 \pp_2 + \pp_2 \pp_4 = 0 \qquad \pp_3 \pp_1 + \pp_4 \pp_3 = 0
\qquad \pp_3 \pp_2 + \pp_4^2 = 0.
\end{equation}
There are two cases, $\pp_3 = 0$ and $\pp_3 \neq 0$, which we shall now consider in turn.
\\ \\
In the $\pp_3 =0$ case, the constraints in \eqref{eq:sqrd_mats} become
\begin{equation}
	\pp_1^2 = 0 \qquad \pp_1 \pp_2 + \pp_2 \pp_4 = 0 \qquad  \pp_4^2 = 0.
\end{equation}
Therefore, $\pp_1 = \pp_4 = 0$ and $\pp_2$ is unconstrained, leaving the matrix
\begin{equation}
	\mp = \begin{pmatrix}0 & \pp_2 \\ 0 & 0 \end{pmatrix}.
\end{equation}
In the $\pp_3 \neq 0$ case, we can use the first and third constraints of 
\eqref{eq:sqrd_mats} to get $\pp_2 = - \pp_1^2 \pp_3^{-1}$ and 
$\pp_4 = - \pp_3 \pp_1 \pp_3^{-1}$, respectively.  These choices trivially satisfy the second and 
fourth constraints such that we arrive at
\begin{equation}
	\mp = \begin{pmatrix} \pp_1 & -\pp_1^2 \pp_3^{-1} \\ \pp_3 & -\pp_3\pp_1\pp_3^{-1} \end{pmatrix}.
\end{equation}
In a completely analogous manner, we find
\begin{equation}
	\mb = \begin{pmatrix}0 & \bb_2 \\ 0 & 0 \end{pmatrix} \quad \text{and} \quad \mb =
	\begin{pmatrix} \bb_1 & -\bb_1^2 \bb_3^{-1} \\ \bb_3 & -\bb_3\bb_1\bb_3^{-1} \end{pmatrix}.
\end{equation}
Now, what does the anti-commuting condition tell us about the non-vanishing matrices?  
Notice, we have four options:
\begin{enumerate}
	\item $\pp_3 \neq 0$, $\bb_3 \neq 0$, 
	\item $\pp_3 \neq 0$, $\bb_3 = 0$, 
	\item $\pp_3 = 0$, $\bb_3 \neq 0$, and
	\item $\pp_3 = 0$, $\bb_3 = 0$.
\end{enumerate}
\paragraph{\textbf{Option 1}}Here we will find three distinct sub-options.  Interestingly, 
these three sub-options are equivalent to options 2, 3, and 4 above.  Substituting
the matrices associated with  $\pp_3 \neq 0$ and $\bb_3 \neq 0$ into 
$\mb\mp + \mp\mb = 0$ gives us
\begin{equation} \label{eq:N2_anti_commuting_conditions}
	\begin{split}
		 0 &= \bb_1\pp_1 - \bb_1^2 \bb_3^{-1}\pp_3 + \pp_1 \bb_1 - \pp_1^2 \pp_3^{-1} \bb_3 \\
		 0 &= -\bb_1 \pp_1^2 \pp_3^{-1} + \bb_1^2 \bb_3^{-1} \pp_3 \pp_1 \pp_3^{-1} - \pp_1 \bb_1^2 
		 \bb_3^{-1} + \pp_1^2 \pp_3^{-1} \bb_3 \bb_1 \bb_3^{-1} \\
		 0 &= \bb_3 \pp_1 - \bb_3 \bb_1 \bb_3^{-1} \pp_3 + \pp_3 \bb_1 - \pp_3 \pp_1 \pp_3^{-1} \bb_3 \\
		 0 &= - \bb_3 \pp_1^2 \pp_3^{-1} + \bb_3 \bb_1 \bb_3^{-1} \pp_3 \pp_1 \pp_3^{-1} - \pp_3 \bb_1^2 \bb_3^{-1}
		 + \pp_3 \pp_1 \pp_3^{-1} \bb_3 \bb_1 \bb_3^{-1} .
	\end{split}
\end{equation}
Multiplying the first of these conditions on the right by $\bb_1\bb_3^{-1}$ and adding it to the second condition,
we obtain
\begin{equation}
0 = \bb_1 (\pp_1 - \bb_1\bb_3^{-1}\pp_3)(\bb_1\bb_3^{-1} - \pp_1 \pp_3^{-1}).
\end{equation}
Since the quaternions have no zero-divisors, one of these terms must vanish.  The
vanishing of the second is equivalent to the vanishing of the third, so we have two sub-options:
\begin{enumerate}[label=1.\arabic*]
	\item $\bb_1 = 0$, and
	\item $\bb_1\bb_3^{-1} = \pp_1 \pp_3^{-1}$.
\end{enumerate}
In the latter case, the third and fourth conditions of \eqref{eq:N2_anti_commuting_conditions} 
are trivially satisfied, but in the former case,
a little more work is required.  Setting $\bb_1 = 0$, we obtain
\begin{equation}
0 = \bb_3 \pp_1^2\pp_3^{-1} \quad \text{and} \quad 0 = \bb_3 \pp_1 - \pp_3 \pp_1 \pp_3^{-1} \bb_3.
\end{equation}
Again, using the fact the quaternions have no zero-divisors, these conditions mean this sub-option 
further divides into two sub-options:
\begin{enumerate}[label=1.1.\arabic*]
	\item $\bb_3 = 0$, and
	\item $\pp_1 = 0$,
\end{enumerate}
with $\pp_3$ left free. Recall that to arrive at these options we first made a choice to multiply 
the first condition of \eqref{eq:N2_anti_commuting_conditions} by $\bb_1\bb_3^{-1}$.  We could equally have multiplied
by $\pp_1\pp_3^{-1}$ such that case 1.1 above read $\pp_1 = 0$.  (Notice, the second
case is symmetric, so would remain the same in this instance.)  Analogous subsequent
calculations would lead to sub-options $\pp_3 = 0$ and $\bb_1 = 0$.  Putting all of this together, 
we have four sub-options to consider:
\begin{equation}
	\begin{alignedat}{2}
		\text{Sub-option 1:} \quad \mb &= 0 \quad \hspace*{26.25mm} \mp = \begin{pmatrix} \pp_1 & - \pp_1^2 \pp_3^{-1} \\
		\pp_3 & -\pp_3\pp_1\pp_3^{-1} \end{pmatrix}   \\
		\text{Sub-option 2:} \quad \mb &= \begin{pmatrix} 0 & 0 \\ \bb_3 & 0 \end{pmatrix} \quad
		\hspace*{3.25mm} \quad \qquad \mp = \begin{pmatrix} 0 & 0 \\ \pp_3 & 0 \end{pmatrix}  \\
		\text{Sub-option 3:} \quad \mb &= \begin{pmatrix} \bb_1 & -\bb_1^2 \bb_3^{-1} \\ \bb_3 &
		 -\bb_3\bb_1\bb_3^{-1} \end{pmatrix} \quad \mp = 0  \\
		 \text{Sub-option 4:} \quad \mb &= \begin{pmatrix} \bb_1 & -\bb_1^2 \bb_3^{-1} \\ \bb_3 &
		 -\bb_3\bb_1\bb_3^{-1} \end{pmatrix} \quad  \mp = \begin{pmatrix} \pp_1 & - \pp_1^2 \pp_3^{-1} \\
		\pp_3 & -\pp_3\pp_1\pp_3^{-1} \end{pmatrix} \quad \text{where} \quad \bb_1\bb_3^{-1} 
		= \pp_1 \pp_3^{-1} .
	\end{alignedat}
\end{equation}
In fact, this list can be simplified further.  For all generalised Bargmann algebras,
we can choose a transformation $(\mathbb{1}, \mathbb{1},
M, 1)$, where, $M$ takes the form
\begin{equation} \label{eq:N2_diagonalising_auto}
	M = \begin{pmatrix}
		1 & -\bb_1 \bb_3^{-1} \\ 0 & 1
	\end{pmatrix},
\end{equation}
such that sub-option 4 becomes sub-option 2.  In summary, the $\pp_3 \neq 0$ and 
$\bb_3 \neq 0$ assumption lead to three separate sub-options.
\begin{equation}
	\begin{alignedat}{2}
		\text{Sub-option 1:} \quad \mb &= 0 \quad \hspace*{26.25mm} \mp = \begin{pmatrix} \pp_1 & - \pp_1^2 \pp_3^{-1} \\
		\pp_3 & -\pp_3\pp_1\pp_3^{-1} \end{pmatrix}   \\
		\text{Sub-option 2:} \quad \mb &= \begin{pmatrix} 0 & 0 \\ \bb_3 & 0 \end{pmatrix} \quad
		\hspace*{3.25mm} \quad \qquad \mp = \begin{pmatrix} 0 & 0 \\ \pp_3 & 0 \end{pmatrix}  \\
		\text{Sub-option 3:} \quad \mb &= \begin{pmatrix} \bb_1 & -\bb_1^2 \bb_3^{-1} \\ \bb_3 &
		 -\bb_3\bb_1\bb_3^{-1} \end{pmatrix} \quad \mp = 0.
	\end{alignedat}
\end{equation}
\paragraph{\textbf{Option 2}}Letting $\pp_3 \neq 0$ and $\bb_3 = 0$, the anti-commuting condition
tells us 
\begin{equation}
	0=\bb_2\pp_3 \qquad \text{and} \qquad  \pp_1\bb_2 = \bb_2 \pp_3 \pp_1 \pp_3^{-1}.
\end{equation}
Using the first condition, $\bb_2 = 0$, and, with $\bb_2 = 0$, we are left with sub-option 1 
above.  \\
\paragraph{\textbf{Option 3}}
Now, consider $\pp_3 = 0$ and $\bb_3 \neq 0$.  Substituting the relevant forms of $\mp$ and $\mb$ 
into the anti-commuting condition, $\mb\mp + \mp\mb=0$, we find
\begin{equation}
	0=\pp_2\bb_3 \qquad \text{and} \qquad \bb_1\pp_2 = \pp_2 \bb_3 \bb_1 \bb_3^{-1}.
\end{equation}
This is identical to option 2 only $\bb$ and $\pp$ have been swapped.  Therefore,
we have a similar result: $\pp = 0$ such that we have sub-option 3 above. \\
\paragraph{\textbf{Option 4}}
The final case to consider is $\pp_3 = 0$ and $\bb_3 = 0$, where 
\begin{equation}
	\mp = \begin{pmatrix}0 & \pp_2 \\ 0 & 0\end{pmatrix} \quad \text{and}
		\quad \mb = \begin{pmatrix}0 & \bb_2 \\ 0 & 0\end{pmatrix}.
\end{equation}
These strictly upper-triangular matrices are equivalent to the strictly
lower-triangular matrices of sub-option 2 above.  Thus, again, we find no new cases to
carry forward.
\\ \\
To simplify the rest of the calculations, we will choose to use the transformation in 
\eqref{eq:N2_diagonalising_auto} for all generalised Bargmann algebras and all options.
Combining the case in which both $\mb$ and $\mp$ vanish with the non-vanishing
options, we find
\begin{equation}
		\begin{alignedat}{2}
		\text{Case 1:} \quad \mb &= 0 \quad \hspace*{26.25mm}  \mp = 0 \\
		\text{Case 2:} \quad \mb &= 0 \quad \hspace*{26.25mm} \mp = \begin{pmatrix} 0 & 0 \\
		\pp_3 & 0 \end{pmatrix}   \\
		\text{Case 3:} \quad \mb &= \begin{pmatrix} 0 & 0 \\ \bb_3 & 0 \end{pmatrix} \quad
		\hspace*{3.25mm} \quad \qquad \mp = \begin{pmatrix} 0 & 0 \\ \pp_3 & 0 \end{pmatrix}  \\
		\text{Case  4:} \quad \mb &= \begin{pmatrix} 0 & 0 \\ \bb_3 & 0 \end{pmatrix} 
		\hspace*{3.25mm} \quad \quad \qquad \mp = 0.
	\end{alignedat}
\end{equation}
In all cases, it is a straight-forward computation to show that 
$[\mb, \mp] = \mz$ tells us that $\mz = 0$.  Therefore, we are left with only $\mh$ to determine.  
From the results in Lemma \ref{lem:N2_001}, the conditions we have including $\mb$, $\mp$ and  $\mh$ are 
\begin{equation} \label{eq:general_BP}
[\mb,\mh] = \lambda \mb + \mu \mp \quad \text{and} \quad [\mp,\mh] = \eta \mb + \varepsilon \mp.
\end{equation}
\paragraph{\textbf{Case 1}} The vanishing of $\mb$ and $\mp$ in this instance, when substituted into \eqref{eq:general_BP},
means we do not obtain any conditions on $\mh$.  Thus, we find a branch with matrices
\begin{equation}
		\mb = \mp = \mz = 0 \quad \text{and} \quad \mh \quad \text{unconstrained} .
\end{equation}
\paragraph{\textbf{Case 2}}
Notice that the vanishing of $\mb$ means that the second condition in \eqref{eq:general_BP} becomes
\begin{equation}
	\varepsilon \begin{pmatrix} 0 & 0 \\ \pp_3 & 0 \end{pmatrix} = \begin{pmatrix} 0 & 0 \\ \pp_3 & 0 \end{pmatrix} 
	\begin{pmatrix} \hh_1 & \hh_2 \\ \hh_3 & \hh_4 \end{pmatrix} - \begin{pmatrix} \hh_1 & \hh_2 \\ \hh_3 & \hh_4 \end{pmatrix}
	\begin{pmatrix} 0 & 0 \\ \pp_3 & 0 \end{pmatrix}. 
\end{equation}
This gives us two constraints
\begin{equation}
	0 = \hh_2 \pp_3 \qquad \text{and} \qquad \varepsilon \pp_3 = \pp_3 \hh_1 - \hh_4 \pp_3.
\end{equation}
The first constraint here tells us that either $\hh_2$ or $\pp_3$ must vanish.  In the latter instance, we recover the matrices from case 1.  In the former instance, we can use the second constraint to write $\hh_4$ in terms of $\hh_1$ and 
obtain the matrices
\begin{equation}
	\mb = \mz = 0 \quad \mp = \begin{pmatrix} 0 & 0 \\ \pp_3 & 0 \end{pmatrix} \quad 
	\mh = \begin{pmatrix} \hh_1 & 0 \\ \hh_3 & \pp_3 \hh_1 \pp_3^{-1} - \varepsilon \end{pmatrix} .
\end{equation}
The first condition in \eqref{eq:general_BP} does not add any new branches to those already given
as, with $\mb = 0$, it reduces to $0 = \mu \mp$.  Therefore, for those generalised Bargmann algebras
with $\mu \neq 0$, it gives the branch identified in case 1, and, for those with $\mu = 0$, it
leaves $\pp_3$ free to fix $\hh_4$ as prescribed for the branch presented in this case.
\\
\paragraph{\textbf{Case 3}}
Substituting the $\mb$ and $\mp$ associated with this case into 
\eqref{eq:general_BP}, we get the following constraints
\begin{equation} \label{eq:branch4_h_constraints}
	0 = \hh_2 \bb_3 \qquad 0 = \hh_2 \pp_3 \qquad \lambda \bb_3 + \mu \pp_3 = \bb_3 \hh_1 - \hh_4 \bb_3 \qquad
	\eta \bb_3 + \varepsilon \pp_3 = \pp_3 \hh_1 - \hh_4 \pp_3.
\end{equation}
The first two constraints above tell us that if $\hh_2 \neq 0$, then we again arrive at the branch with
$\mb= \mp = \mz = 0$ and $\mh$ unconstrained. Letting $\hh_2 = 0$, we focus on the second two constraints.
Notice, for this branch to be distinct from the other two, we require $\bb_3 \neq 0$ and $\pp_3 \neq 0$.  
These assumptions allow us to take inverses of both $\bb_3$ and $\pp_3$ in the following calculations.
Multiplying the third constraint on the right by $\bb_3^{-1}$, we can rearrange for $\hh_4$ and substitute 
this into the fourth constraint to get
\begin{equation}
	\eta \bb_3 + \varepsilon \pp_3 = \pp_3 \hh_1 - \bb_3 \hh_1 \bb_3^{-1} \pp_3 + \mu \pp_3 \bb_3^{-1} \pp_3 
	+ \lambda \pp_3.
\end{equation}
Multiplying this expression by $\bb_3^{-1}$ on the left and rearranging, we find
\begin{equation}
	[\uu, \hh_1] = -\mu \uu^2 + (\varepsilon - \lambda) \uu + \eta,
\end{equation}
where $\uu = \bb_3^{-1} \pp_3$.  Alternatively, we could have chosen to multiply the fourth condition on
the right by $\pp_3^{-1}$ to get our expression for $\hh_4$ and substituted this into the third constraint.
Multiplying this on the left by $\pp_3^{-1}$ produces the similar condition
\begin{equation}
	[\vv, \hh_1] = \eta \vv^2 + (\lambda - \varepsilon) \vv + \mu,
\end{equation}
where $\vv = \pp_3^{-1} \bb_3$.  Depending on the generalised Bargmann algebra in question, one
of these will prove more useful than the other. We will leave these constraints in this form to be analysed 
separately for each generalised Bargmann algebra.
\\
\paragraph{\textbf{Case 4}}The calculations for this case are nearly identical to those for case 2.
The vanishing of $\mp$ means that the first constraint in \eqref{eq:general_BP} produces
\begin{equation}
	0 = \hh_2 \bb_3 \qquad \text{and} \qquad \lambda \bb_3 = \bb_3 \hh_1 - \hh_4 \bb_3.
\end{equation}
From the first expression above, if $\hh_2 \neq 0$, we recover the branch presented in case 1.
However, setting $\hh_2 = 0$, $\bb_3$
is general, and we can use the second constraint to write $\hh_4$ in terms of $\hh_1$ and $\bb_3$:
\begin{equation}
	\mp = \mz = 0 \quad \mb = \begin{pmatrix} 0 & 0 \\ \bb_3 & 0 \end{pmatrix} \quad 
	\mh = \begin{pmatrix} \hh_1 & 0 \\ \hh_3 & \bb_3 \hh_1 \bb_3^{-1} - \lambda \end{pmatrix}.
\end{equation}
The second constraint in \eqref{eq:general_BP} does not produce any new branches for
$\mb, \mp,$ and $\mh$.  Substituting in $\mp = 0$,  it becomes $0 = \eta \mb$.  Therefore, 
if $\eta \neq 0$, $\mb$ must vanish leaving the branch from case 1; and, if $\eta =0$,
$\bb_3$ is left free so we can write $\hh_4$ as prescribed for the branch presented
here.
\\ \\
In summary, we have the following three branches for all generalised Bargmann algebras
\begin{enumerate}
	\item $\mb=\mp=\mz=0 \quad \text{and} \quad \mh \quad \text{unconstrained}$
	\item $\mb = \mz = 0 \quad \mp = \begin{pmatrix} 0 & 0 \\ \pp_3 & 0 \end{pmatrix} \quad 
	\mh = \begin{pmatrix} \hh_1 & 0 \\ \hh_3 & \pp_3 \hh_1 \pp_3^{-1} - \varepsilon\end{pmatrix}$
	\item  $\mp = \mz = 0 \quad \mb = \begin{pmatrix} 0 & 0 \\ \bb_3 & 0 \end{pmatrix} \quad 
	\mh = \begin{pmatrix} \hh_1 & 0 \\ \hh_3 & \bb_3 \hh_1 \bb_3^{-1} - \lambda \end{pmatrix}$.
\end{enumerate}
There is also a possible fourth branch depending on the generalised Bargmann algebra:
\begin{equation}
	\mz = 0 \quad \mh = \begin{pmatrix} \hh_1 & 0 \\ \hh_3 & \hh_4 \end{pmatrix} \quad 
	\mb = \begin{pmatrix} 0 & 0 \\ \bb_3 & 0 \end{pmatrix} \quad 
	\mp = \begin{pmatrix} 0 & 0 \\ \pp_3 & 0 \end{pmatrix},
\end{equation}
subject to
\begin{equation}
	[\uu, \hh_1] = -\mu \uu^2 + (\varepsilon - \lambda) \uu + \eta \quad 
	\text{and} \quad [\vv, \hh_1] = \eta \vv^2 + (\lambda - \varepsilon) \vv + \mu
\end{equation}
where $\uu = \bb_3^{-1} \pp_3$ and $\vv = \pp_3^{-1} \bb_3$.
\subsection{Classification} \label{subsec:N2_class}
In this section, we complete the story started in section \ref{subsec:N2_branches}.  Each branch we identified in section \ref{subsec:N2_branches} encodes the possible $[\s_{\bar{0}}, \s_{\bar{1}}]$ brackets for a generalised Bargmann superalgebra $\s$.  Here, we take each branch in turn and find corresponding $[\Q, \Q]$ brackets.  Since
our interests are in supersymmetry, we will always impose the condition that $[\Q, \Q] \neq 0$;
therefore, we are only interested in branches for which at least one of the $N_i$ matrices
does not vanish.  Note, the imposition of this condition means that the various branches 
 identified here belong to the sub-variety $\cS$ of the real
algebraic variety cut out by the super-Jacobi identities $\cJ \subset \cV$.
\\ \\
We will begin our investigation into each branch by stating the associated matrices, $\mb, \mp, \mh, \mz \in \Mat_2(\mathbb{H})$.  These matrices are then substituted into the conditions from Lemmas \ref{lem:N2_011} and \ref{lem:N2_111}, which use the Lie brackets of the universal generalised Bargmann algebra.  This process produces a system of equations containing $\mb, \mp, \mh, \mz$ encoding the $[\s_{\bar{0}}, \s_{\bar{1}}]$ components of the bracket, the matrices $N_i$ for $ i \in \{0, 1, ..., 4\}$ encoding the $[\s_{\bar{1}}, \s_{\bar{1}}]$ components of the bracket, and the four parameters of the universal generalised Bargmann algebra, $\lambda, \mu, \eta, \varepsilon \in \mathbb{R}$.  Any conditions which do not contain one of the parameters $\lambda, \mu, \eta, \varepsilon$ are analysed and possible dependencies among the $N_i$ matrices are found.  Once these dependencies have been established, we start setting parameters to consider the various generalised Bargmann algebras.  In branches 1 and 2, we will see that multiple generalised Bargmann algebras produce the same set of conditions.  In these instances, we will highlight the relevant algebras but only analyse the system once to avoid repetition.  
\\ \\
In branches $\mathsf{2}$, $\mathsf{3}$ and $\mathsf{4}$, we find that the vanishing of certain matrices $N_i$ imposes the vanishing of other $N_i$. Thus, we end up with a chain of dependencies, which lead to different sub-branches.  These sub-branches will be labelled such that sub-branches with a larger branch number will have more non-vanishing matrices $N_i$.  For example, sub-branch $\mathsf{2.2}$ may have non-vanishing $N_0$ and $N_1$, but sub-branch $\mathsf{2.3}$ may additionally have non-vanishing $N_3$. Within each sub-branch, we regularly find two options: one in which $N_0$ vanishes, leaving $\mh$ free, and one in which $\mh=0$ such that $N_0$ is unconstrained.  Using sub-branch $\mathsf{2.2}$ as our example, the former instance, with $N_0 = 0$, will be labelled $\mathsf{2.2.i}$, and the latter instance will be labelled $\mathsf{2.2.ii}$.  In branch 4, we will find some instances in which both $N_0$ and $\mh$ can be non-vanishing.  Using sub-branch 4.3 as an example, we will label these cases as $\mathsf{4.3.iii}$.
\\ \\
Each sub-branch is designed to have a unique set of non-vanishing matrices.  However, the components within the matrices are not completely fixed by the super-Jacobi identities.  Therefore, each sub-branch is given as a tuple $(\mathcal{M}_{\k , \, \mathsf{X} }, \mathcal{C}_{\k , \, \mathsf{X} } )$, where $\k$ labels the underlying generalised Bargmann algebra, and $\mathsf{X}$ will be the branch number.  This tuple consists of $\mathcal{M}$, the subset of matrices in $\{\mb, \mp, \mh, \mz, N_0, N_1, N_2, N_3, N_4\}$ describing the branch, and $\mathcal{C}$, the set of constraints on the components of the matrices.  After stating $(\mathcal{M}, \mathcal{C})$ for a given sub-branch, we proceed to a discussion on possible parameterisations of the super-extensions in the sub-branch.  In particular, the aim of these discussions is to highlight the existence of super-extensions in the sub-branch.  First we set as many of the parameters to zero as possible.  In general, this will involve setting $\mh$ to zero along with a small number of components in the matrices $N_i$.  Then, using any residual transformations in the group $\G \subset \GL(\s_{\bar{0}}) \times \GL(\s_{\bar{1}})$, we fix the remaining parameters.  Once the existence of super-extensions has been established, we introduce some other parameters to produce further examples of generalised Bargmann superalgebras contained within the sub-branch.
\\ \\
Recall, we build the $[\Q, \Q]$ bracket from the $N_i$ matrices as follows
\begin{equation}
	[\sQ(\vt), \sQ(\vt)] = \Re(\vt N_0\vt^\dagger) \sH + \Re(\vt N_1\vt^\dagger) \sZ - \sJ(\vt N_2\vt^\dagger)  - \sB(\vt N_3\vt^\dagger)  - \sP(\vt N_4\vt^\dagger),
\end{equation}
where $N_0$ and $N_1$ are quaternion Hermitian, $N_i^\dagger = N_i$, and $N_2$, $N_3$ and $N_4$ are quaternion skew-Hermitian,
$N_j^\dagger = - N_j$.  Throughout this section, we will use the following forms for the quaternion Hermitian matrices:
\begin{equation} \label{eq:N2_C_H_Herm_Mats}
	N_0 = \begin{pmatrix} a & \qq \\ \bar{\qq} & b \end{pmatrix} \quad \text{and} \quad 
	N_1 = \begin{pmatrix} c & \rr \\ \bar{\rr} & d \end{pmatrix},
\end{equation}
where $a, b, c, d \in \mathbb{R}$, and $\qq, \rr \in \mathbb{H}$.  The
quaternion skew-Hermitian matrices will be defined
\begin{equation} \label{eq:N2_C_H_SkewHerm_Mats}
	N_3 = \begin{pmatrix} \ee & \ff \\ -\bar{\ff} & \gg \end{pmatrix} \quad \text{and} \quad 
	N_4 = \begin{pmatrix} \nn & \mm \\ -\bar{\mm} & \ll \end{pmatrix},
\end{equation}
where $\ee, \gg, \nn, \ll \in \Im(\mathbb{H})$ and $\ff, \mm \in \mathbb{H}$.
We will only briefly need to consider parts of the $N_2$ matrix explicitly; therefore, we will define its components as necessary.
\subsubsection{Branch 1} \label{subsubsec:N2_C_branch1}
\begin{equation}
	\mb=\mp=\mz=0 \quad \text{and} \quad \mh \quad \text{unconstrained}.
\end{equation}
Using the remaining conditions from the $(\s_{\bar{0}}, \s_{\bar{1}}, \s_{\bar{1}})$ and $(\s_{\bar{1}},
\s_{\bar{1}}, \s_{\bar{1}})$ super-Jacobi identities, we can look to find some expressions for the matrices $N_i$.  The
conditions derived from the $[\B, \Q, \Q]$ identity in Lemma \ref{lem:N2_011} immediately give 
$N_4 = 0$ due to the vanishing of $\mb$. Similarly, the $[\P, \Q, \Q]$ conditions give us $N_3 = 0$ due to the vanishing of $\mp$.  We are thus left with
\begin{equation}
	\begin{split}
		0 &= \mh N_i + N_i \mh^\dagger \quad i \in \{0, 1, 2\} \\
		0 &= \Re(\vt N_0\vt^\dagger) \vt \mh - \tfrac12 \vt N_2 \vt^\dagger \vt
	\end{split} \qquad 
	\begin{split}
		0 &= \mu \Re(\vt N_0 \vt^\dagger) \\
		0 &= \eta \Re(\vt N_0 \vt^\dagger) \\
		0 &= \lambda \Re(\vt N_0\vt^\dagger) \beta + \tfrac12 [\beta, \vt N_2 \vt^\dagger ]  \\
		0 &= \varepsilon \Re(\vt N_0\vt^\dagger) \pi + \tfrac12 [\pi, \vt N_2 \vt^\dagger] \\
	\end{split} \qquad \forall \beta, \pi \in \Im(\mathbb{H}), \; \forall \vt \in \mathbb{H}^2.
\end{equation}
Since $[\cc, \dd]$ is perpendicular to both $\cc$ and $\dd$ for $\cc, \dd \in \mathbb{H}$, the final two
conditions can be reduced to 
\begin{equation}
	0 = \lambda \Re(\vt N_0\vt^\dagger) \qquad 0 = [\beta, \vt N_2 \vt^\dagger ]  \qquad
	0 = \varepsilon \Re(\vt N_0\vt^\dagger) \qquad 0 = [\pi, \vt N_2 \vt^\dagger] .
\end{equation}
Substituting $\vt = (1, 0)$, $\vt = (0, 1)$, and $\vt = (1, 1)$ into the $N_2$ conditions above, we find that 
\begin{equation}
N_2 = \begin{pmatrix} 0 & e \\ -e & 0\end{pmatrix},
\end{equation}
where $e \in \mathbb{R}$.  Now substituting $\vt = (1, \ii)$ into the $N_2$ conditions, we find
\begin{equation}
0 = -2 e [\beta, \ii].
\end{equation}
We can choose any $\beta \in \Im(\mathbb{H})$; therefore, we may choose
$\beta = \jj$.  Thus we find that $e$ must vanish, making $N_2 = 0$.
This result reduces the conditions further:
\begin{equation} \label{eq:N2_branch1_general_conditions}
	\begin{split}
		0 &= \mh N_i + N_i \mh^\dagger \quad i \in \{0, 1\} \\
		0 &= \Re(\vt N_0\vt^\dagger) \vt \mh 
	\end{split} \qquad 
	\begin{split}
		0 &= \mu \Re(\vt N_0 \vt^\dagger) \\
		0 &= \eta \Re(\vt N_0 \vt^\dagger) \\
		0 &= \lambda \Re(\vt N_0\vt^\dagger) \beta \\
		0 &= \varepsilon \Re(\vt N_0\vt^\dagger) \pi \\
	\end{split} \qquad \forall \beta, \pi \in \Im(\mathbb{H}), \; \forall \vt \in \mathbb{H}^2.
\end{equation}
Focussing on the conditions common to all generalised Bargmann algebras,
i.e.\ those conditions which do not contain $\lambda$, $\mu$, $\eta$, or $\varepsilon$,
we have only
\begin{equation}
	\begin{split}
		0 &= \mh N_i + N_i \mh^\dagger \quad i \in \{0, 1\} \\
		0 &= \Re(\vt N_0\vt^\dagger) \vt \mh.
	\end{split}
\end{equation}
Since the second condition must hold for all $\vt \in \mathbb{H}^2$, we find
that either
\begin{enumerate}[label=(\roman*)]
	\item $N_0 = 0$ and $\mh \neq 0$, or
	\item $N_0 \neq 0$ and $\mh = 0$.
\end{enumerate}
We can now split this analysis in two depending on the generalised
Bargmann algebra of interest. First, we will discuss the algebras in which at least 
one of the parameters $\lambda, \mu, \eta, \varepsilon$
are non-vanishing.  Subsequently, we will consider the algebras in which 
all of these parameters vanish.  The former instance encapsulates 
$\hat{\n}_{\pm}$ and $\hat{\g}$, and the latter encapsulates $\hat{\a}$.  
\\
\paragraph{\textit{$\hat{\n}_{\pm}$ and $\hat{\g}$}}~\\ \\
All of these algebras have non-vanishing values for at least one of the parameters,
$\lambda, \mu, \eta, \varepsilon$.  Therefore, all have the conditions for branch $\mathsf{1}$ reduce to
\begin{equation}
	\begin{split}
		0 &= \mh N_i + N_i \mh^\dagger \quad i \in \{0, 1\} \\
		0 &= \Re(\vt N_0 \vt^\dagger) \\
		0 &= \Re(\vt N_0\vt^\dagger) \vt \mh.
	\end{split}
\end{equation}
Substituting $\vt = (1, 0)$, $\vt = (0, 1)$, and $\vt = (1, 1)$ 
into the second condition above, we find that 
\begin{equation}
N_0 = \begin{pmatrix} 0 & \Im(\qq) \\ -\Im(\qq) & 0 \end{pmatrix}.
\end{equation}
Now substitute $\vt = (1, \ii)$ into this condition
using the convention that $\qq = q_1 \ii + q_2 \jj + q_3 \kk$
to find
\begin{equation}
 	0 = \Re(\ii \bar{\qq}) = q_1.
\end{equation}
Using $\vt = (1, \jj)$ and $\vt= (1, \kk)$, we get analogous expressions for $q_2$ and $q_3$, so $\qq = 0$.
Therefore, $N_0 = 0$, and we cannot produce a super-extension in sub-branch $\mathsf{1.ii}$ for
these generalised Bargmann algebras.
\\ \\
The only remaining matrices are $\mh$ and $N_1$, such that
\begin{equation}
0 = \mh N_i + N_i \mh^\dagger,
\end{equation}
with no constraints on $\mh$ and $N_1 = N_1^\dagger$.  So far, we have not used any basis transformations
for this branch; therefore, we can choose $N_1$ to be the canonical quaternion Hermitian form, $\mathbb{1}$.
The above condition then states that $\mh^\dagger = - \mh$.  Thus, this branch produces one non-empty sub-branch for $\hat{\n}_\pm$ and $\hat{\g}$, with the set of non-vanishing \hypertarget{N2_ng_1}{matrices} given by
\begin{equation}
 \mathcal{M}_{\hat{\n}_\pm\,  \text{and} \, \hat{\g}, \, \mathsf{1.i}} = \Big\{ \mh = \begin{pmatrix} \hh_1 & \hh_2 \\ -\bar{\hh_2} & \hh_3 \end{pmatrix}, \quad  N_1 = \begin{pmatrix}
	1 & 0 \\ 0 & 1
\end{pmatrix} \Big\} .
\end{equation}
Although already explicit in the forms of $\mh$ and $N_1$, we note that the set of constraints for this sub-branch is
\begin{equation}
	\mathcal{C}_{\hat{\n}_\pm\,  \text{and} \, \hat{\g}, \, \mathsf{1.i}} = \{ \mh^\dagger = - \mh, \quad  N_1 = N_1^\dagger \} .
\end{equation}
Our only comment on $\mh$ going into the analysis of this branch was that it was unconstrained; therefore, we may choose to have $\mh = 0$.  Thus there is certainly a super-extension in this sub-branch, one with only $N_1 = \mathbb{1}$ non-vanishing.  However, wanting to introduce some more parameters, we may let $\hh_1$, $\hh_2$ and $\hh_3$ be non-vanishing.  These quaternions can be fixed using the group of basis transformations $\G \subset \GL(\s_{\bar{0}})\times\GL(\s_{\bar{1}})$ by noticing that $\mh^\dagger = -\mh$ tells us that $\mh \in \sp(2)$.  Therefore, the residual $\Sp(2) \subset \GL(\s_{\bar{1}})$ which fixes $N_1 = \mathbb{1}$ acts on $\mh$ via the adjoint action of $\Sp(2)$ on its Lie algebra.  Thus, we can make $\mh$ diagonal and choose the two imaginary quaternions parameterising it, arriving at 
\begin{equation}
	 \mh = \begin{pmatrix} \ii & 0 \\ 0 & \jj \end{pmatrix} \quad \text{and} \quad N_1 = \begin{pmatrix}
	1 & 0 \\ 0 & 1
\end{pmatrix} .
\end{equation}
\paragraph{$\hat{\a}$} ~\\ \\
Since $\hat{\a}$ has $\lambda = \mu = \eta = \varepsilon = 0$, the conditions in \eqref{eq:N2_branch1_general_conditions} become
\begin{equation}
	\begin{split}
		0 &= \mh N_i + N_i \mh^\dagger \quad i \in \{0, 1\} \\
		0 &= \Re(\vt N_0\vt^\dagger) \vt \mh.
	\end{split}
\end{equation}
Unlike the $\hat{\n}_{\pm}$ and $\hat{\g}$ case, these conditions do not instantly
set $N_0=0$; therefore, we may have super-extensions with either $(\mathsf{i})$ $N_0 = 0$ and $\mh \neq 0$, or $(\mathsf{ii})$ $N_0 \neq 0$ and $\mh=0$.
First, setting $\mh \neq 0$, we know this imposes $N_0 = 0$, and,
as in the $\hat{\n}_\pm$ and $\hat{\g}$ case, we may use the basis transformations to set 
$N_1 = \mathbb{1}$, such that $\mh^\dagger = -\mh$.  
Therefore, one of the possible super-extensions for $\hat{\a}$ has non-vanishing \hypertarget{N2_a_1i}{matrices}
\begin{equation} 
 	\mathcal{M}_{\hat{\a}, \, \mathsf{1.i}} = \Big\{ \mh = \begin{pmatrix} \hh_1 & \hh_2 \\ -\bar{\hh_2} & \hh_3 \end{pmatrix}, \quad  N_1 = \begin{pmatrix}
	1 & 0 \\ 0 & 1
\end{pmatrix} \Big\} .
\end{equation}
As before, the set of conditions for this super-extension is
\begin{equation}
		\mathcal{C}_{\hat{\a}, \, \mathsf{1.i}} = \{ \mh^\dagger = -\mh, \quad N_1^\dagger = N_1 \} ,
\end{equation}
and we can use $\G$ to fix the quaternions in $\hh$.  Alternatively, setting $N_0 \neq 0$, we need $\mh= 0$.  
Thus the second possible super-extension in this
\hypertarget{N2_a_1ii}{branch} has
\begin{equation}
	\mathcal{M}_{\hat{\a}, \, \mathsf{1.ii}} = \Big\{ N_0 = \begin{pmatrix}
		a & \qq \\ \bar{\qq} & b
	\end{pmatrix},\quad N_1 = \begin{pmatrix}
		c & \rr \\ \bar{\rr} & d
	\end{pmatrix} \Big\} \quad \text{and} \quad 
	\mathcal{C}_{\hat{\a}, \, \mathsf{1.ii}} = \{ N_0^\dagger = N_0, \quad N_1^\dagger = N_1 \} .		
\end{equation}
Since the primary constraint on these matrices is that both be non-vanishing, we can choose to have $b$, $\qq$, $c$ and $\rr$ vanish.  Using the scaling symmetry of the $\s_{\bar{0}}$ basis elements present in $\G \subset \GL(\s_{\bar{0}}) \times \GL(\s_{\bar{1}})$, we can write down the super-extension
\begin{equation}
	N_0 = \begin{pmatrix}
		1 & 0 \\ 0 & 0
	\end{pmatrix}\quad N_1 = \begin{pmatrix}
		0 & 0 \\ 0 & 1
	\end{pmatrix}.
\end{equation}
Therefore, this sub-branch is not empty.  Additionally, we may choose to keep all the parameters in the matrices of $\mathcal{M}_{\hat{\a}, \, \mathsf{1.ii}}$ and use the basis transformations to fix them.  In particular, we can let $N_0 = \mathbb{1}$.  This choice leaves us with a residual $\Sp(2)$ action with which to fix the parameters of $N_1$, which may give us $N_1 = \mathbb{1}$.  
\subsubsection{Branch 2} \label{subsubsec:N2_C_branch2}
\begin{equation}
	\mb = \mz = 0 \quad \mp = \begin{pmatrix} 0 & 0 \\ \pp_3 & 0 \end{pmatrix} \quad 
	\mh = \begin{pmatrix} \hh_1 & 0 \\ \hh_3 & \pp_3 \hh_1 \pp_3^{-1} - \varepsilon\end{pmatrix}.
\end{equation}
As above, it is useful to exploit the vanishing matrices of the branch to simplify the conditions from Lemmas \ref{lem:N2_011} and \ref{lem:N2_111}.  In particular, the $[\B, \Q, \Q]$ super-Jacobi identity tells us $N_4 = 0$ due to the vanishing of $\mb$.  The rest of the $[\B, \Q, \Q]$ conditions tells us that
\begin{equation}
	\begin{split}
		0 &= \lambda \Re(\vt N_0 \vt^\dagger) \beta + \tfrac12 [\beta, \vt N_2 \vt^\dagger] \\
		0 &= \mu \Re(\vt N_0\vt^\dagger) \beta .			
	\end{split}
\end{equation}
The $[\P, \Q, \Q]$ conditions become
\begin{equation}
	\begin{split}
		0 &= \mp N_0 - N_0 \mp^\dagger \\	
		-N_3 &= \mp N_1 - N_1 \mp^\dagger \\
		0 &= \pi \vt \mp N_2 \vt^\dagger + \vt N_2 (\pi \vt \mp)^\dagger \\
		\eta \Re(\vt N_0\vt^\dagger) \pi &= \pi \vt \mp N_3 \vt^\dagger + \vt N_3 (\pi \vt \mp)^\dagger \\
		0 &= \varepsilon \Re(\vt N_0 \vt^\dagger) \pi + \tfrac12 [\pi, \vt N_2 \vt^\dagger]. \\
	\end{split}
\end{equation}
Since the conditions from the $[\bZ,\Q, \Q]$ identity are all satisfied due to $\mz = 0$, the final conditions are
\begin{equation}
	\begin{split}
		0 &= \mh N_i + N_i \mh^\dagger \quad \text{where} \quad i \in \{0, 1, 2\} \\
		\lambda N_3 &= \mh N_3 + N_3 \mh^\dagger \\
		0 &= \mu N_3, \\
	\end{split}
\end{equation}
from $[\bH, \Q, \Q]$.  The result from Lemma \ref{lem:N2_111} then gives us 
\begin{equation}
	\Re(\vt N_0 \vt^\dagger) \vt \mh = \tfrac12 \vt N_2 \vt^\dagger \vt.
\end{equation}
As in branch 1, the conditions
\begin{equation}
			0 = \lambda \Re(\vt N_0 \vt^\dagger) \beta + \tfrac12 [\beta, \vt N_2 \vt^\dagger] \quad \text{and} \quad 
			0 = \varepsilon \Re(\vt N_0 \vt^\dagger) \pi + \tfrac12 [\pi, \vt N_2 \vt^\dagger],
\end{equation}
tell us that $N_2 = 0$.  Therefore, the conditions reduce further to
\begin{equation}  \label{eq:N2_C_branch2_constraints}
	\begin{split}
		0 &= \mu N_3 \\
		0 &= \mh N_i + N_i \mh^\dagger \quad \text{where} \quad i \in \{0, 1\} \\
		0 &= \mp N_0 - N_0 \mp^\dagger \\
		0 &= \lambda \Re(\vt N_0 \vt^\dagger) \beta\\
		0 &= \mu \Re(\vt N_0\vt^\dagger) \beta \\
		0 &= \varepsilon \Re(\vt N_0 \vt^\dagger) \pi \\
		0 &= \Re(\vt N_0\vt^\dagger) \vt \mh 
	\end{split}
	\begin{split}
		\lambda N_3 &= \mh N_3 + N_3 \mh^\dagger \\
		-N_3 &= \mp N_1 - N_1 \mp^\dagger \\
		\eta \Re(\vt N_0\vt^\dagger) \pi &= \pi \vt \mp N_3 \vt^\dagger + \vt N_3 (\pi \vt \mp)^\dagger \\
	\end{split} \quad \forall \beta, \pi \in \Im(\mathbb{H}), \forall \vt \in \mathbb{H}^2.
\end{equation}
We can now use the following two conditions common to all generalised Bargmann algebras to highlight
the possible sub-branches:
\begin{equation}
		-N_3 = \mp N_1 - N_1 \mp^\dagger \quad \text{and} \quad
		0 = \Re(\vt N_0\vt^\dagger) \vt \mh.
\end{equation}
Substituting the $N_1$ from \eqref{eq:N2_C_H_Herm_Mats} and the
$N_3$ from \eqref{eq:N2_C_H_SkewHerm_Mats} into the first condition here, we can write
\begin{equation}
	N_3 = \begin{pmatrix} 0 & c\bar{\pp_3} \\ - c\pp_3 & \bar{\rr}\bar{\pp_3} - \pp_3 \rr \end{pmatrix}.
\end{equation}
This result tells us that $N_3$ is dependent on $N_1$: if $N_1 = 0$ then
$N_3=0$.  Therefore, we may organise our investigation into the possible super-extensions by
considering each of the following sub-branches in turn
\begin{enumerate}
	\item $N_1 = 0$ and $N_3 = 0$,
	\item $N_1 \neq 0$ and $N_3 = 0$,
	\item $N_1 \neq 0$ and $N_3 \neq 0$.
\end{enumerate}
Next, consider the condition from the $[\Q, \Q, \Q]$ identity:
\begin{equation}
	0 = \Re(\vt N_0 \vt^\dagger) \vt \mh.
\end{equation}
Notice, this is identical to the condition from the $[\Q, \Q, \Q]$ identity we found in \hyperlink{N2_a_1i}{branch 1}.  Therefore, as before, we have two cases to consider in each sub-branch: 
\begin{enumerate}[label=(\roman*)]
	\item $N_0 = 0$ and $\mh \neq 0$, and
	\item  $N_0 \neq 0$ and $\mh = 0$.
\end{enumerate}
We will now consider each generalised Bargmann algebra in turn to determine
whether they have super-extensions associated to these sub-branches.
\\
\paragraph{$\hat{\a}$} ~\\ \\
In addition to the conditions already discussed in producing the possible sub-branches,
\begin{equation} \label{eq:N2_C_branch2_a_conditions1}
	-N_3 = \mp N_1 - N_1 \mp^\dagger \quad \text{and} \quad
		0 = \Re(\vt N_0\vt^\dagger) \vt \mh,
\end{equation}
substituting $\lambda = \mu = \eta = \varepsilon = 0$ into \eqref{eq:N2_C_branch2_constraints} leaves us with
\begin{equation} \label{eq:N2_C_branch2_a_conditions2}
	\begin{split}
		0 &= \mh N_i + N_i \mh^\dagger \quad \text{where} \quad i \in \{0, 1, 3\} \\
		0 &= \mp N_0 - N_0 \mp^\dagger \\
		0 &= \pi \vt \pp N_3 \vt^\dagger + \vt N_3 (\pi \vt \mp)^\dagger. \\
	\end{split}
\end{equation}
None of these conditions force the vanishing of any more $N_i$; therefore, \textit{a priori}
we may find super-extensions in each of the sub-branches.  The only restriction
to the matrices so far has been the re-writing of $N_3$:
\begin{equation}
	N_3 = \begin{pmatrix} 0 & c\bar{\pp_3} \\ - c\pp_3 & \bar{\rr}\bar{\pp_3} - \pp_3 \rr \end{pmatrix}.
\end{equation}
\paragraph{\textbf{Sub-branch 2.1}} Setting $N_1 = N_3 = 0$, we are left with only $N_0$, subject to 
\begin{equation}
		0 = \mh N_0 + N_0 \mh^\dagger \quad \text{and} \quad
		0 = \mp N_0 - N_0 \mp^\dagger.
\end{equation}
We know that we may have two possible cases for this sub-branch: either $(\mathsf{i})$ $N_0 = 0$ and $\mh \neq 0$,
or $(\mathsf{ii})$ $N_0 \neq 0$ and $\mh = 0$.  Since we need $N_0 \neq 0$ for a supersymmetric extension,
we must have the latter case.  This leaves only the second condition above with which to 
restrict the form of $N_0$.  Since $\pp_3 \neq 0$, this tells us
\begin{equation} \label{eq:case2_N0_constraints}
	0 = a \quad \text{and} \quad 0 = \pp_3\qq - \bar{\qq} \bar{\pp_3}.
\end{equation}
Thus the \hypertarget{N2_a_21}{sub-branch} is given by
\begin{equation}
	 \mathcal{M}_{\hat{\a}, \, \mathsf{2.1.ii}} = \Big\{ \mp = \begin{pmatrix} 0 & 0 \\ \pp_3 & 0 \end{pmatrix}, \quad 
	N_0 = \begin{pmatrix} 0 & \qq \\ \bar{\qq} & b \end{pmatrix} \Big\} \quad \text{and} \quad 
	\mathcal{C}_{\hat{\a}, \, \mathsf{2.1.ii}} = \{ 0 =  \pp_3\qq - \bar{\qq} \bar{\pp_3} \}.
\end{equation}
This sub-branch is parameterised by two collinear quaternions $\pp_3$ and $\qq$, and a single real scalar $b$, such that it defines an 8-dimensional space in the sub-variety $\cS$.  Notice that we can choose either $\qq = 0$ or $b = 0$ and this sub-branch remains supersymmetric.  Choosing the former case, we can use the endomorphisms of $\s_{\bar{1}}$ to set $\pp_3 = \ii$ and the scaling symmetry of $\sH$ to produce
\begin{equation}
	 \mp = \begin{pmatrix} 0 & 0 \\ \ii & 0 \end{pmatrix} \quad \text{and} \quad
	N_0 = \begin{pmatrix} 0 & 0 \\ 0 & 1 \end{pmatrix}.
\end{equation}
In the latter case, we can still choose $\pp_3 = \ii$, and the condition in $\mathcal{C}_{\hat{\a}, \, \mathsf{2.1.ii}}$ will impose that $\qq$ must also lie along $\ii$.  Again using the scaling symmetry of $\sH$ in $\G \subset \GL(\s_{\bar{0}}) \times \GL(\s_{\bar{1}})$, we arrive at
\begin{equation}
	 \mp = \begin{pmatrix} 0 & 0 \\ \ii & 0 \end{pmatrix} \quad \text{and} \quad 
	N_0 = \begin{pmatrix} 0 & \ii \\ -\ii & 0 \end{pmatrix}.
\end{equation}
These two examples turn out to be the only super-extensions in this sub-branch.  Keeping both $b$ and $\qq$ at the outset, we can use the endomorphisms of $\s_{\bar{1}}$ to set $b = 0$ while imposing that $\pp_3$ and $\qq$ lie along $\ii$.  Thus, in this case, we could always retrieve the second example above.
\\
\paragraph{\textbf{Sub-branch 2.2}} Setting $N_1 \neq 0$ but keeping $N_3 = 0$, the conditions in  \eqref{eq:N2_C_branch2_a_conditions1} and \eqref{eq:N2_C_branch2_a_conditions2} become
\begin{equation}
	\begin{split}
		0 &= \mp N_i - N_i \mp^\dagger  \\
		0 &= \mh N_i + N_i \mh^\dagger
	\end{split}  \quad \text{where} \quad i \in \{0, 1\}.
\end{equation}
Importantly, we can now have super-extensions in either of the two cases: $(\mathsf{i})$ $N_0 = 0$ and $\mh \neq 0$ , or $(\mathsf{ii})$ $N_0 \neq 0$ and $\mh = 0$.  In the former case, in which $N_0 = 0$, \eqref{eq:N2_C_branch2_a_conditions1} and \eqref{eq:N2_C_branch2_a_conditions2} become
\begin{equation}
		0 = \mp N_1 - N_1 \mp^\dagger \quad \text{and} \quad 
		0 = \mh N_1 + N_1 \mh^\dagger.
\end{equation}
The first of these conditions tells us that 
\begin{equation}
	N_1 = \begin{pmatrix}
		0 & \rr \\ \bar{\rr} & d
	\end{pmatrix},
\end{equation}
such that $0 = \pp_3\rr - \bar{\rr}\bar{\pp_3}$.  Substituting this $N_1$ into the latter condition, we find
\begin{equation}
	\begin{split}
		0 &= \hh_1 \rr + \rr \overbar{\pp_3\hh_1\pp_3^{-1}} \\
		0 &= \Re(\hh_3\rr) + d \Re(\hh_1).
	\end{split}
\end{equation}
Assuming $\rr \neq 0$ and $\hh_1 \neq 0$, take the real part of the first constraint to get $\Re(\hh_1) = 0$.  Alternatively, with $\rr = 0$, $d \neq 0$ for $N_1 \neq 0$; therefore, the second constraint would also impose $\Re(\hh_1) = 0$.  This result allows us to simply the constraints to 
\begin{equation}
		0 = \Re(\hh_1) \qquad 0 = [\hh_1, \rr\pp_3] \qquad 0 = \Re(\hh_3\rr).
\end{equation}
In fact, the second constraint above is satisfied by 
\begin{equation}
	0 = \pp_3\rr - \bar{\rr}\bar{\pp_3},
\end{equation}
so the set of constraints on this sub-branch becomes
\begin{equation}
	\mathcal{C}_{\hat{\a},\, \mathsf{2.2.i}} = \{ 0 = \Re(\hh_1), \quad 0 = \Re(\hh_3\rr), \quad 0 = \pp_3\rr - \bar{\rr}\bar{\pp_3} \}.
\end{equation}
Subject to these constraints, we have the following non-vanishing \hypertarget{N2_a_22ii}{matrices}
\begin{equation}
	\mathcal{M}_{\hat{\a},\, \mathsf{2.2.i}} = \Big\{ \mp = \begin{pmatrix} 0 & 0 \\ \pp_3 & 0 \end{pmatrix}, \quad 
	\mh = \begin{pmatrix} \hh_1 & 0 \\ \hh_3 & \pp_3 \hh_1 \pp_3^{-1} \end{pmatrix}, \quad
	N_1 = \begin{pmatrix} 0 & \rr \\ \bar{\rr} & d \end{pmatrix} \Big\}.
\end{equation}
This sub-branch consists of two collinear quaternions $\pp_3$ and $\rr$, one quaternion $\hh_3$ that is perpendicular to these two in $\Im(\mathbb{H})$, and one imaginary quaternion $\hh_1$.  In addition, there is a single real scalar, $d$.  Notice that if $\mh$ vanishes, we produce a system that is equivalent to the one found in sub-branch $\mathsf{2.1.ii}$; therefore, this sub-branch is certainly non-empty.  However to investigate the role of $\mh$, we will require at least one of its components to be non-vanishing.  To simplify $\mh$ as far as possible, let $\hh_1 = 0$.  Now we can choose either $\rr$ or $d$ to vanish while maintaining supersymmetry.  Letting $\rr = 0$, we can use the endomorphisms of $\s_{\bar{1}}$ on $\pp_3$ and $\hh_3$, and employ the scaling of $\sZ$ on $N_1$ to arrive at
\begin{equation}
	\mp = \begin{pmatrix} 0 & 0 \\ \ii& 0 \end{pmatrix}, \quad 
	\mh = \begin{pmatrix} 0 & 0 \\ \ii & 0 \end{pmatrix}, \quad
	N_1 = \begin{pmatrix} 0 & 0\\ 0 & 1 \end{pmatrix}.
\end{equation}
Thus, there exist super-extensions in this sub-branch for which $\mh \neq 0$.  Wanting to be more be a little more general, we can choose for only $\hh_3$ to vanish.  Then, using the endomorphisms in $\GL(\s_{\bar{1}})$, we can set $\rr = d \ii$ such that $\pp_3$ also lies along $\ii$.  Utilising the scaling symmetry of $\sP$ and $\sZ$ in $\GL(\s_{\bar{0}})$, we can remove the constants from the matrices $\mp$ and $N_1$ to get
\begin{equation}
	\mp = \begin{pmatrix}
		0 & 0 \\ \ii & 0
	\end{pmatrix} \quad \text{and} \quad
	N_1 = \begin{pmatrix}
		0 & \ii \\ -\ii & 1
	\end{pmatrix}.
\end{equation}
Employing the residual endomorphisms of $\s_{\bar{1}}$, we can now choose $\hh_1$ to lie along $\ii$. This change allows us to use the scaling symmetry of $\sH$ in $\GL(\s_{\bar{0}})$ such that $\mh$ becomes
\begin{equation}
	\mh = \begin{pmatrix}
		\ii & 0 \\ 0 & \ii
	\end{pmatrix}.
\end{equation}
Now, returning to the latter case, in which $\mh$ vanishes, we have only
\begin{equation}
		0 = \mp N_i - N_i \mp^\dagger \quad \text{where} \quad i \in \{0, 1\},
\end{equation}
which tells us that 
\begin{equation}
	N_0 = \begin{pmatrix}
		0 & \qq \\ \bar{\qq} & b
	\end{pmatrix} \quad \text{and} \quad N_1 = \begin{pmatrix}
		0 & \rr \\ \bar{\rr} & d
	\end{pmatrix},
\end{equation} 
where
\begin{equation}
	0 = \pp_3\qq - \bar{\qq}\bar{\pp_3} \quad \text{and} \quad 0 = \pp_3\rr - \bar{\rr}\bar{\pp_3}.
\end{equation} 
Therefore, the set of non-vanishing \hypertarget{N2_a_22ii}{matrices} is given by
\begin{equation}
	\mathcal{M}_{\hat{\a},\,\mathsf{2.2.ii}} = \Big\{ \mp = \begin{pmatrix} 0 & 0 \\ \pp_3 & 0 \end{pmatrix}, \quad 
	N_0 = \begin{pmatrix} 0 & \qq \\ \bar{\qq} & b \end{pmatrix}, \quad 
	N_1 = \begin{pmatrix} 0 & \rr \\ \bar{\rr} & d \end{pmatrix} \Big\},
\end{equation}
subject to 
\begin{equation}
	\mathcal{C}_{\hat{\a},\,\mathsf{2.2.ii}} = \{ 0 = \pp_3\qq - \bar{\qq} \bar{\pp_3} \quad \text{and} \quad 0 = \pp_3\rr - \bar{\rr}\bar{\pp_3} \}.
\end{equation}
Notice that the matrices $N_0$ and $N_1$ and the constraints on their components take the same form as the matrix $N_0$ and its constraints in sub-branch \hyperlink{N2_a_21ii}{$\mathsf{2.1.ii}$}.  However, this sub-branch is distinct.  Notice that, using the endomorphisms of $\s_{\bar{1}}$ and the conditions in $\mathcal{C}_{\hat{\a},\,\mathsf{2.2.ii}}$, we can make all the quaternions parameterising this sub-branch of $\cS$ lie along $\ii$.  The scaling symmetry of $\sP$ may then be employed to set $\pp_3 = \ii$, leaving only $b$ and $d$ unfixed.  The last of the endomorphisms of $\s_{\bar{1}}$ may set one of these parameters to zero, but not both; therefore, we cannot have $N_0 = N_1$, which would be a necessary condition for this sub-branch to be equivalent to $(\mathcal{M}_{\hat{\a},\,\mathsf{2.1.ii}}, \mathcal{C}_{\hat{\a},\,\mathsf{2.1.ii}})$.  However, we can fix all the parameters of this sub-branch.  Had we chosen $\qq = b \ii$ with the initial $\s_{\bar{1}}$ endomorphism and set $d = 0$, we could scale $\sH$ and $\sZ$ to find
\begin{equation}
	\mp = \begin{pmatrix} 0 & 0 \\ \ii& 0 \end{pmatrix}, \quad 
	N_0 = \begin{pmatrix} 0 & \ii \\ -\ii & 1 \end{pmatrix}, \quad 
	N_1 = \begin{pmatrix} 0 & \ii \\ -\ii & 0 \end{pmatrix}.
\end{equation}
Thus, this sub-branch is non-empty and we can fix all parameters in each super-extension it contains. 
\\
\paragraph{\textbf{Sub-branch 2.3}} 
Finally, with $N_1 \neq 0$ and $N_3 \neq 0$, we can substitute $\vt = (0, 1)$ into
\begin{equation}
	0 = \pi \vt \mp N_3 \vt^\dagger + \vt N_3 (\pi \vt \mp)^\dagger
\end{equation} 
to find $c = 0$.  Therefore, $N_1$ and $N_3$ are reduced to 
\begin{equation}
	N_1 = \begin{pmatrix}
	0 & \rr \\ \bar{\rr} & d
	\end{pmatrix} \quad \text{and} \quad 
	N_3 = \begin{pmatrix}
	0 & 0 \\ 0 & \bar{\rr}\bar{\pp_3} - \pp_3 \rr 
	\end{pmatrix}.
\end{equation}
Recall, the condition
\begin{equation}
	0 = \Re(\vt N_0 \vt^\dagger) \vt \mh
\end{equation}
tells us that either $(\mathsf{i})$ $N_0 = 0$ and $\mh \neq 0$, or $(\mathsf{ii})$ $N_0 \neq 0$ and $\mh = 0$.
Letting $N_0 = 0$, the final conditions for this sub-branch are
\begin{equation}
		0 = \mh N_i + N_i \mh^\dagger \quad \text{where} \quad i \in \{1, \, 3\}.
\end{equation}
From the discussion in sub-branch \hyperlink{N2_a_22i}{$\mathsf{2.2.i}$}, we know that
the $N_1$ case produces the constraints
\begin{equation} 
	0 = \Re(\hh_1), \quad 0 = [\hh_1, \rr\pp_3], \quad \text{and} \quad 0 = \Re(\hh_3 \rr) .
\end{equation}
Interestingly, the $N_3$ condition adds no new constraints to this set; therefore, we have 
\begin{equation} 
	\mathcal{C}_{\hat{\a}, \, \mathsf{2.3.i}} = \{ 0 = \Re(\hh_1), \quad 0 = [\hh_1, \rr\pp_3],
	\quad 0 = \Re(\hh_3 \rr) \}.
\end{equation}
The corresponding \hypertarget{N2_a_23i}{matrices} for this sub-branch are given by
\begin{equation}
	\mathcal{M}_{\hat{\a}, \, \mathsf{2.3.i}} = \Big\{ \mp = \begin{pmatrix} 0 & 0 \\ \pp_3 & 0 \end{pmatrix}, \quad 
	\mh = \begin{pmatrix} \hh_1 & 0 \\ \hh_3 & \pp_3 \hh_1 \pp_3^{-1} \end{pmatrix}, \quad
	N_1 = \begin{pmatrix} 0 & \rr \\ \bar{\rr} & d \end{pmatrix}, \quad
	N_3 = \begin{pmatrix} 0 & 0 \\ 0 & \bar{\rr}\bar{\pp_3} - \pp_3 \rr \end{pmatrix} \Big\}.
\end{equation}
To establish the existence of super-extensions in this sub-branch, begin by setting $\mh = 0$ and $d = 0$.  The endomorphisms of $\s_{\bar{1}}$ may be used to set $\pp_3$ to lie along $\ii$ and scale $\rr$ such that $\rr \in \Sp(1)$.  We can then utilise  the automorphisms of $\mathbb{H}$ and the scaling symmetry of $\sP$ and $\sB$ in $\GL(\s_{\bar{0}})$ to set $\rr$ and fix the parameters in $\mp$ and $N_3$.  This leaves us with a super-extension whose matrices are written
\begin{equation}
	 \mp = \begin{pmatrix} 0 & 0 \\ \ii & 0 \end{pmatrix}, \quad 
	N_1 = \begin{pmatrix} 0 & 1 + \jj \\ 1-\jj & 0 \end{pmatrix}, \quad
	N_3 = \begin{pmatrix} 0 & 0 \\ 0 & \kk \end{pmatrix}.
\end{equation}
Using this parameterisation, we can also introduce $\hh_1$.  Substituting $\pp_3 = \ii$ and $\rr = 1 + \jj$ into the constraints of $\mathcal{C}_{\hat{\a}, \, \mathsf{2.3.i}}$, we find
\begin{equation}
	\mp = \begin{pmatrix} 0 & 0 \\ \ii & 0 \end{pmatrix}, \quad 
	\mh = \begin{pmatrix} \ii - \kk & 0 \\ 0 & \ii + \kk \end{pmatrix}, \quad
	N_1 = \begin{pmatrix} 0 & 1 + \jj \\ 1-\jj & 0 \end{pmatrix}, \quad
	N_3 = \begin{pmatrix} 0 & 0 \\ 0 & \kk \end{pmatrix}.
\end{equation}
Looking to include $\hh_3$ or $d$ leads to the introduction of parameters that cannot be fixed using the basis transformations $\G \subset \GL(\s_{\bar{0}}) \times \GL(\s_{\bar{1}})$ and the constraints.
\\ \\
In the latter case, for which $N_0 \neq 0$, the only remaining condition is
\begin{equation}
		0 = \mp N_0 - N_0 \mp^\dagger,
\end{equation}
which we know from the previous sub-branches, tells us that $\pp_3$ and $\qq$
are collinear, and that $a = 0$.  Therefore, the non-vanishing
\hypertarget{N2_a_23ii}{matrices} for this sub-branch are
\begin{equation}
	\mathcal{M}_{\hat{\a},\, \mathsf{2.3.ii}} = \Big\{\mp = \begin{pmatrix} 0 & 0 \\ \pp_3 & 0 \end{pmatrix}, \quad 
	N_0 = \begin{pmatrix} 0 & \qq \\ \bar{\qq} & b \end{pmatrix}, \quad
	N_1 = \begin{pmatrix} 0 & \rr \\ \bar{\rr} & d \end{pmatrix}, \quad
	N_3 = \begin{pmatrix} 0 & 0 \\ 0 & \bar{\rr}\bar{\pp_3} - \pp_3 \rr \end{pmatrix} \Big\},
\end{equation}
and the constraints are given by
\begin{equation}
	\mathcal{C}_{\hat{\a}, \, \mathsf{2.3.ii}} = \{ 0 = \pp_3\qq - \bar{\qq}\bar{\pp_3} \}.
\end{equation}
This sub-branch of $\cS$ has 13 real parameters, being parameterised by two collinear quaternions $\pp_3$ and $\qq$, an additional quaternion $\rr$ and two real scalars, $b$ and $d$.  Letting $d = 0$ and $\qq = 0$, we can use the same transformations as in sub-branch \hyperlink{N2_a_23i}{$\mathsf{2.3.i}$} to fix 
\begin{equation}
	 \mp = \begin{pmatrix} 0 & 0 \\ \ii & 0 \end{pmatrix}, \quad 
	N_1 = \begin{pmatrix} 0 & 1 + \jj \\ 1-\jj & 0 \end{pmatrix}, \quad
	N_3 = \begin{pmatrix} 0 & 0 \\ 0 & \kk \end{pmatrix}.
\end{equation}
Subsequently employing the scaling symmetry of $\sH$ in $\GL(\s_{\bar{0}})$, we can fix $b$ such that
\begin{equation}
	N_0 = \begin{pmatrix} 0 & 0 \\ 0 & 1 \end{pmatrix}.
\end{equation}
Therefore, there are certainly super-extensions in this sub-branch.  We can introduce either $\qq$ or $d$ while continuing to fix all the parameters of the super-extension; however, attempting to include both leads to the inclusion of a parameter that we cannot fix with the constraints of $\mathcal{C}_{\hat{\a}, \, \mathsf{2.3.ii}}$ and basis transformations in $\G$.
\\
\paragraph{$\hat{\n}_-$}~\\ \\
Setting $\mu = \eta = 0$, $\lambda = 1$ and $\varepsilon = -1$, the conditions reduce to 
\begin{equation}
	\begin{split}
		0 &= \mh N_i + N_i \mh^\dagger \quad \text{where} \quad i \in \{0, 1\} \\
		0 &= \mp N_0 - N_0 \mp^\dagger \\
		0 &= \Re(\vt N_0 \vt^\dagger) \beta\\
		0 &= - \Re(\vt N_0 \vt^\dagger) \pi \\
		0 &= \Re(\vt N_0\vt^\dagger) \vt \mh
	\end{split} \qquad 
	\begin{split}
		0 &= \pi \vt \mp N_3 \vt^\dagger + \vt N_3 (\pi \vt \mp)^\dagger \\
		N_3 &= \mh N_3 + N_3 \mh^\dagger \\
		-N_3 &= \mp N_1 - N_1 \mp^\dagger. \\
	\end{split}
\end{equation}
From the third and fourth condition, we instantly get $N_0 = 0$.  Therefore, we cannot have any solutions along sub-branches satisfying case $(\mathsf{ii})$ for $\hat{\n}_-$.  We are left with 
\begin{equation} \label{eq:N2_n-_22_conditions}
	\begin{split}
		0 &= \mh N_1 + N_1 \mh^\dagger \\
		0 &= \pi \vt \mp N_3 \vt^\dagger + \vt N_3 (\pi \vt \mp)^\dagger
	\end{split} \qquad 
	\begin{split}
		N_3 &= \mh N_3 + N_3 \mh^\dagger \\
		-N_3 &= \mp N_1 - N_1 \mp^\dagger. \\
	\end{split}
\end{equation}
\paragraph{\textbf{Sub-branch 2.1}} Since $N_1$ and $N_3$ are the only possible non-vanishing
matrices encoding the $[\Q, \Q]$ bracket, we cannot have a super-extension in this branch.
\\
\paragraph{\textbf{Sub-branch 2.2}} With $N_3 = 0$, we are left with
\begin{equation} 
		0 = \mh N_1 + N_1 \mh^\dagger \quad \text{and} \quad 
		0 = \mp N_1 - N_1 \mp^\dagger.
\end{equation}
The latter condition tells us that $\pp_3$ and $\rr$ are collinear and $0 = c\pp_3$.  Since we must have $\pp_3 \neq 0$ in this branch, we have $c = 0$.  Using this result, the first condition above tells us
\begin{equation}
	\begin{split}
		0 &= \hh_1 \rr + \rr \overbar{\pp_3\hh_1\pp_3^{-1}+1} \\
		0 &= \Re(\hh_3\rr) + d ( \Re(\hh_1) +1).
	\end{split}
\end{equation}
In fact, utilising the collinearity of $\pp_3$ and $\rr$, the first of these constraints becomes
\begin{equation}
	0 = (2 \Re(\hh_1) + 1 ) \Re(\pp_3 \rr).
\end{equation}
Thus, we have
\begin{equation}
	\mathcal{C}_{\hat{\n}_-,\, \mathsf{2.2.i}} = \{ 0 = \bar{\rr}\bar{\pp_3} - \pp_3 \rr, \quad 0 = (2 \Re(\hh_1) + 1 ) \Re(\pp_3 \rr),
	\quad 0 = \Re(\hh_3\rr) + d ( \Re(\hh_1) +1) \}.
\end{equation}
The non-vanishing \hypertarget{N2_n-_22i}{matrices} in this instance are
\begin{equation}
	\mathcal{M}_{\hat{\n}_-, \, \mathsf{2.2.i}} = \Big\{\mp = \begin{pmatrix} 0 & 0 \\ \pp_3 & 0 \end{pmatrix}, \quad 
	\mh = \begin{pmatrix} \hh_1 & 0 \\ \hh_3 & \pp_3 \hh_1 \pp_3^{-1} + 1 \end{pmatrix}, \quad
	N_1 = \begin{pmatrix} 0 & \rr \\ \bar{\rr} & d \end{pmatrix} \Big\}.	
\end{equation}
Therefore, the sub-branch in $\cS$ for these super-extensions of $\hat{\n}_-$ is parameterised by two collinear quaternions $\pp_3$ and $\rr$, two quaternions encoding the action of $\sH$ on $\s_{\bar{1}}$, $\hh_1$ and $\hh_3$, and one real scalar $d$. Notice, this is the first instance in which setting some parameters to zero imposes particular values for other parameters in the extension.  
In particular, the vanishing of $\rr$ imposes $\Re(\hh_1) = -1$ by the third constraint in $\mathcal{C}_{\hat{n}_-,\, \mathsf{2.2.i}}$, since $d \neq 0$ in this instance.  However, if $\rr \neq 0$, the second constraint implies $2\Re(\hh_1) = -1$.  In the former case, we can set $\hh_3$ and the imaginary part of $\hh_1$ to zero.  Using the endomorphisms of $\s_{\bar{1}}$ to set $\pp_3 = \ii$, we can subsequently employ the scaling symmetry of $\sH$ and $\sZ$ to obtain a super-extension with matrices
\begin{equation}
	\mp = \begin{pmatrix} 0 & 0 \\ \ii & 0 \end{pmatrix}, \quad 
	\mh = \begin{pmatrix} 1 & 0 \\ 0 & 0 \end{pmatrix} \quad  \text{and} \quad 
	N_1 = \begin{pmatrix} 0 & 0 \\ 0 & 1 \end{pmatrix}.
\end{equation}
Therefore, there exist super-extensions in this sub-branch for which $\rr = 0$.  Letting $\rr \neq 0$, we may again use the endomorphisms of $\s_{\bar{1}}$ to impose that $\pp_3$ lies along $\ii$; however, due to the first constraint in $\mathcal{C}_{\hat{\n}_-, \, \mathsf{2.2.i}}$, this also means that $\rr$ lies along $\ii$. Utilising the scaling symmetry of the $\s_{\bar{0}}$ basis elements, we may write down the matrices
\begin{equation}
	\mp = \begin{pmatrix} 0 & 0 \\ \ii & 0 \end{pmatrix}, \quad 
	\mh = \begin{pmatrix} 1 & 0 \\ 0 & -1 \end{pmatrix} \quad  \text{and} \quad 
	N_1 = \begin{pmatrix} 0 & \ii \\ -\ii & 0 \end{pmatrix}.
\end{equation}
Thus, super-extensions for which $\rr \neq 0$ exist in this sub-branch.  In both cases, residual $\s_{\bar{1}}$ endomorphisms may be used to set $\hh_3$ and the imaginary part of $\hh_1$ should we choose to include them. 
\\ 
\paragraph{\textbf{Sub-branch 2.3}} Setting $N_3 \neq 0$, we must now consider 
\begin{equation}
	0 = \pi \vt \mp N_3 \vt^\dagger + \vt N_3 (\pi \vt \mp)^\dagger,
\end{equation}
which, on substituting in $\vt = (0, 1)$, tells us that $c = 0$.  Therefore, as in sub-branch \hyperlink{N2_n-_22}{2.2}, the first condition
of \eqref{eq:N2_n-_22_conditions} tells us 
\begin{equation} \label{eq:N2_n-_23_N1_conditons}
	\begin{split}
		0 &= \hh_1 \rr + \rr \overbar{\pp_3\hh_1\pp_3^{-1}+1} \\
		0 &= \Re(\hh_3\rr) + d ( \Re(\hh_1) +1).
	\end{split}
\end{equation}
However, unlike sub-branch \hyperlink{N2_n-_22i}{2.2}, $\rr$ and $\pp_3$ are not collinear
since the imaginary part of $\pp_3\rr$ makes up the only non-vanishing component of $N_3$:
\begin{equation}
	N_3 = \begin{pmatrix}
		0 & 0 \\ 0 & \bar{\rr}\bar{\pp_3} - \pp_3 \rr 
	\end{pmatrix}.
\end{equation}
Substituting this $N_3$ into its condition from the $[\bH, \Q, \Q]$ identity, we find
\begin{equation}
	(1 - 2 \Re(\hh_4)) \Im(\ll) = [ \Im(\hh_4), \Im(\ll)] ,
\end{equation}
where $\hh_4 = \pp_3 \hh_1 \pp_3^{-1} + 1$ and $\ll = \bar{\rr}\bar{\pp_3} - \pp_3 \rr $.  Since $\Im(\ll)$ is
perpendicular to $[ \Im(\hh_4), \Im(\ll)]$, both sides of this expression must vanish separately.  Substituting 
$\hh_4$ and $\ll$ into the above expressions, we find
\begin{equation}
	0 = (1 + 2 \Re(\hh_1) ) \Im(\pp_3 \rr) \quad \text{and} \quad 0 = [\hh_1, \rr\pp_3].
\end{equation}
As stated above, $\rr$ and $\pp_3$ are not collinear; therefore, the first constraint here tells us that
\begin{equation}
	2 \Re(\hh_1) = - 1. 
\end{equation}
Substituting this result into the second constraint in \eqref{eq:N2_n-_23_N1_conditons}, we find
\begin{equation}
	2 \Re(\hh_3\rr) = -d. 
\end{equation}
Putting all these results together, the constraints are
\begin{equation}
	\mathcal{C}_{\hat{\n}_-,\,\mathsf{2.3.i}} = \{ 2 \Re(\hh_1) = - 1, \quad 2 \Re(\hh_3 \rr) = -d, \quad 0 = [\hh_1, \rr\pp_3] \},
\end{equation}
for the non-vanishing \hypertarget{N2_n-_23i}{matrices}
\begin{equation}
	\mathcal{M}_{\hat{\n}_-,\,\mathsf{2.3.i}} = \Big\{ \mp = \begin{pmatrix} 0 & 0 \\ \pp_3 & 0 \end{pmatrix}, \quad 
	\mh = \begin{pmatrix} \hh_1 & 0 \\ \hh_3 & \pp_3 \hh_1 \pp_3^{-1} + 1 \end{pmatrix}, \quad
	N_1 = \begin{pmatrix} 0 & \rr \\ \bar{\rr} & d \end{pmatrix}, \quad
	N_3 = \begin{pmatrix} 0 & 0 \\ 0 &  \bar{\rr}\bar{\pp_3} - \pp_3 \rr \end{pmatrix}	 \Big\}.		
\end{equation}
Notice, the sub-branch in $\cS$ describing these super-extensions of $\hat{\n}_-$ is parameterised by four quaternions $\pp_3$, $\hh_1$, $\hh_3$ and $\rr$, and one real scalar $d$.  Wanting to establish the existence of super-extensions in this sub-branch, we can choose to set $\hh_3$, $d$, and the imaginary part of $\hh_1$ to zero.  Then, utilising the endomorphisms of $\s_{\bar{1}}$, we can impose that $\pp_3$ must lie along $\ii$ and that $\rr$ must have unit norm.  Subsequently employing $\Aut(\mathbb{H})$ to fix $\rr$, we can finally scale $\sH$, $\sZ$, $\sP$, and $\sB$ to get the super-extension
\begin{equation}
	\mp = \begin{pmatrix} 0 & 0 \\ \ii & 0 \end{pmatrix}, \quad 
	\mh = \begin{pmatrix} 1 & 0 \\ 0 & -1 \end{pmatrix}, \quad
	N_1 = \begin{pmatrix} 0 & 1 + \jj \\ 1 - \jj & 0 \end{pmatrix}, \quad
	N_3 = \begin{pmatrix} 0 & 0 \\ 0 &  \kk \end{pmatrix}.
\end{equation}
Having established that this sub-branch is not empty, we may look to introduce the components we have set to zero for this example.  Notably, we may introduce the imaginary part of $\hh_1$ while still fixing all parameters using the basis transformations $\G \subset \GL(\s_{\bar{0}}) \times \GL(\s_{\bar{1}})$.  However, the inclusion of either $\hh_3$ or $d$ will introduce parameters that cannot be fixed.
\\
\paragraph{$\hat{\n}_+$ and $\hat{\g}$}~\\ \\
Substituting $\lambda = \varepsilon = 0$, $\mu = \pm 1$ into the conditions of \eqref{eq:N2_C_branch2_constraints},\footnote{Whether 
we are in the $\hat{\n}_+$ or $\hat{\g}$ case makes no difference: the distinction between
the two is the value of $\eta$, which, if non-vanishing, would add the condition 
\begin{equation}
	0 = \Re(\vt N_0\vt^\dagger) \pi.
\end{equation} 
This condition sets $N_0 = 0$, but we already have this result from another condition.
Therefore, the super-extensions are the same for both of these generalised Bargmann algebras.}
we instantly
have $N_3 = 0$ and
\begin{equation}
	\begin{split}
		0 &= \mh N_i + N_i \mh^\dagger \quad \text{where} \quad i \in \{0, 1\} \\
		0 &= \mp N_i - N_i \mp^\dagger \quad \text{where} \quad i \in \{0, 1\} \\
		0 &= \Re(\vt N_0\vt^\dagger) \vt \mh \\
		0 &= \pm \Re(\vt N_0\vt^\dagger) \beta. \\
	\end{split}
\end{equation}
The final condition here states that $N_0 = 0$; therefore, $N_1$ is the only possible 
non-vanishing matrix of those encoding $[\Q, \Q]$.  This result tells us there will be no
sub-branch 2.1 or 2.3 for these algebras and no sub-branch satisfying case $(\mathsf{ii})$, in which $N_0 \neq 0$.  Therefore,
the conditions reduce to 
\begin{equation} \label{eq:N2_C_branch2_ng_conditions}
		0 = \mh N_1 + N_1\mh^\dagger \quad \text{and} \quad 
		0 = \mp N_1 - N_1 \mp^\dagger.
\end{equation}
Under the assumption that $\pp_3 \neq 0$ for this branch of super-extensions,
the latter condition tells us that $c = 0$ and that $\pp_3$ and $\rr$ are collinear:
\begin{equation}
	0 = \bar{\rr} \bar{\pp_3} - \pp_3 \rr.
\end{equation}
Substituting these results into the first condition, we find
\begin{equation}
	0 = \Re(\hh_1), \quad 0 = [\hh_1, \rr\pp_3] \quad \text{and} \quad 0 = \Re(\hh_3\rr).
\end{equation}
Notice that since $\pp_3$ and $\rr$ are collinear, the second constraint is instantly
satisfied.  Thus, our constraints reduce to 
\begin{equation}
	\mathcal{C}_{\hat{\n}_+ \,\text{and}\,\hat{\g}, \, \mathsf{2.2.i}} = \{ 0 = \Re(\hh_1), \quad 0 = \Re(\hh_3\rr), \quad 0 = \bar{\rr} \bar{\pp_3} - \pp_3 \rr \}.
\end{equation}
The non-vanishing \hypertarget{N2_ng_22i}{matrices} in this instance are
\begin{equation}
	\mathcal{M}_{\hat{\n}_+\,\text{and}\,\hat{\g},\,\mathsf{2.2.i}} = \Big\{ \mp = \begin{pmatrix} 0 & 0 \\ \pp_3 & 0 \end{pmatrix}, \quad 
	\mh = \begin{pmatrix} \hh_1 & 0 \\ \hh_3 & \pp_3 \hh_1 \pp_3^{-1} \end{pmatrix}, \quad
	N_1 = \begin{pmatrix} 0 & \rr \\ \bar{\rr} & d \end{pmatrix} \Big\}.	
\end{equation}
This sub-branch has identical $(\mathcal{M}, \mathcal{C})$ to sub-branch \hyperlink{N2_a_22i}{$\mathsf{2.2.i}$} for $\hat{\a}$.  Therefore, for a discussion on the existence of such super-extensions, we refer the reader to the discussion found there.
\subsubsection{Branch 3} \label{subsubsec:N2_C_branch3}
\begin{equation}
	\mp = \mz = 0 \quad \mb = \begin{pmatrix} 0 & 0 \\ \bb_3 & 0 \end{pmatrix} \quad 
	\mh = \begin{pmatrix} \hh_1 & 0 \\ \hh_3 & \bb_3 \hh_1 \bb_3^{-1} - \lambda \end{pmatrix}.
\end{equation}
Exploiting the vanishing of $\mz$ and $\mp$, we can reduce the conditions from Lemmas \ref{lem:N2_011}
and \ref{lem:N2_111}.  In particular, the vanishing of $\mp$, when substituted into the conditions from the $[\P, \Q, \Q]$ super-Jacobi identity, tells us that $N_3 = 0$ and
\begin{equation}
	\begin{split}
		0 &= \eta \Re(\vt N_0\vt^\dagger) \pi \\
		0 &= \varepsilon \Re(\vt N_0 \vt^\dagger) \pi + \tfrac12 [\pi, \vt N_2 \vt^\dagger]. \\		
	\end{split}
\end{equation}
The $[\B, \Q, \Q]$ identity then produce
\begin{equation}
	\begin{split}
		0 &= \mb N_0 - N_0 \mb^\dagger \\	
		N_4 &= \mb N_1 - N_1 \mb^\dagger \\
		0 &= \beta \vt \mb N_2 \vt^\dagger + \vt N_2 (\beta \vt \mb)^\dagger \\
		0 &= \lambda \Re(\vt N_0 \vt^\dagger) \beta + \tfrac12 [\beta, \vt N_2 \vt^\dagger] \\
		\mu \Re(\vt N_0\vt^\dagger) \beta &= \beta \vt \mb N_4 \vt^\dagger + \vt N_4 (\beta \vt \mb)^\dagger. \\
	\end{split}
\end{equation}
The conditions from the $[\bZ, \Q, \Q]$ identity are satisfied since $\mz=0$, and, lastly, the $[\bH, \Q, \Q]$ super-Jacobi identity
produces
\begin{equation}
	\begin{split}
		0 &= \mh N_i + N_i \mh^\dagger \quad \text{where} \quad i \in \{0, 1, 2\} \\
		0 &= \eta N_4\\
		\varepsilon N_4 &= \mh N_4 + N_4 \mh^\dagger.
	\end{split}
\end{equation}	
From Lemma \ref{lem:N2_111}, we get
\begin{equation}
		\Re(\vt N_0\vt^\dagger) \vt \mh  = \tfrac12 \vt N_2 \vt^\dagger \vt.
\end{equation}
As in both previous branches, the conditions 
\begin{equation}
	\begin{split}
		0 &= \lambda \Re(\vt N_0 \vt^\dagger) \beta + \tfrac12 [\beta, \vt N_2 \vt^\dagger] \\
		0 &= \varepsilon \Re(\vt N_0 \vt^\dagger) \pi + \tfrac12 [\pi, \vt N_2 \vt^\dagger], \\
	\end{split}
\end{equation}
tell us $N_2 = 0$, such that, putting everything together, we have
\begin{equation} \label{eq:N2_C_branch3_constraints}
	\begin{split}
		0 &= \eta N_4\\
		0 &= \mh N_i + N_i \mh^\dagger \quad \text{where} \quad i \in \{0, 1\} \\
		0 &= \mb N_0 - N_0 \mb^\dagger \\
		0 &= \eta \Re(\vt N_0\vt^\dagger) \pi \\
		0 &= \lambda \Re(\vt N_0 \vt^\dagger) \beta \\
		0 &= \varepsilon \Re(\vt N_0 \vt^\dagger) \pi \\
		0 &= \Re(\vt N_0\vt^\dagger) \vt \mh
	\end{split}
	\begin{split}
		\varepsilon N_4 &= \mh N_4 + N_4 \mh^\dagger. \\
		N_4 &= \mb N_1 - N_1 \mb^\dagger \\
		\mu \Re(\vt N_0\vt^\dagger) \beta &= \beta \vt \mb N_4 \vt^\dagger + \vt N_4 (\beta \vt \mb)^\dagger \\
	\end{split} \quad \forall \beta, \pi \in \Im(\mathbb{H}), \forall \vt \in \mathbb{H}.
\end{equation}
We can now use some of the conditions common to all generalised Bargmann algebras to identify possible sub-branches with which we can organise our investigations.  Substituting the $N_1$ from \eqref{eq:N2_C_H_Herm_Mats} and the $N_4$ from \eqref{eq:N2_C_H_SkewHerm_Mats} into the condition
\begin{equation}
	N_4 = \mb N_1 - N_1 \mb^\dagger ,
\end{equation} 
we can write $N_4$ in terms of the parameters in $N_1$ and $\mb$:
\begin{equation} \label{eq:case_3_N4}
	N_4 = \begin{pmatrix} 0 & -c\bar{\bb_3} \\ c\bb_3 & \bb_3 \rr  - \bar{\rr}\bar{\bb_3}\end{pmatrix}.
\end{equation}
Notice that this means $N_4$ is completely dependent on $N_1$: if $N_1 = 0$ then $N_4 = 0$.
Therefore, in general, we have the following sub-branches:
\begin{enumerate}
	\item $N_1 = 0$ and $N_4 = 0$,
	\item $N_1 \neq 0$ and $N_4 = 0$,
	\item $N_1 \neq 0$ and $N_4 \neq 0$.
\end{enumerate}
Also, as in branches 1 and 2, the condition derived from the $[\Q, \Q, \Q]$ identity tells us that either $N_0$ or $\mh$ vanishes.  We will consider both of these cases within each sub-branch, identifying them as
\begin{enumerate}[label=(\roman*)]
	\item $N_0 = 0$ and $\mh \neq 0$, and
	\item  $N_0 \neq 0$ and $\mh = 0$.
\end{enumerate}
\paragraph{$\hat{\a}$}~\\ \\
Setting $\lambda = \mu = \eta =\varepsilon = 0$, the conditions in \eqref{eq:N2_C_branch3_constraints}
reduce to 
\begin{equation} \label{eq:N2_C_branch3_a_constraints}
	\begin{split}
		0 &= \mh N_i + N_i \mh^\dagger \quad \text{where} \quad i \in \{0, 1, 4 \} \\
		0 &= \mb N_0 - N_0 \mb^\dagger \\
		0 &= \beta \vt \mb N_4 \vt^\dagger + \vt N_4 (\beta \vt \mb)^\dagger \\
		0 &= \Re(\vt N_0\vt^\dagger) \vt \mh \\
		N_4 &= \mb N_1 - N_1 \mb^\dagger. \\
	\end{split}
\end{equation}
As in branch 2, none of these conditions force the vanishing of any more $N_i$; therefore,
super-extensions may be found in each of the sub-branches.  In fact, because of the symmetry
of the generators $\sB$ and $\sP$ in this generalised Bargmann algebra, we may use
automorphisms to transform the above conditions into those
in \eqref{eq:N2_C_branch2_a_conditions1} and \eqref{eq:N2_C_branch2_a_conditions2}, which describe
the super-extensions of $\hat{\a}$ in branch 2.  More explicitly, substitute the transformation with matrices
\begin{equation}
	A = \begin{pmatrix}
		1 & 0 \\ 0 & 1
	\end{pmatrix}, \quad 
	C = \begin{pmatrix}
		0 & -1 \\ 1 & 0
	\end{pmatrix}, \quad \text{and} \quad
	M = \begin{pmatrix}
		1 & 0 \\ 0 & 1
	\end{pmatrix},
\end{equation}
and the quaternion $\uu = 1$, into \eqref{eq:N2_S_basis_transformations}.  Putting the transformed matrices
into the conditions of \eqref{eq:N2_C_branch3_a_constraints}, we recover the conditions of 
\eqref{eq:N2_C_branch2_a_conditions1} and \eqref{eq:N2_C_branch2_a_conditions2}.
Therefore, all the super-extensions of $\hat{\a}$ in this branch are equivalent to the super-extensions
of branch 2.  Thus, for this particular generalised Bargmann algebra, this branch produces no
new super-extensions.
\\
\paragraph{$\hat{\n}_-$}~\\ \\
Setting $\mu = \eta = 0$, $\lambda = 1$ and $\varepsilon = -1$, the conditions of \eqref{eq:N2_C_branch3_constraints} become
\begin{equation} \label{eq:case_3_constraints}
	\begin{split}
		0 &= \mh N_i + N_i \mh^\dagger \quad \text{where} \quad i \in \{0, 1\} \\
		0 &= \mb N_0 - N_0 \mb^\dagger \\
		0 &= \Re(\vt N_0 \vt^\dagger) \beta \\
		0 &= - \Re(\vt N_0 \vt^\dagger) \pi \\
		0 &= \Re(\vt N_0\vt^\dagger) \vt \mh
	\end{split}
	\begin{split}
		- N_4 &= \mh N_4 + N_4 \mh^\dagger. \\
		N_4 &= \mb N_1 - N_1 \mb^\dagger \\
		0 &= \beta \vt \mb N_4 \vt^\dagger + \vt N_4 (\beta \vt \mb)^\dagger. \\
	\end{split}
\end{equation}
The conditions 
\begin{equation}
		0 = \Re(\vt N_0 \vt^\dagger) \beta \quad \text{and} \quad 0 = - \Re(\vt N_0 \vt^\dagger) \pi
\end{equation}
tell us that $N_0$ must vanish, leaving only
\begin{equation} \label{eq:N2_C_n-_constraints}
	\begin{split}
		0 &= \mh N_1 + N_1 \mh^\dagger \\
		0 &= \beta \vt \mb N_4 \vt^\dagger + \vt N_4 (\beta \vt \mb)^\dagger
	\end{split}\quad 
	\begin{split}
		- N_4 &= \mh N_4 + N_4 \mh^\dagger \\
		N_4 &= \mb N_1 - N_1 \mb^\dagger. \\
	\end{split}
\end{equation}
Notice, this result tells us that we cannot have any sub-branches satisfying case $(\mathsf{ii})$; therefore, all sub-branches $(\mathcal{M}, \mathcal{C})$ discussed below will have a subscript ending in $\mathsf{i}$.  Like the $\hat{\a}$ case,
this generalised Bargmann algebra allows for an automorphism which transforms the conditions for this branch into
the conditions for branch 2.  However, in this instance, this branch will produce some distinct super-extensions.  This result
is a consequence of the parameters $\varepsilon$ and $\lambda$ and their appearance in $\mh$.  In branch 2, the 
matrix $\mh$ is written as
\begin{equation}
	\mh = \begin{pmatrix}
		\hh_1 & 0 \\ \hh_3 & \pp_3\hh_1\pp_3^{-1} - \varepsilon
	\end{pmatrix},
\end{equation}
and in this branch, it is written
\begin{equation}
	\mh = \begin{pmatrix}
		\hh_1 & 0 \\ \hh_3 & \pp_3\hh_1\pp_3^{-1} - \lambda
	\end{pmatrix}.
\end{equation}
Since $\hat{\n}_-$ has $\varepsilon = -1$ and $\lambda = 1$, this matrix differs in these branches, if only
be a sign.  Thus, although the investigations into the super-extensions of $\hat{\n}_-$ in this branch will be very similar to 
those in the previous branch, we will give a partial presentation of them here to demonstrate any consequences of this
change in sign.  In particular, we will omit the discussions on the existence of super-extensions and parameter fixing
as these require only trivial adjustments from those found in branch 2.
\\
\paragraph{\textbf{Sub-branch 3.1}} As $N_0 = 0$, we cannot have both $N_1$ and $N_4$ vanish; therefore,
there is no super-extension in this sub-branch.
\\
\paragraph{\textbf{Sub-branch 3.2}} Letting $N_1 \neq 0$ and $N_4 = 0$, we are left with only the conditions
\begin{equation}
		0 = \mh N_1 + N_1 \mh^\dagger \quad \text{and} \quad 
		0 = \mb N_1 - N_1 \mb^\dagger.
\end{equation}
The second condition above tells us that 
\begin{equation}
	0 = c\bb_3 \quad \text{and} \quad 0 = \bb_3 \rr  - \bar{\rr}\bar{\bb_3}.
\end{equation}
As $\bb_3 \neq 0$ by assumption, $c = 0$.  Substituting this result into the first
condition above, we find 
\begin{equation}
	\begin{split}
		0 &= \hh_1 \rr  + \rr \overbar{\bb_3\hh_1\bb_3^{-1} - 1} \\
		0 &= \Re(\hh_3\rr) + d ( \Re(\hh_1) - 1 ).
	\end{split}
\end{equation}
Using the collinearity of $\bb_3$ and $\rr$, the first of these constraints tells us that
\begin{equation}
	0 = (2 \Re(\hh_1) - 1) \Re(\bb_3\rr).
\end{equation}
Therefore, the constraints in this instance are given by
\begin{equation}
	\mathcal{C}_{\hat{\n}_-,\,\mathsf{3.2.i}} = \{ 0 = \bb_3 \rr  - \bar{\rr}\bar{\bb_3}, \quad
	0 = (2 \Re(\hh_1) - 1) \Re(\bb_3\rr), \quad 0 = \Re(\hh_3\rr) + d ( \Re(\hh_1) - 1 ) \}.
\end{equation}
The non-vanishing \hypertarget{N2_n-_32i}{matrices} in this instance are
\begin{equation}
	\mathcal{M}_{\hat{\n}_-,\,\mathsf{3.2.i}} = \Big\{ \mb = \begin{pmatrix} 0 & 0 \\ \bb_3 & 0 \end{pmatrix}, \quad 
	\mh = \begin{pmatrix} \hh_1 & 0 \\ \hh_3 & \bb_3\hh_1\bb_3^{-1} -1 \end{pmatrix}, \quad
	N_1 = \begin{pmatrix} 0 & \rr \\ \bar{\rr} & d \end{pmatrix} \Big\}.
\end{equation}
This sub-branch of $\cS$ is parameterised by two collinear quaternions $\bb_3$ and $\rr$, two quaternions encoding the action of $\sH$ on $\s_{\bar{1}}$, $\hh_1$ and $\hh_3$, and one real scalar $d$.  Notice that the real component of $\hh_1$ varies depending on whether $\rr$ vanishes.  Together with the super-extensions in sub-branch \hyperlink{N2_n-_22i}{$\mathsf{2.2.i}$} for $\hat{\n}_-$, these are the only super-extensions that demonstrate this type of dependency.   If $\rr = 0$, the first two constraints of $\mathcal{C}_{\hat{\n}_-,\,\mathsf{3.2.i}}$ are trivial, and the third condition tell us that $\Re(\hh_1) = 1$, since $d \neq 0$ for $N_1 \neq 0$.  However, if $\rr \neq 0$, the second constraint requires $2 \Re(\hh_1) = 1$.  In this instance, the third constraint then becomes $ 2 \Re(\hh_3 \rr) = d$.  As the matrices and conditions for this sub-branch are so similar to those in \hyperlink{N2_n-_22i}{$\mathsf{2.2.i}$}, we refer the reader the discussion on existence of super-extensions and parameter fixing presented there.  
\\
\paragraph{\textbf{Sub-branch 3.3}} Finally, let $N_1 \neq 0$ and $N_4 \neq 0$.  
The condition 
\begin{equation}
	0 = \beta \vt \mb N_4 \vt^\dagger + \vt N_4 (\beta \vt \mb)^\dagger
\end{equation}
 imposes $c = 0$, such that
\begin{equation} \label{eq:case3_n-_Ns}
	N_1 = \begin{pmatrix} 0 & \rr \\ \bar{\rr} & d \end{pmatrix} \quad \text{and} \quad
	N_4 = \begin{pmatrix} 0 & 0 \\ 0 & \bb_3\rr - \bar{\rr} \bar{\bb_3}  \end{pmatrix}.
\end{equation}
This result reduces the conditions in \eqref{eq:N2_C_n-_constraints} to
\begin{equation} \label{eq:N2_C_n-_subbranch33_constraints}
	\begin{split}
		0 &= \mh N_1 + N_1 \mh^\dagger \\
		- N_4 &= \mh N_4 + N_4 \mh^\dagger. \\
	\end{split}
\end{equation}
Substituting the $N_4$ from \eqref{eq:case3_n-_Ns} into the second condition above, we have
\begin{equation}
	- \ll = \hh_4 \ll +\ll \bar{\hh_4},
\end{equation}
where $\ll = \bb_3 \rr-\bar{\rr}\bar{\bb_3}$ and $\hh_4 = \bb_3\hh_1\bb_3^{-1} - 1$.  We can rewrite
this condition as
\begin{equation}
	(1+2 \Re(\hh_4) ) \ll = [\ll, \hh_4].
\end{equation}
Notice that the R.H.S. of this expression must lie in $\Im(\mathbb{H})$ and be orthogonal
to $\ll$, which is imaginary by construction.  Therefore, both sides of this expression must vanish independently:
\begin{equation}
	0 = (1 + 2 \Re(\hh_4)) \ll \qquad 0 = [\ll, \hh_4].
\end{equation}
Substituting $\ll$ and $\hh_4$ into these constraints, we find 
\begin{equation} \label{eq:case3_n-_N4_condition}
	0 = (2 \Re(\hh_1) - 1) (\bb_3 \rr - \bar{\rr}\bar{\bb_3}) \quad \text{and} \quad 0 = [\hh_1, \rr\bb_3],
\end{equation}
respectively.  For $N_4$ to not vanish,
we must have $\Im(\bb_3\rr) \neq 0$, so, by the first constraint above, we need $2 \Re(\hh_1) = 1$.
The first condition in \eqref{eq:N2_C_n-_subbranch33_constraints} produces the same constraints as
in sub-branch \hyperlink{N2_n-_32i}{$\mathsf{3.2}$}; namely,
\begin{equation}
	\begin{split}
		0 &= \hh_1 \rr  + \rr \overbar{\bb_3\hh_1\bb_3^{-1} - 1} \\
		0 &= \Re(\hh_3\rr) + d ( \Re(\hh_1) - 1 ).
	\end{split}
\end{equation}
Notice that the requirement of setting $2 \Re(\hh_1) = 1$ makes the second constraint here $2 \Re(\hh_3\rr) = d$, and
that the first constraint is equivalent to $0 = [\hh_1, \rr\bb_3]$.  Therefore, the constraints on this sub-branch are given by
\begin{equation}
		\mathcal{C}_{\hat{\n}_-,\,\mathsf{3.3.i}} = \{ d = 2 \Re(\hh_3\rr), \quad 1 = 2 \Re(\hh_1) \quad \text{and} \quad 0 = [\hh_1, \rr\bb_3] \},
\end{equation}
and the non-vanishing \hypertarget{N2_n-_33i}{matrices} are
\begin{equation}
	\mathcal{M}_{\hat{\n}_-,\,\mathsf{3.3.i}} = \Big\{ \mb = \begin{pmatrix} 0 & 0 \\ \bb_3 & 0 \end{pmatrix}, \quad 
	\mh = \begin{pmatrix} \hh_1 & 0 \\ \hh_3 & \bb_3\hh_1\bb_3^{-1} -1 \end{pmatrix}, \quad
	N_1 = \begin{pmatrix} 0 & \rr \\ \bar{\rr} & d \end{pmatrix}, \quad
	N_4 = \begin{pmatrix} 0 & 0 \\ 0 & \bb_3\rr - \bar{\rr} \bar{\bb_3} \end{pmatrix}	\Big\}.
\end{equation}
For the discussion on existence of super-extensions and how to fix the parameters of the matrices describing this sub-branch of $\cS$, we refer the reader to sub-branch \hyperlink{N2_n-_23i}{$\mathsf{2.3.i}$}.  The application of the discussion to the present case requires only minor adjustments.
\\
\paragraph{$\hat{\n}_+$}~\\ \\
Substituting $\lambda = \varepsilon = 0$, $\mu = 1$, and $\eta = -1$ into the results for
Lemmas \ref{lem:N2_011} and \ref{lem:N2_111}, we find
\begin{equation}
	\begin{split}
		0 &= - N_4\\
		0 &= \mh N_i + N_i \mh^\dagger \quad \text{where} \quad i \in \{0, 1, 4\} \\
		0 &= \mb N_0 - N_0 \mb^\dagger \\
		0 &= - \Re(\vt N_0\vt^\dagger) \pi \\
		0 &= \Re(\vt N_0\vt^\dagger) \vt \mh
	\end{split}
	\begin{split}
		N_4 &= \mb N_1 - N_1 \mb^\dagger \\
		\Re(\vt N_0\vt^\dagger) \beta &= \beta \vt \mb N_4 \vt^\dagger + \vt N_4 (\beta \vt \mb)^\dagger. \\
	\end{split}
\end{equation}
Therefore, $N_4$ vanishes, and $N_0$ vanishes by $0 = - \Re(\vt N_0\vt^\dagger) \pi$.
This leaves us with
\begin{equation}
		0 = \mh N_1 + N_1 \mh^\dagger \quad \text{and} \quad 
		0 = \mb N_1 - N_1 \mb^\dagger.
\end{equation}
Notice that these conditions are similar to those of \eqref{eq:N2_C_branch2_ng_conditions}, which describe
the super-extensions of $\hat{\n}_+$ in branch 2.  In fact, we can utilise the automorphisms of $\hat{\n}_+$
to transform the above conditions into those in \eqref{eq:N2_C_branch2_ng_conditions}.  Unlike the $\hat{\n}_-$
case, since $\hat{\n}_+$ has vanishing $\varepsilon$ and $\lambda$, there is no discrepancy between
the transformed matrices and those of branch 2; therefore, the super-extensions of $\hat{\n}_+$ in branches 2 and 3
are equivalent.  Thus, we have no new super-extensions here.
\\
\paragraph{$\hat{\g}$}~\\ \\
Substituting $\lambda = \eta = \varepsilon = 0$ and $\mu = -1$ into \eqref{eq:N2_C_branch3_constraints},
we have
\begin{equation} \label{eq:N2_C_branch3_g_constraints}
	\begin{split}
		0 &= \mh N_i + N_i \mh^\dagger \quad \text{where} \quad i \in \{0, 1, 4\} \\
		0 &= \mb N_0 - N_0 \mb^\dagger \\
		0 &= \Re(\vt N_0\vt^\dagger) \vt \mh
	\end{split}
	\begin{split}
		N_4 &= \mb N_1 - N_1 \mb^\dagger \\
		- \Re(\vt N_0\vt^\dagger) \beta &= \beta \vt \mb N_4 \vt^\dagger + \vt N_4 (\beta \vt \mb)^\dagger. \\
	\end{split}
\end{equation}
With these conditions, we can now investigate the three sub-branches.
\\
\paragraph{\textbf{Sub-branch 3.1}} We cannot have $N_1 = N_4 = 0$, since the vanishing
of $N_4$ means $N_0 =0$ through
\begin{equation}
	- \Re(\vt N_0\vt^\dagger) \beta = \beta \vt \mb N_4 \vt^\dagger + \vt N_4 (\beta \vt \mb)^\dagger.
\end{equation}
This would cause all $N_i$ to vanish such that $[\Q, \Q] = 0$. Therefore, 
there is no super-extension in this sub-branch.
\\
\paragraph{\textbf{Sub-branch 3.2}} With only $N_1 \neq 0$, the conditions reduce to
\begin{equation}
		0 = \mh N_1 + N_1 \mh^\dagger \quad \text{and} \quad
		0 = \mb N_1 - N_1 \mb^\dagger.
\end{equation}
Notice that this is the same set of conditions as the $\hat{\n}_+$ case above.  Therefore,
we may expect the analysis for this generalised Bargmann algebra to be analogous. However,
there is a very important distinction.  In the $\hat{\n}_+$ case, we were able to use the automorphisms
to transform the conditions into those of sub-branch 2.2.  This automorphism is not permitted by
the generalised Bargmann algebra $\hat{\g}$.  Therefore, although the analysis will be the same
\textit{mutatis mutandis} as that of sub-branch 2.2, the resulting super-extensions will be distinct.
\\ \\
Now, since $N_1$ is the only possible non-vanishing matrix in the $[\Q, \Q]$ bracket, it must have non-zero
components.  The latter condition above tells us that $c=0$ and $\bb_3$ and $\rr$
are collinear quaternions, while the former condition imposes
\begin{equation}
	\begin{split}
		0 &= \hh_1 \rr + \rr \overbar{\bb_3\hh_1\bb_3^{-1}} \\
		0 &= \Re(\hh_3\rr) + d \Re(\hh_1).
	\end{split}
\end{equation}
Notice that if $\rr = 0$, we need $d \neq 0$ for the existence of a super-extension; therefore, the
final constraint above would impose $\Re(\hh_1) = 0$.  Similarly, if $\rr \neq 0$, the first constraint
would also enforce $\Re(\hh_1) = 0$.  Thus, in all super-extensions, we require $\Re(\hh_1) = 0$.
Using this result, these two constraints simplify to 
\begin{equation}
		0 = [\hh_1, \rr\bb_3] \quad \text{and} \quad 0 = \Re(\hh_3\rr).
\end{equation}
However, since $\bb_3$ and $\rr$ are collinear and it is only the imaginary part
of $\rr\bb_3$ that will contribute to $[\hh_1, \rr\bb_3]$, the first of these constraints
is already satisfied.  Therefore, the final set of constraints on this sub-branch is 
\begin{equation}
	\mathcal{C}_{\hat{\g},\,\mathsf{3.2.i}} = \{ 0 = \bb_3\rr - \bar{\rr} \bar{\bb_3}, \quad 0 = \Re(\hh_1), 
		\quad 0 = \Re(\hh_3\rr) \}.
\end{equation}
Subject to these constraints, we have non-vanishing \hypertarget{N2_g_32i}{matrices} are 
\begin{equation}
	\mathcal{M}_{\hat{\g},\,\mathsf{3.2.i}} = \Big\{ \mb = \begin{pmatrix} 0 & 0 \\ \bb_3 & 0 \end{pmatrix}, \quad 
	\mh = \begin{pmatrix} \hh_1 & 0 \\ \hh_3 & \bb_3 \hh_1 \bb_3^{-1} \end{pmatrix}, \quad
	N_1 = \begin{pmatrix}  0 & \rr \\ \bar{\rr} & d \end{pmatrix} \Big\}.
\end{equation}
Since this $(\mathcal{M}, \mathcal{C})$ is analogous to the one found in branch \hyperlink{N2_ng_22i}{$\mathsf{2}$} 
for $\hat{n}_+$ and $\hat{\g}$, we will omit the discussion on existence of super-extensions and
parameter fixing.  
\\
\paragraph{\textbf{Sub-branch 3.3}} Finally, with $N_4 \neq 0$, we can think of setting $N_0 \neq 0$
and $\mh = 0$.  But first, try setting $N_0 = 0$ to allow $\mh \neq 0$.  The conditions become
\begin{equation}
	\begin{split}
		0 &= \mh N_i + N_i \mh^\dagger \quad \text{where} \quad i \in \{1, 4\} \\
		N_4 &= \mb N_1 - N_1 \mb^\dagger \\
		0 &= \beta \vt \mb N_4 \vt^\dagger + \vt N_4 (\beta \vt \mb)^\dagger. \\
	\end{split}
\end{equation}
Notice that the second condition above allows us to write $N_4$ in terms of $\mb$ and $N_1$:
\begin{equation}
	N_4 = \begin{pmatrix} 0 & -c\bar{\bb_3} \\ c\bb_3 & \bb_3 \rr  - \bar{\rr}\bar{\bb_3}\end{pmatrix}.
\end{equation}
The third condition then imposes $c = 0$, since $\bb_3 \neq 0$, leaving us with
\begin{equation}
	N_1 = \begin{pmatrix}
		0 & \rr \\ \bar{\rr} & d
	\end{pmatrix} \quad \text{and} \quad 
	N_4 = \begin{pmatrix}
		0 & 0 \\ 0 & \bb_3 \rr - \bar{\rr} \bar{\bb_3}
	\end{pmatrix},
\end{equation}
Using these matrices in the final conditions,
\begin{equation}
			0 = \mh N_i + N_i \mh^\dagger \quad \text{where} \quad i \in \{1, 4\},
\end{equation}
produces 
\begin{equation}
 	0 = [\hh_1, \rr\bb_3],
\end{equation}
when $i =4$, and, when $i = 1$, we obtain
\begin{equation}
	\begin{split}
		0 &= \hh_1 \rr + \rr \overbar{\bb_3\hh_1\bb_3^{-1}} \\
		0 &= \Re(\hh_3\rr) + d \Re(\hh_1).
	\end{split}
\end{equation}
Since $\rr \neq 0$ for $N_4 \neq 0$, the first condition here states that $\Re(\hh_1) = 0$.
Therefore, the constraints on the parameters of this super-extension are given by
\begin{equation}
	\mathcal{C}_{\hat{\g},\,\mathsf{3.3.i}} = \{ 0 = \Re(\hh_1), \quad  0 = [\hh_1, \rr\bb_3], \quad
	 	0 = \Re(\hh_3\rr) \}.
\end{equation}
The \hypertarget{N2_g_33i}{non-vanishing} matrices associated with this sub-branch are
\begin{equation}
	\mathcal{M}_{\hat{\g},\,\mathsf{3.3.i}} = \Big\{ \mb = \begin{pmatrix}
		0 & 0 \\ \bb_3 & 0
	\end{pmatrix}, \quad
	\mh = \begin{pmatrix}
		\hh_1 & 0 \\ \hh_3 & \bb_3 \hh_1\bb_3^{-1}
	\end{pmatrix}, \quad
	N_1 = \begin{pmatrix}
		0 & \rr \\ \bar{\rr} & d
	\end{pmatrix}, \quad
	N_4 = \begin{pmatrix}
		0 & 0 \\ 0 & \bb_3 \rr - \bar{\rr} \bar{\bb_3}
	\end{pmatrix} \Big\}.
\end{equation}
The $(\mathcal{M}, \mathcal{C})$ of this sub-branch is the same \textit{mutatis mutandis} as that of sub-branch \hyperlink{N2_a_23i}{$\mathsf{2.3.i}$} for $\hat{\a}$; therefore, we refer the reader to the discussion found there on existence of super-extensions and parameter fixing. 
\\ \\
Finally, let $N_0 \neq 0$ such that $\mh = 0$.  The conditions remaining from \eqref{eq:N2_C_branch3_g_constraints} are
\begin{equation}
	\begin{split}
		0 &= \mb N_0 - N_0 \mb^\dagger \\
		N_4 &= \mb N_1 - N_1 \mb^\dagger \\
		- \Re(\vt N_0\vt^\dagger) \beta &= \beta \vt \mb N_4 \vt^\dagger + \vt N_4 (\beta \vt \mb)^\dagger. \\
	\end{split}
\end{equation}
We know how the second condition acts from the discussion at the beginning of this branch.  The first of these conditions tells us
\begin{equation}
	0 = a \quad \text{and} \quad 0 = \bb_3\qq - \bar{\qq}\bar{\bb_3},
\end{equation}
and the third, substituting in $\vt = (0, 1)$, produces
\begin{equation}
	- b = 2 c |\bb_3|^2.
\end{equation}
Now substituting $\vt = (1, \ss)$ into the third condition, we find
\begin{equation}
	- 2 \Re(\ss\bar{\qq}) - b |\ss|^2 = 2c |\ss|^2 |\bb_3|^2.
\end{equation}
Therefore, using the previous result and letting $\ss = 1$, $\ss = \ii$, $\ss = \jj$
and $\ss = \kk$, we see that all components of $\qq$ must vanish.  We thus \hypertarget{N2_g_33ii}{have} non-vanishing matrices
\begin{equation}
	\mathcal{M}_{\hat{\g},\,\mathsf{3.3.ii}} = \Big\{ \mb = \begin{pmatrix}
		0 & 0 \\ \bb_3 & 0
	\end{pmatrix}, \quad
	N_0 = \begin{pmatrix}
		0 & 0 \\ 0 & -2c |\bb_3|^2
	\end{pmatrix}, \quad
	N_1 = \begin{pmatrix}
		c & \rr \\ \bar{\rr} & d
	\end{pmatrix}, \quad
	N_4 = \begin{pmatrix}
		0 & -c\bar{\bb_3} \\ c\bb_3 & \bb_3 \rr - \bar{\rr} \bar{\bb_3}
	\end{pmatrix} \Big\}.
\end{equation}
Interestingly, there are no additional constraints to the parameters of this sub-branch; therefore, $\mathcal{C}_{\hat{\g},\,\mathsf{3.3.ii}}$ is empty. Notice the sub-branch of $\cS$ for this type of super-extension is parameterised by two quaternions $\bb_3$ and $\rr$, and two real scalars $c$ and $d$.  To demonstrate that this sub-branch is not empty, we begin by setting both $\rr$ and $d$ to zero.  This choice allows us to utilise the endomorphisms of $\s_{\bar{1}}$ to set $\bb_3 = \ii$ and $c=1$.  Employing the scaling symmetry of the basis elements, we arrive at
\begin{equation} \label{eq:notation_example}
	 \mb = \begin{pmatrix}
		0 & 0 \\ \ii & 0
	\end{pmatrix}, \quad
	N_0 = \begin{pmatrix}
		0 & 0 \\ 0 & 1
	\end{pmatrix}, \quad
	N_1 = \begin{pmatrix}
		1 & 0 \\ 0 & 0
	\end{pmatrix}, \quad
	N_4 = \begin{pmatrix}
		0 & \ii \\ \ii & 0
	\end{pmatrix}.
\end{equation}
We may now look to introduce $\rr$ and $d$.  Again using the endomorphisms of $\s_{\bar{1}}$, we can impose that $\bb_3$ must lie along $\ii$, set $|\rr|^2 = 1$, and choose $\sqrt{2} c =1$.  This choice for $\rr$ imposes that $\rr \in \Sp(1)$, and we may utilise $\Aut(\mathbb{H})$ to fix $\sqrt{2} \rr = 1 + \ii$.  Having chosen $\rr \neq 0$, we can always employ the residual endomorphisms of $\s_{\bar{1}}$ to set $d = 0$.  Using the only remaining symmetry, the scaling of $\sH$, $\sZ$, $\sB$, and $\sP$, we find
\begin{equation}
	 \mb = \begin{pmatrix}
		0 & 0 \\ \sqrt{2} \ii & 0
	\end{pmatrix}, \quad
	N_0 = \begin{pmatrix}
		0 & 0 \\ 0 & 1
	\end{pmatrix}, \quad
	N_1 = \begin{pmatrix}
		1 & 1+\ii \\ 1-\ii & 0
	\end{pmatrix}, \quad
	N_4 = \begin{pmatrix}
		0 & \ii \\ \ii & 2\ii
	\end{pmatrix}.
\end{equation}
\subsubsection{Branch 4} \label{subsubsec:N2_C_branch4}
\begin{equation}
	\mz = 0 \quad \mh = \begin{pmatrix} \hh_1 & 0 \\ \hh_3 & \hh_4 \end{pmatrix} \quad 
	\mb = \begin{pmatrix} 0 & 0 \\ \bb_3 & 0 \end{pmatrix} \quad 
	\mp = \begin{pmatrix} 0 & 0 \\ \pp_3 & 0 \end{pmatrix},
\end{equation}
subject to
\begin{equation} \label{eq:N2_branch4_condition}
	[\uu, \hh_1] = - \mu \uu^2 + (\lambda - \varepsilon) \uu + \eta \quad \text{or} \quad [\vv, \hh_1] = \eta \vv^2 + (\lambda - \varepsilon) \vv + \mu,
\end{equation}
where $0 \neq \uu = \bb_3^{-1} \pp_3$ and $0 \neq \vv = \pp_3^{-1} \bb_3$.
Recall, we keep both of these constraints as, depending on the generalised Bargmann algebra under investigation,
one of them will prove more useful than the other.  
We still need to determine the generalised
Bargmann algebras for which this branch could provide a super-extension.  Therefore, we will consider
each algebra in turn, and analyse those for which the above constraints may hold.
\\
\paragraph{$\hat{\a}$}~\\ \\ Setting $\lambda = \mu = \eta = \varepsilon = 0$ in 
\eqref{eq:N2_branch4_condition}, we could still get a super-extension, as long
as we impose
\begin{equation}
	0 = [\uu, \hh_1].
\end{equation}
Throughout this section, we will choose to write parameters in terms of $\bb_3$; therefore, 
we write $\pp_3 = \bb_3\uu$ and $\hh_4 = \bb_3 \hh_1 \bb_3^{-1}$, where
$\uu \in \mathbb{H}$.  Notice that the significance of $\uu$ is only manifest when 
$\hh_1 \neq 0$: when $\hh_1$ vanishes, we are simply replacing $\pp_3$ with $\uu$.  However,
since $\uu$ will be important is several instances, we will always use this notation.
\\ \\
Since neither $\mb$ nor $\mp$ vanish, there are no immediate results as in the three previous
branches: all the conditions of Lemmas \ref{lem:N2_011} and \ref{lem:N2_111} must be
taken into consideration.  However, as with branches 2 and 3, we can organise our 
investigations based on dependencies.  In particular, the conditions
\begin{equation}
	\begin{split}
			N_4 &= \mb N_1 - N_1 \mb^\dagger \\
			-N_3 &= \mp N_1 - N_1 \mp^\dagger \\
			\tfrac12 [\beta, \vt N_2 \vt^\dagger] &= \beta \vt \mb N_3 \vt^\dagger
			+ \vt N_3 (\beta \vt \mb)^\dagger \\	
			\tfrac12 [\pi, \vt N_2 \vt^\dagger] &= \pi \vt \mp N_4 \vt^\dagger
			+ \vt N_4 (\pi \vt \mp)^\dagger	,
	\end{split}
\end{equation}
show us that if $N_1$ vanishes, so must $N_2, N_3$, and $N_4$.  Additionally, the vanishing of
either $N_3$ or $N_4$ means we must have $N_2 = 0$.  Therefore, we can divide
our investigations into the following sub-branches.
\begin{enumerate}
	\item $N_1 = N_2 = N_3 = N_4 = 0$
	\item $N_1 \neq 0$ and $N_2 = N_3 = N_4 = 0$
	\item $N_1 \neq 0$, $N_3 \neq 0$, and $N_2 = N_4 = 0$
	\item $N_1 \neq 0$, $N_4 \neq 0$, and $N_2 = N_3 = 0$
	\item $N_1 \neq 0$, $N_3 \neq 0$, $N_4 \neq 0$, and $N_2 = 0$
	\item $N_1 \neq 0$, $N_2 \neq 0$, $N_3 \neq 0$, and $N_4 \neq 0$.
\end{enumerate}
Unlike branches 1, 2 and 3, the $[\Q, \Q, \Q]$ super-Jacobi identity will not always result in the cases
$(\mathsf{i})$, in which $N_0 = 0$ and $\mh \neq 0$, or $(\mathsf{ii})$, in which $N_0 \neq 0$
and $\mh = 0$.  There are instances in which both $N_0$ and $\mh$ may not vanish.  These
cases, will be labelled $(\mathsf{iii})$.  
\\
\paragraph{\textbf{Sub-branch 4.1}} With only $N_0$ left available, it cannot vanish for a supersymmetric
extension to exist.
Therefore, the $[\Q, \Q, \Q]$ identity,
\begin{equation}
	\Re(\vt N_0 \vt^\dagger) \vt \mh = 0,
\end{equation}
tells us we must have $\mh = 0$.  The remaining conditions are then
\begin{equation}
		0 = \mb N_0 - N_0 \mb^\dagger \quad \text{and} \quad 
		0 = \mp N_0 - N_0 \mp^\dagger,
\end{equation}
which tell us
\begin{equation}
	0 = a, \quad 0 = \bb_3 \qq - \bar{\qq} \bar{\bb_3} \quad \text{and} \quad
	0 = \bb_3\uu \qq - \bar{\qq} \bar{\uu} \bar{\bb_3}.
\end{equation}
This sub-branch thus has non-vanishing \hypertarget{N2_a_41}{matrices}
\begin{equation}
	\mb = \begin{pmatrix}
		0 & 0 \\ \bb_3 & 0
	\end{pmatrix}, \quad
	\mp = \begin{pmatrix}
		0 & 0 \\ \bb_3\uu & 0
	\end{pmatrix}, \quad
	N_0 = \begin{pmatrix}
		0 & \qq \\ \bar{\qq} & b
	\end{pmatrix},
\end{equation}
subject to the constraints 
\begin{equation}
	 0 = \bb_3 \qq - \bar{\qq} \bar{\bb_3}, \quad 0 = \bb_3\uu \qq - \bar{\qq} \bar{\uu}\bar{\bb_3}.
\end{equation}
Notice that these matrices and constraints are very similar to $(\mathcal{M}_{\hat{\a},\,\mathsf{2.1.ii}}, \mathcal{C}_{\hat{\a},\,\mathsf{2.1.ii}})$.  In fact, employing the automorphisms of $\hat{\a}$, we can show that the above system is equivalent to sub-branch \hyperlink{N2_a_21ii}{$\mathsf{2.1.ii}$}.  Using the endomorphisms of $\s_{\bar{1}}$ and the constraints above, we can set $\bb_3$, $\bb_3\uu$ and $\qq$ to lie along $\ii$, and set $b = 0$.  In particular, this means that $\uu \in \mathbb{R}$.  Scaling $\sB$, $\sP$, and $\sH$, we find the matrices
\begin{equation}
 	\mb = \begin{pmatrix}
		0 & 0 \\ \ii & 0
	\end{pmatrix}, \quad
	\mp = \begin{pmatrix}
		0 & 0 \\ \ii & 0
	\end{pmatrix}, \quad
	N_0 = \begin{pmatrix}
		0 & \ii \\ -\ii & 0
	\end{pmatrix},
\end{equation}
which under the basis transformation with 
\begin{equation}
	C = \begin{pmatrix}
		1 & -1 \\ 0 & 1
	\end{pmatrix},
\end{equation}
recovers the maximal super-extension of sub-branch \hyperlink{N2_a_21ii}{$\mathsf{2.1.ii}$}.  Thus, this sub-branch does not contribute any new super-extensions to $\hat{\a}$.
\\
\paragraph{\textbf{Sub-branch 4.2}}
The $[\Q, \Q, \Q]$ identity still imposes that either $N_0$ or $\mh$ must vanish in this
sub-branch; however, we can now consider the case where $N_0 = 0$ as we have $N_1 \neq 0$.
First, consider case $(\mathsf{i})$, with $N_0 = 0$ such that $\mh \neq 0$.  The conditions remaining are
\begin{equation}
	\begin{split}
		0 &= \mh N_1 + N_1 \mh^\dagger \\
		0 &= \mb N_1 - N_1 \mb^\dagger \\
		0 &= \mp N_1 - N_1 \mp^\dagger, \\
	\end{split}
\end{equation}
The latter two conditions tell us that $c=0$ and $\bb_3$ is collinear with $\bb_3\uu$ and $\rr$.  Substituting these
results into the first condition, we find
\begin{equation}
	\begin{split}
		0 &= \hh_1 \rr + \rr \overbar{\bb_3 \hh_1 \bb_3^{-1}} \\
		0 &= \Re(\hh_3\rr) + d \Re(\hh_1). \\
	\end{split}
\end{equation}
We know from the analysis of branch 3 that demanding $N_1 \neq 0$ under these conditions imposes $\Re(\hh_1) = 0$; and, that having the condition
\begin{equation}
		0 = \bb_3 \rr - \bar{\rr} \bar{\bb_3}
\end{equation}
means we always satisfy the imaginary part of 
\begin{equation}
	0 = \hh_1 \rr + \rr \overbar{\bb_3 \hh_1 \bb_3^{-1}}.
\end{equation}
Putting all this together, we find the constraints on this sub-branch to be
\begin{equation}
	0 = \Re(\hh_1), \quad 0 = \Re(\hh_3 \rr) , \quad 0 = \bb_3 \rr - \bar{\rr}\bar{\bb_3}, \quad
		0 = \bb_3 \uu \rr - \bar{\rr}\bar{\uu} \bar{\bb_3}, \quad 0 = [\uu, \hh_1].
\end{equation}
The \hypertarget{N2_a_42i}{non-vanishing} matrices are then 
\begin{equation}
	\mb = \begin{pmatrix}
		0 & 0 \\ \bb_3 & 0
	\end{pmatrix}, \quad
	\mp = \begin{pmatrix}
		0 & 0 \\ \bb_3\uu & 0
	\end{pmatrix}, \quad 
	\mh = \begin{pmatrix}
		\hh_1 & 0 \\ \hh_3 & \bb_3 \hh_1 \bb_3^{-1}
	\end{pmatrix} \quad \text{and} \quad 
	N_1 = \begin{pmatrix}
		0 & \rr \\ \bar{\rr} & d
	\end{pmatrix}.
\end{equation}
Notice that $\bb_3$, $\rr$ and $\bb_3\uu$ all being collinear implies that $\uu \in \mathbb{R}$.  Thus the final constraint is satisfied, and, as in sub-branch 4.1, we can use the endomorphisms of $\s_{\bar{1}}$ and the automorphisms of $\hat{\a}$ to rotate $\sB$ and $\sP$ such that we only have the matrix $\mp$, in which $\pp_3 = \ii$.  The resulting matrices and constraints are then equivalent to those found in sub-branch \hyperlink{N2_a_22i}{$\mathsf{2.2.i}$}, and, therefore, this sub-branch does not produce any new super-extensions for $\hat{\a}$.
\\ \\
Now, considering case $(\mathsf{ii})$, let $\mh = 0$.  The remaining conditions are
\begin{equation}
	\begin{split}
		0 &= \mb N_i - N_i \mb^\dagger \quad \text{where} \quad i \in \{0, 1\} \\
		0 &= \mp N_i - N_i \mp^\dagger \quad \text{where} \quad i \in \{0, 1\}.
	\end{split}
\end{equation}
Therefore, $N_0$ and $N_1$ take the same form in this instance: both $a$ and $c$ vanish, with $\qq$ and $\rr$ being collinear to both $\bb_3$ and $\bb_3\uu$.  In summary, the constraints are
\begin{equation}
	0 = \bb_3 \qq - \bar{\qq} \bar{\bb_3}, \quad 0 = \bb_3\uu \qq - \bar{\qq} \bar{\uu} \bar{\bb_3}, \quad 0 = \bb_3 \rr - \bar{\rr} \bar{\bb_3}, \quad 0 = \bb_3 \uu \rr - \bar{\rr} \bar{\uu} \bar{\bb_3},
\end{equation}
and the non-vanishing \hypertarget{N2_a_42ii}{matrices} are
\begin{equation}
	\mb = \begin{pmatrix}
		0 & 0 \\ \bb_3 & 0
	\end{pmatrix}, \quad
	\mp = \begin{pmatrix}
		0 & 0 \\ \bb_3\uu & 0
	\end{pmatrix}, \quad
	N_0 = \begin{pmatrix}
		0 & \qq \\ \bar{\qq} & b
	\end{pmatrix}, \quad \text{and} \quad 
	N_1 = \begin{pmatrix}
		0 & \rr \\ \bar{\rr} & d
	\end{pmatrix}.
\end{equation}
Through the same use of the subgroup $\G \subset \GL(\s_{\bar{0}}) \times \GL(\s_{\bar{1}})$ as discussed for case $(\mathsf{i})$, we find that this sub-branch is equivalent to \hyperlink{N2_a_22ii}{$\mathsf{2.2.ii}$} for $\hat{\a}$.
\\
\paragraph{\textbf{Sub-branch 4.3}} Now with $N_3 \neq 0$, we can use 
\begin{equation}
	-N_3 = \mp N_1 - N_1 \mp^\dagger \quad \text{and} \quad  0 = \pi \vt \mp N_3 \vt^\dagger + \vt N_3 (\pi \vt \mp)^\dagger
\end{equation}
to first write $N_3$ in terms of $\mp$ and $N_1$ before setting $c=0$ by substituting $\vt = (0,1)$ into the latter condition.
This produces the matrix
\begin{equation}
	N_3 = \begin{pmatrix}
		0 & 0 \\ 0 &  \bar{\rr} \bar{\uu}\bar{\bb_3} - \bb_3\uu\rr
	\end{pmatrix}.
\end{equation}
Since $N_3$ and $\mb$ are non-vanishing, the condition from the $[\Q, \Q, \Q]$ identity no longer
states that we must set either $N_0$ or $\mh$ to zero.  We have
\begin{equation}
	\Re(\vt N_0 \vt^\dagger) \vt\mh = \vt N_3 \vt^\dagger \vt \mb.
\end{equation}
Substituting $\vt = (0, 1)$ into the above condition, we find
\begin{equation} \label{eq:N2_branch4_a_43_h3}
	b \hh_3 = (\bar{\rr} \bar{\uu}\bar{\bb_3} - \bb_3\uu\rr) \bb_3 \quad \text{and} \quad b \hh_1 = 0.
\end{equation}
By assumption $N_3 \neq 0$; therefore, both $b$ and $\hh_3$ cannot vanish.  Using this result, the second constraint tells us that $\hh_1 = 0$.  Thus $\mh$ is reduced to a strictly lower-diagonal matrix.  As in sub-branch \hyperlink{N2_a_42i}{$\mathsf{4.2}$}, we have 
\begin{equation}
	\begin{split}
		0 &= \mb N_i - N_i \mb^\dagger \quad \text{where} \quad i \in \{0, 1\} \\
		0 &= \mp N_0 - N_0 \mp^\dagger,
	\end{split}
\end{equation}
which tell us $a$ and $c$ vanish, and
\begin{equation}
	0 = \bb_3\qq - \bar{\qq}\bar{\bb_3}, \quad 0 = \bb_3\uu\qq - \bar{\qq}\bar{\uu}\bar{\bb_3}, \quad \text{and}
	\quad 0 = \bb_3 \rr - \bar{\rr}\bar{\bb_3}.
\end{equation}
Using these results and the rewriting of $\hh_3$ in \eqref{eq:N2_branch4_a_43_h3}, the conditions from the $[\bH, \Q, \Q]$ identity are instantly satisfied.  Therefore, the constraints on the parameters of this sub-branch are
\begin{equation}
	0 = \bb_3\qq - \bar{\qq}\bar{\bb_3}, \; 0 = \bb_3\uu\qq - \bar{\qq}\bar{\uu}\bar{\bb_3}, \; 0 = \bb_3 \rr - \bar{\rr}\bar{\bb_3}, \; \text{and} \; b \hh_3 = (\bar{\rr}\bar{\uu}\bar{\bb_3} - \bb_3\uu\rr) \bb_3.
\end{equation}
Notice that the first three constraints here tell us that $\bb_3$ is collinear with both $\qq$ and $\rr$ and that $\bb_3\uu$ is collinear with $\qq$.  In particular, were we to use the endomorphisms of $\s_{\bar{1}}$ to set $\qq$ to lie along $\ii$, $\bb_3$, $\bb_3\uu$ and $\rr$ would all lie along $\ii$ as well.  Thus, $\bb_3\uu\rr \in \mathbb{R}$, such that $N_3 = 0$.  Therefore, this sub-branch is empty.
\\
\paragraph{\textbf{Sub-branch 4.4}} This sub-branch will be very similar to the one above
due to the similarity in the conditions the super-Jacobi identities imposes on $N_3$ and $N_4$.
Using 
\begin{equation}
		N_4 = \mb N_1 - N_1 \mb^\dagger \quad \text{and} \quad 0 = \beta \vt \mb N_4 \vt^\dagger + \vt N_4 (\beta \vt \mb)^\dagger,
\end{equation}
we know $N_4$ may be written
\begin{equation}
	N_4 = \begin{pmatrix}
		0 & 0 \\ 0 & \bb_3\rr - \bar{\rr}\bar{\bb_3}
	\end{pmatrix}.
\end{equation}
Lemma \ref{lem:N2_111} then tells us that
\begin{equation}
	\Re(\vt N_0 \vt^\dagger) \vt\mh = \vt N_4 \vt^\dagger \vt \mp.
\end{equation}
Substituting $\vt = (0, 1)$ into this condition produces
\begin{equation} \label{eq:N2_C_a_branch44_h_condition}
	b \hh_3 = (\bb_3\rr - \bar{\rr}\bar{\bb_3}) \bb_3\uu \quad \text{and} \quad b \hh_1 = 0.
\end{equation}
Since $\bb_3\uu \neq 0$ and $\bb_3\rr - \bar{\rr}\bar{\bb_3} \neq 0$ by assumption, $b$ cannot vanish; therefore,
$\hh_1 = 0$.  The conditions
\begin{equation}
	\begin{split}
		0 &= \mp N_i - N_i \mp^\dagger \quad \text{where} \quad i \in \{0, 1\} \\
		0 &= \mb N_0 - N_0 \mb^\dagger,  \\
	\end{split}
\end{equation}
tell us that both $a$ and $c$ vanish, and 
\begin{equation}
	0 = \bb_3\qq- \bar{\qq}\bar{\bb_3}, \quad 0 =\bb_3\uu\qq - \bar{\qq}\bar{\uu}\bar{\bb_3}, \quad \text{and}
	\quad 0 = \bb_3\uu \rr - \bar{\rr}\bar{\uu}\bar{\bb_3}.
\end{equation}
Finally, we have the conditions from the $[\bH, \Q, \Q]$ identity, which impose
\begin{equation}
	0 = \Re(\hh_3 \qq) \quad \text{and} \quad 0 = \Re(\hh_3 \rr) .
\end{equation}
However, using the form of $\hh_3$ in \eqref{eq:N2_C_a_branch44_h_condition} and the collinearity of $\bb_3\uu$ with $\qq$ and $\rr$, both of these constraints are already satisfied.  Therefore, the final set of constraints on this sub-branch is
\begin{equation}
	0 = \bb_3\qq- \bar{\qq}\bar{\bb_3}, \quad 0 =\bb_3\uu\qq - \bar{\qq}\bar{\uu}\bar{\bb_3}, \quad 0 = \bb_3\uu \rr - \bar{\rr}\bar{\uu}\bar{\bb_3}, \quad b \hh_3 = (\bb_3\rr - \bar{\rr}\bar{\bb_3}) \bb_3\uu.
\end{equation}
Notice that the first three constraints tell us that $\bb_3$, $\bb_3\uu$, $\qq$, and $\rr$ are collinear.  This tells us that $\bb_3\rr \in \mathbb{R}$; therefore, significantly, $N_4 = 0$.  Thus this sub-branch is empty. 
\\
\paragraph{\textbf{Sub-branch 4.5}}  Now with non-vanishing $N_3$ and $N_4$, we can begin by using
\begin{equation}
	\begin{split}
		-N_3 &= \mp N_1 - N_1 \mp^\dagger \\
		0 &= \pi \vt \mp N_3 \vt^\dagger + \vt N_3 (\pi \vt \mp)^\dagger
	\end{split} \quad \text{and} \quad 
	\begin{split}
		N_4 &= \mb N_1 + N_1 \mb^\dagger \\
		0 &= \beta \vt \mb N_4 \vt^\dagger + \vt N_4 (\beta \vt \mb)^\dagger,
	\end{split}
\end{equation}
to write
\begin{equation}
	N_1 =  \begin{pmatrix}
		0 & \rr \\ \bar{\rr} & d
	\end{pmatrix}, \quad 
	N_3 = \begin{pmatrix}
		0 & 0 \\ 0 & \bar{\rr}\bar{\uu}\bar{\bb_3}-\bb_3\uu\rr
	\end{pmatrix} \quad \text{and} \quad 
	N_4 = \begin{pmatrix}
		0 & 0 \\ 0 & \bb_3\rr - \bar{\rr}\bar{\bb_3}
	\end{pmatrix}.
\end{equation}
Using these results, substitute $\vt = (0, 1)$ into the condition from the $[\Q, \Q, \Q]$
identity to find
\begin{equation} \label{eq:h_3_prescription}
	b \hh_3 = (\bar{\rr}\bar{\uu}\bar{\bb_3} - \bb_3\uu\rr) \bb_3 + (\bb_3\rr - \bar{\rr}\bar{\bb_3})\bb_3\uu \quad \text{and} \quad b \hh_1 = 0.
\end{equation}
As in all previous sub-branches, the $[\P, \Q, \Q]$ and $[\B, \Q, \Q]$ conditions on $N_0$ tell us
\begin{equation}
	0 = a, \quad 0 = \bb_3\qq - \bar{\qq}\bar{\bb_3} \quad \text{and} \quad 0 = \bb_3\uu\qq - \bar{\qq}\bar{\uu}\bar{\bb_3}.
\end{equation}
Finally, the $[\bH, \Q, \Q]$ identities tell us
\begin{equation}
	\begin{split}
		0 &= \Re(\hh_3 \qq) + b \Re(\hh_1)\\
		0 &= \hh_1 \qq + \qq \overbar{\bb_3\hh_1\bb_3^{-1}}
	\end{split} \quad \text{and} \quad
	\begin{split} 
		0 &= \Re(\hh_3 \rr) + d \Re(\hh_1) \\
		0 &= \hh_1 \rr + \rr \overbar{\bb_3\hh_1\bb_3^{-1}}.
	\end{split}
\end{equation}
Since, by assumption, $N_1 \neq 0$, these constraints mean we must have $\Re(\hh_1) = 0$.
If $\rr = 0$, we would need $d \neq 0$, which, when substituted into $0 = d \Re(\hh_1)$, mean $\Re(\hh_1)=0$.  Alternatively, if $\rr \neq 0$, we multiply
\begin{equation}
	0 = \hh_1 \rr + \rr \overbar{\bb_3\hh_1\bb_3^{-1}} 
\end{equation}
on the right by $\rr^{-1}$ and take the real part to obtain $\Re(\hh_1) = 0$.  Knowing this, we can use the fact $\overbar{\bb_3\hh_1\bb_3^{-1}} \in \Im(\mathbb{H})$ to rewrite the remaining imaginary part of this constraint as
\begin{equation}
	0 = [\hh_1, \rr\bb_3].
\end{equation}
Additionally, since $\Re(\hh_1)=0$, we can use $ 0 = \bb_3\qq - \bar{\qq}\bar{\bb_3}$ to instantly
satisfy the condition
\begin{equation}
 0 = \hh_1 \qq + \qq \overbar{\bb_3\hh_1\bb_3^{-1}}.
\end{equation}
These results leave us with
\begin{equation}
	\begin{split}
	\mathcal{C}_{\hat{\a},\,\mathsf{4.5.iii}} = \{
		0 &= \bb_3\qq - \bar{\qq}\bar{\bb_3}, \quad
		0 = \bb_3\uu\qq - \bar{\qq}\bar{\uu}\bar{\bb_3}, \\
		0 &= \Re(\hh_3\qq), \quad
		0 = \Re(\hh_3\rr), \quad 
		0 = \Re(\hh_1), \\
		0 &= [\hh_1, \rr\bb_3], \quad 
		0 = b \hh_1, \quad
		b \hh_3 = - 2 \Im(\bb_3\uu\rr)\bb_3 + 2 \Im(\bb_3\rr)\bb_3\uu \}.
	\end{split}
\end{equation}
Subject to these constraints, the non-vanishing \hypertarget{N2_a_45}{matrices} are
\begin{equation}
	\begin{split}
	\mathcal{M}_{\hat{\a},\,\mathsf{4.5.iii}} = \Big\{ \mb &= \begin{pmatrix}
		0 & 0 \\ \bb_3 & 0
	\end{pmatrix}, \quad
	\mp = \begin{pmatrix}
		0 & 0 \\ \bb_3\uu & 0
	\end{pmatrix}, \quad
	\mh = \begin{pmatrix}
		\hh_1 & 0 \\ \hh_3 & \bb_3\hh_1\bb_3^{-1}
	\end{pmatrix}, \\
	N_0 &= \begin{pmatrix}
		0 & \qq \\ \bar{\qq} & b
	\end{pmatrix}, \quad
	N_1 = \begin{pmatrix}
		0 & \rr \\ \bar{\rr} & d
	\end{pmatrix}, \quad
	N_3 = \begin{pmatrix}
		0 & 0 \\ 0 & \bar{\rr}\bar{\uu}\bar{\bb_3}-\bb_3\uu\rr
	\end{pmatrix}, \quad 
	N_4 = \begin{pmatrix}
		0 & 0 \\ 0 & \bb_3\rr - \bar{\rr}\bar{\bb_3}
	\end{pmatrix} \Big\}.
	\end{split}
\end{equation}
The wealth of parameters describing this sub-branch mean we will only highlight one parameterisation of these super-extensions here, though many more may exist.  In particular, we will choose to set $b$, $d$ and $\hh_3$ to zero.  We will also utilise the subgroup $\G \subset \GL(\s_{\bar{0}}) \times \GL(\s_{\bar{1}})$ to impose that $\qq$, $\bb_3$ and $\bb_3\uu$, lie along $\ii$. The residual endomorphisms of $\s_{\bar{1}}$ may then scale $\rr$ such that its norm becomes $1$. Employing $\Aut(\mathbb{H})$, we can set $\hh_1$ to lie along $\ii$ as well. Having made these choices, the constraint
\begin{equation}
	0 = [\hh_1, \rr\bb_3]
\end{equation}
tells us $\rr \in \mathbb{R} \langle 1, \ii \rangle$.  Notice that for $N_3$ and $N_4$ to be non-vanishing $\rr$ must have a real component; therefore, to simplify the form of the matrices in our example, we will choose $\rr = 1$.  The remaining constraints in $\mathcal{C}_{\hat{\a},\,\mathsf{4.5.iii}}$ are then satisfied, and we can use the scaling symmetry of the $\s_{\bar{0}}$ basis elements to produce 
\begin{equation}
	\begin{split}
	 \mb &= \begin{pmatrix}
		0 & 0 \\ \ii & 0
	\end{pmatrix}, \quad
	\mp = \begin{pmatrix}
		0 & 0 \\ \ii & 0
	\end{pmatrix}, \quad
	\mh = \begin{pmatrix}
		\ii & 0 \\ 0 & \ii
	\end{pmatrix}, \\
	N_0 &= \begin{pmatrix}
		0 & \ii \\ -\ii & 0
	\end{pmatrix}, \quad
	N_1 = \begin{pmatrix}
		0 & 1 \\ 1 & 0
	\end{pmatrix}, \quad
	N_3 = \begin{pmatrix}
		0 & 0 \\ 0 & \ii
	\end{pmatrix} \quad \text{and} \quad 
	N_4 = \begin{pmatrix}
		0 & 0 \\ 0 & \ii
	\end{pmatrix}.
	\end{split}
\end{equation}
\paragraph{\textbf{Sub-branch 4.6}} We find that this sub-branch is empty using the analysis from the previous
sub-branch. Again, we use
\begin{equation}
	\begin{split}
		-N_3 &= \mp N_1 - N_1 \mp^\dagger \\
		0 &= \pi \vt \mp N_3 \vt^\dagger + \vt N_3 (\pi \vt \mp)^\dagger
	\end{split} \quad \text{and} \quad 
	\begin{split}
		N_4 &= \mb N_1 + N_1 \mb^\dagger \\
		0 &= \beta \vt \mb N_4 \vt^\dagger + \vt N_4 (\beta \vt \mb)^\dagger,
	\end{split}
\end{equation}
to write
\begin{equation}
	N_3 = \begin{pmatrix}
		0 & 0 \\ 0 & \bar{\rr}\bar{\uu}\bar{\bb_3}- \bb_3\uu\rr
	\end{pmatrix} \quad
	N_4 = \begin{pmatrix}
		0 & 0 \\ 0 & \bb_3\rr - \bar{\rr}\bar{\bb_3}
	\end{pmatrix}.
\end{equation}
Substituting these matrices into 
\begin{equation}
	\begin{split}
		\tfrac12 [\beta, \vt N_2 \vt^\dagger] &= \beta \vt \mb N_3 \vt^\dagger
		+ \vt N_3 (\beta \vt \mb)^\dagger \\
		\tfrac12 [\pi, \vt N_2 \vt^\dagger] &= \pi \vt \mp N_4 \vt^\dagger
		+ \vt N_4 (\pi \vt \mp)^\dagger ,
	\end{split}
\end{equation}
the R.H.S. of both of these constraints vanishes, setting $N_2 = 0$.  Therefore, this
sub-branch is empty.
\\
\paragraph{$\hat{\n}_-$}~\\ \\
Setting $\mu = \eta = 0$, $\lambda = 1$ and 
$\varepsilon = -1$, the first condition
in \eqref{eq:N2_branch4_condition} becomes
\begin{equation}
	[\uu, \hh_1] = 2 \uu.
\end{equation}
Since $[\uu, \hh_1]$ is perpendicular to $\uu$ in $\Im(\mathbb{H})$ this branch 
cannot provide a super-extension for $\hat{\n}_-$.  
\\
\paragraph{$\hat{\n}_+$}~\\ \\In this case, for which $\lambda = \varepsilon = 0$, 
$\mu = 1$, and $\eta = -1$, the first constraint in \eqref{eq:N2_branch4_condition} gives us
\begin{equation} \label{eq:N2_branch4_n-}
	[\hh_1, \uu] = \uu^2 + 1.
\end{equation}
Taking the real part of \eqref{eq:N2_branch4_n-} produces
\begin{equation}
	\Re(\uu^2) = -1,
\end{equation}
therefore, $\uu \in \Im(\mathbb{H})$, such that $|\uu|^2 = 1$, i.e. it is
a unit-norm vector quaternion, or \textit{right versor}.  The imaginary part
of  \eqref{eq:N2_branch4_n-} imposes
\begin{equation}
	[\uu, \hh_1] = 0.
\end{equation} 
Thus, we could get a super-extension of $\hat{\n}_+$ in this branch.
Wishing to write our parameters in terms of $\bb_3$, we have
$\pp_3 = \bb_3 \uu$ and $\hh_4 = \bb_3 (\hh_1 - \uu) \bb_3^{-1}$, where
$\uu \in \Im(\mathbb{H})$, such that $\uu^2 = -1$.
\\ \\
As with the $\hat{\a}$ case above, all of the conditions
of Lemmas \ref{lem:N2_011} and \ref{lem:N2_111} must be taken into account.  
The conditions
\begin{equation}
		N_4 = \mb N_1 - N_1 \mb^\dagger \quad \text{and} \quad -N_3 = \mp N_1 - N_1 \mp^\dagger
\end{equation}
tell us that if $N_1 = 0$, $N_3 = 0$ and $N_4 = 0$.  Substituting these results into
\begin{equation}
	\begin{split}
		- \Re(\vt N_0\vt^\dagger) \pi &= \pi \vt \mp N_3 \vt^\dagger + \vt N_3 (\pi \vt \mp)^\dagger \\
		\tfrac12 [\beta, \vt N_2 \vt^\dagger] &= \beta \vt \mb N_3 \vt^\dagger
		+ \vt N_3 (\beta \vt \mb)^\dagger \\
	\end{split} \qquad
	\begin{split}
		\Re(\vt N_0\vt^\dagger) \beta &= \beta \vt \mb N_4 \vt^\dagger + \vt N_4 (\beta \vt \mb)^\dagger \\
		\tfrac12 [\pi, \vt N_2 \vt^\dagger] &= \pi \vt \mp N_4 \vt^\dagger
		+ s N_4 (\pi \vt \mp)^\dagger,	\\
	\end{split}
\end{equation}
we see that if $N_3$ or $N_4$ vanish, so must $N_0$ and $N_2$.  Equally, if $N_3$ vanishes
$N_4$ necessarily vanishes and vice versa due to
\begin{equation}
		- N_4 = \mh N_3 + N_3 \mh^\dagger \quad \text{and} \quad  N_3 = \mh N_4 + N_4 \mh^\dagger.
\end{equation}
Therefore, based on these dependencies, our investigation into this branch of possible super-extensions
of $\hat{\n}_+$ divides into the following sub-branches.
\begin{enumerate}
	\item $N_1 \neq 0$, and $N_0 = N_2 = N_3 = N_4 = 0$
	\item $N_1 \neq 0$, $N_3 \neq 0$, $N_4 \neq 0$, and $N_0 = N_2 = 0$
	\item $N_1 \neq 0$, $N_3 \neq 0$, $N_4 \neq 0$, $N_0 \neq 0$ and $N_2 = 0$
	\item $N_1 \neq 0$, $N_3 \neq 0$, $N_4 \neq 0$, $N_0 = 0$ and $N_2 \neq 0$
	\item $N_1 \neq 0$, $N_3 \neq 0$, $N_4 \neq 0$, $N_0 \neq 0$ and $N_2 \neq 0$
\end{enumerate}
Unlike the super-extensions of $\hat{\n}_+$ found in branches 1, 2 and 3, the $[\Q, \Q, \Q]$ identity will not impose 
that either $N_0$ or $\mh$ must vanish.  In the first sub-branch above, we instantly see that $N_0=0$; therefore, the super-extensions found here are extensions satisfying $(\mathsf{i})$.  However, all other sub-branches have either non-vanishing $N_3$ or non-vanishing $N_4$.  Since $\mb\neq 0$ and $\mp \neq 0$, the $[\Q, \Q, \Q]$ identity will now form relationships between $N_0$, $N_3$ and $N_4$, with, in general, $\mh \neq 0$.  Therefore, these super-extensions, for which $N_0 \neq 0$ and $\mh \neq 0$, will be labelled $(\mathsf{iii})$ to distinguish them from the cases $(\mathsf{i})$ and $(\mathsf{ii})$.
\\
\paragraph{\textbf{Sub-branch 4.1}} With only $N_1 \neq 0$, the conditions from Lemmas \ref{lem:N2_011}
and \ref{lem:N2_111} reduce to 
\begin{equation}
	\begin{split}
		0 &= \mh N_1 +N_1 \mh^\dagger \\
		0 &= \mp N_1 - N_1 \mp^\dagger \\
		0 &= \mb N_1 - N_1 \mb^\dagger.
	\end{split}
\end{equation}
The latter two conditions tell us 
\begin{equation}
	0 = c, \quad 0 = \bb_3 \rr - \bar{\rr}\bar{\bb_3} \quad \text{and} \quad 0 = \bb_3 \uu \rr - \bar{\rr}\bar{\uu}\bar{\bb_3},
\end{equation}
which, when substituted into the first conditions, produce
\begin{equation}
	0 = \Re(\hh_1) \quad \text{and} \quad 0 = \Re(\hh_3 \rr).
\end{equation}
Therefore, the non-vanishing \hypertarget{N2_n+_41i}{matrices} for this super-extension are
\begin{equation}
	\mb = \begin{pmatrix}
		0 & 0 \\ \bb_3 & 0
	\end{pmatrix}, \quad 
	\mp = \begin{pmatrix}
		0 & 0 \\ \bb_3\uu & 0
	\end{pmatrix}, \quad 
	\mh = \begin{pmatrix}
		\hh_1 & 0 \\ \hh_3 & \bb_3(\hh_1 - \uu)\bb_3^{-1} 
	\end{pmatrix}, \quad
	N_1 = \begin{pmatrix}
		0 & \rr \\ \bar{\rr} & d
	\end{pmatrix},
\end{equation}
subject to the constraints
\begin{equation}
	0 = [\uu, \hh_1], \quad 0 = \Re(\hh_1), \quad 0 = \Re(\hh_3 \rr), \quad 0 = \bb_3 \rr - \bar{\rr}\bar{\bb_3}, \quad 0 = \bb_3 \uu \rr - \bar{\rr}\bar{\uu}\bar{\bb_3}, \quad \uu^2 = -1.
\end{equation}
However, notice that the final three constraints listed above require one of $\bb_3$, $\uu$ or $\rr$ to vanish.  Since neither $\bb_3$ or $\uu$ can vanish in this sub-branch, it must be that $\rr = 0$.  Therefore, the final set of matrices is 
\begin{equation}
	\mathcal{M}_{\hat{\n}_+,\,\mathsf{4.1.i}} = \Big\{ \mb = \begin{pmatrix}
		0 & 0 \\ \bb_3 & 0
	\end{pmatrix}, \quad 
	\mp = \begin{pmatrix}
		0 & 0 \\ \bb_3\uu & 0
	\end{pmatrix}, \quad 
	\mh = \begin{pmatrix}
		\hh_1 & 0 \\ \hh_3 & \bb_3(\hh_1 - \uu)\bb_3^{-1} 
	\end{pmatrix}, \quad
	N_1 = \begin{pmatrix}
		0 & 0 \\ 0 & d
	\end{pmatrix} \Big\},
\end{equation}
and the final set of constraints is
\begin{equation}
	\mathcal{C}_{\hat{\n}_+,\,\mathsf{4.1.i}} = \{ 0 = [\uu, \hh_1], \quad  0 = \Re(\hh_1), \quad \uu^2 = -1\}.
\end{equation}
To demonstrate that this sub-branch of $\cS$ is not empty, choose to set $\hh_1$ and $\hh_3$ to zero.  Using the endomorphisms of $\s_{\bar{1}}$ and $\Aut(\mathbb{H})$ on $\bb_3$ and $\uu$, respectively, we may write $\bb_3 = \ii$ and $\uu = \jj$.  Employing the scaling symmetry of $\sZ$, we arrive at the super-extension
\begin{equation}
	 \mb = \begin{pmatrix}
		0 & 0 \\ \ii & 0
	\end{pmatrix}, \quad 
	\mp = \begin{pmatrix}
		0 & 0 \\ \kk & 0
	\end{pmatrix}, \quad 
	\mh = \begin{pmatrix}
		0 & 0 \\ 0 & \jj
	\end{pmatrix}, \quad
	N_1 = \begin{pmatrix}
		0 & 0 \\ 0 & 1
	\end{pmatrix}.
\end{equation}  
Thus, this sub-branch is not empty.  We may then introduce $\hh_1$ while continuing to fix all the parameters of the super-extension; however, this cannot be achieved on introducing $\hh_3$.
\\
\paragraph{\textbf{Sub-branch 4.2}} Now with $N_3 \neq 0$ and $N_4 \neq 0$ as well as $N_1\neq 0$,
we can use the conditions
\begin{equation}
	\begin{split}
		-N_3 &= \mp N_1 - N_1 \mp^\dagger \\
		N_4 &= \mb N_1 - N_1 \mb^\dagger
	\end{split} \qquad 
	\begin{split}
		0 &= \beta \vt \mb N_3 \vt^\dagger
		+ \vt N_3 (\beta \vt \mb)^\dagger \\
		0 &= \pi \vt \mp N_3 \vt^\dagger + \vt N_3 (\pi \vt \mp)^\dagger
	\end{split} \qquad
	\begin{split}
		0 &= \beta \vt \mb N_4 \vt^\dagger + \vt N_4 (\beta \vt \mb)^\dagger \\
		0 &= \pi \vt \mp N_4 \vt^\dagger
		+ \vt N_4 (\pi \vt \mp)^\dagger,
	\end{split}
\end{equation}
and the analysis of branches 2 and 3 to write
\begin{equation}
	N_1 = \begin{pmatrix}
		0 & \rr \\ \bar{\rr} & d
	\end{pmatrix} \quad 
	N_3 = \begin{pmatrix}
		0 & 0 \\ 0 & \bar{\rr}\bar{\uu}\bar{\bb_3}-\bb_3\uu\rr
	\end{pmatrix} \quad 
	N_4 = \begin{pmatrix}
		0 & 0 \\ 0 & \bb_3 \rr - \bar{\rr}\bar{\bb_3}
	\end{pmatrix}.
\end{equation}
This leaves only the $[\bH, \Q, \Q]$ conditions:
\begin{equation}
	\begin{split}
		0 &= \mh N_1 + N_1 \mh^\dagger \\
		- N_4 &= \mh N_3 + N_3 \mh^\dagger \\
		N_3 &= \mh N_4 + N_4 \mh^\dagger. \\	
	\end{split}
\end{equation}
We know from sub-branch \hyperlink{N2_n+_41i}{$\mathsf{4.1.i}$} that, since $c = 0$, the first of these produces
\begin{equation}
	0 = \Re(\hh_1) \quad \text{and} \quad 0 = \Re(\hh_3 \rr).
\end{equation}
Writing $\bar{\rr}\bar{\uu}\bar{\bb_3}-\bb_3\uu\rr=-2\Im(\bb_3\uu\rr)$ and $\bb_3 \rr - \bar{\rr}\bar{\bb_3}=2\Im(\bb_3\rr)$ to simplify our expressions, the second and third conditions give us
\begin{equation}
	\begin{split}
		\Im(\bb_3\rr) &= \hh_4 \Im(\bb_3\uu\rr) + \Im(\bb_3\uu\rr) \bar{\hh_4} \\
		- \Im(\bb_3\uu\rr) &= \hh_4 \Im(\bb_3\rr) + \Im(\bb_3\rr) \bar{\hh_4},
	\end{split}
\end{equation}
respectively, where $\hh_4 = \bb_3(\hh_1 - \uu)\bb_3^{-1}$.   Notice that since $\hh_1, \uu \in \Im(\mathbb{H})$, and $\hh_4$ is written
in terms of the adjoint action of $\bb_3 \in \mathbb{H}$, $\hh_4 \in \Im(\mathbb{H})$.
Therefore, using $\bar{\hh_4} = - \hh_4$, we find
\begin{equation}
	\Im(\bb_3\rr) = - [\hh_4, [\hh_4, \Im(\bb_3\rr)]] \quad \text{and} \quad 
	\Im(\bb_3\uu\rr) = - [\hh_4, [\hh_4, \Im(\bb_3\uu\rr)]].
\end{equation}
This imposes the constraint that $\Im(\bb_3\rr)$ and $\Im(\bb_3\uu\rr)$ must be perpendicular
to $\hh_4$ in $\Im(\mathbb{H})$.  The constraints for this sub-branch are summarised
as follows.
\begin{equation}
	\begin{split}
		\mathcal{C}_{\hat{\n}_+,\,\mathsf{4.2.iii}} = \{ 
			&0 = [\hh_1, \uu], \quad
			-1 = \uu^2 \quad
			0 = \Re(\hh_1), \quad
			0 = \Re(\hh_3 \rr), \\
		&\Im(\bb_3\rr) = - [\hh_4, [\hh_4, \Im(\bb_3\rr)]], \quad
		\Im(\bb_3\uu\rr) = - [\hh_4, [\hh_4, \Im(\bb_3\uu\rr)]] \}.
	\end{split}
\end{equation}
The non-vanishing \hypertarget{N2_n+_42i}{matrices} are then
\begin{equation}
	\begin{split} 
	\mathcal{M}_{\hat{\n}_+,\,\mathsf{4.2.iii}} = \Big\{ \mb &= \begin{pmatrix}
		0 & 0 \\ \bb_3 & 0
	\end{pmatrix}, \quad 
	\mp = \begin{pmatrix}
		0 & 0 \\ \bb_3\uu & 0
	\end{pmatrix}, \quad 
	\mh = \begin{pmatrix}
		\hh_1 & 0 \\ \hh_3 & \bb_3(\hh_1 - \uu)\bb_3^{-1} 
	\end{pmatrix}, \\
	N_1 &= \begin{pmatrix}
		0 & \rr \\ \bar{\rr} & d
	\end{pmatrix}, \quad 
	N_3 = \begin{pmatrix}
		0 & 0 \\ 0 & - 2 \Im(\bb_3\uu\rr)
	\end{pmatrix}, \quad
	N_4 = \begin{pmatrix}
		0 & 0 \\ 0 & 2 \Im(\bb_3\rr)
	\end{pmatrix} \Big\}. \end{split}		
\end{equation}
To demonstrate the existence of super-extensions in this sub-branch, we will begin by simplifying our parameter set as much as possible.  In particular, we begin by setting both $\hh_3$ and $d$ to zero.  We then utilise $\Aut(\mathbb{H})$ and the endomorphisms of $\s_{\bar{1}}$ to set $\uu = \jj$ and impose that $\bb_3$ lies along $\ii$.  Notice that with $\uu$ along $\jj$, the first constraint in $\mathcal{C}_{\hat{\n}_+,\,\mathsf{4.2.iii}}$ tells us that $\hh_1$ must also lie along $\jj$, as must $\hh_4 = \bb_3(\hh_1 - \uu)\bb_3^{-1}$.  With these choices, the two constraints involving $\hh_4$ impose $\rr \in \mathbb{R} \langle 1, \jj \rangle$, and that $|\hh_1| = \tfrac12$ or $|\hh_1| = \tfrac32$.  Residual endomorphisms then allow us to scale $\rr$ such that it has unit norm.  Finally, we can scale the $\s_{\bar{0}}$ basis elements to arrive at
\begin{equation}
	\begin{split} 
	\mb &= \begin{pmatrix}
		0 & 0 \\ \ii & 0
	\end{pmatrix}, \quad 
	\mp = \begin{pmatrix}
		0 & 0 \\ \kk & 0
	\end{pmatrix}, \quad 
	\mh = \begin{pmatrix}
		\jj & 0 \\ 0 & \jj 
	\end{pmatrix}, \\
	N_1 &= \begin{pmatrix}
		0 & 1 + \jj \\ 1 - \jj & 0
	\end{pmatrix}, \quad 
	N_3 = \begin{pmatrix}
		0 & 0 \\ 0 & \kk - \ii
	\end{pmatrix}, \quad
	N_4 = \begin{pmatrix}
		0 & 0 \\ 0 & \kk + \ii
	\end{pmatrix}.
	\end{split}
\end{equation} 
\paragraph{\textbf{Sub-branch 4.3}} The beginning of the investigation of this sub-branch is identical
to that of the previous sub-branch.  The $[\P, \Q, \Q]$ and $[\B, \Q, \Q]$ identities produce 
\begin{equation}
	\begin{split}
		-N_3 &= \mp N_1 - N_1 \mp^\dagger \\
		N_4 &= \mb N_1 - N_1 \mb^\dagger
	\end{split} \qquad
	\begin{split}
		0 &= \beta \vt \mb N_3 \vt^\dagger
		+ \vt N_3 (\beta \vt \mb)^\dagger \\
		0 &= \pi \vt \mp N_4 \vt^\dagger
		+ \vt N_4 (\pi \vt \mp)^\dagger,
	\end{split}
\end{equation}
where the first two conditions give $N_3$ and $N_4$ the form
\begin{equation}
	N_3 = \begin{pmatrix}
		0 & c \bb_3\uu \\ - c \overbar{\bb_3\uu} & \bar{\rr}\bar{\uu}\bar{\bb_3}- \bb_3\uu\rr
	\end{pmatrix} \quad \text{and} \quad
	N_4 = \begin{pmatrix}
		0 & - c\bar{\bb_3} \\ c\bb_3 & \bb_3\rr - \bar{\rr}\bar{\bb_3}
	\end{pmatrix}.
\end{equation}
Substituting this $N_3$ with $\vt = (0, 1)$ into
\begin{equation}
	0 = \beta \vt \mb N_3 \vt^\dagger
		+ \vt N_3 (\beta \vt \mb)^\dagger,
\end{equation}
we acquire
\begin{equation}
	0 = 2 c |\bb_3|^2 \Im(\beta \uu) \quad \forall \beta \in \Im(\mathbb{H}).
\end{equation}
As, by assumption, $\bb_3 \neq 0$ and $\uu \neq 0$, this imposes $c=0$, such that
\begin{equation}
	N_1 = \begin{pmatrix}
		0 & \rr \\ \bar{\rr} & d
	\end{pmatrix} \quad 
	N_3 = \begin{pmatrix}
		0 & 0 \\ 0 & \bar{\rr}\bar{\uu}\bar{\bb_3}-\bb_3\uu\rr
	\end{pmatrix} \quad 
	N_4 = \begin{pmatrix}
		0 & 0 \\ 0 & \bb_3 \rr - \bar{\rr}\bar{\bb_3}
	\end{pmatrix}.
\end{equation}
With this form of $N_3$ and $N_4$,
\begin{equation}
	\begin{split}
		- \Re(\vt N_0\vt^\dagger) \pi &= \pi \vt \mp N_3 \vt^\dagger + \vt N_3 (\pi \vt \mp)^\dagger \\
		\Re(\vt N_0\vt^\dagger) \beta &= \beta \vt \mb N_4 \vt^\dagger + \vt N_4 (\beta \vt \mb)^\dagger, \\
	\end{split}
\end{equation}
have a vanishing R.H.S., showing that $N_0 = 0$.  This result contradicts our assumption
that $N_0 \neq 0$; therefore, this sub-branch does not contain any super-extensions.
\\
\paragraph{\textbf{Sub-branch 4.4}} Letting $N_0 = 0$ and $N_2 \neq 0$, we can use 
\begin{equation}
	\begin{split}
		-N_3 &= \mp N_1 - N_1 \mp^\dagger \\
		N_4 &= \mb N_1 - N_1 \mb^\dagger
	\end{split} \qquad 
	\begin{split}
		0 &= \beta \vt \mb N_4 \vt^\dagger + \vt N_4 (\beta \vt \mb)^\dagger \\
		0 &= \pi \vt \mp N_3 \vt^\dagger + \vt N_3 (\pi \vt \mp)^\dagger,
	\end{split}
\end{equation}
to again write
\begin{equation}
	N_1 = \begin{pmatrix}
		0 & \rr \\ \bar{\rr} & d
	\end{pmatrix} \quad 
	N_3 = \begin{pmatrix}
		0 & 0 \\ 0 & \bar{\rr}\bar{\uu}\bar{\bb_3} -\bb_3\uu\rr
	\end{pmatrix} \quad 
	N_4 = \begin{pmatrix}
		0 & 0 \\ 0 & \bb_3 \rr - \bar{\rr}\bar{\bb_3}
	\end{pmatrix}.
\end{equation}
Substituting these $N_i$ into 
\begin{equation}
	\begin{split}
	\tfrac12 [\beta, \vt N_2 \vt^\dagger] &= \beta \vt \mb N_3 \vt^\dagger
		+ \vt N_3 (\beta \vt \mb)^\dagger \\
	\tfrac12 [\pi, \vt N_2 \vt^\dagger] &= \pi \vt \mp N_4 \vt^\dagger
		+ \vt N_4 (\pi \vt \mp)^\dagger, 
	\end{split}
\end{equation}
the R.H.S. vanishes for both, showing $N_2 = 0$, contradicting our initial assumption in this 
sub-branch.
\\
\paragraph{\textbf{Sub-branch 4.5}} With none of the $N_i$ vanishing, we start again by writing $N_3$ and
$N_4$ in terms of $N_1$ using conditions from the $[\P, \Q, \Q]$ and $[\B, \Q, \Q]$ identities:
\begin{equation}
	N_3 = \begin{pmatrix}
		0 & c \overbar{\bb_3\uu} \\ - c \bb_3\uu & \bar{\rr}\bar{\uu}\bar{\bb_3}- \bb_3\uu\rr
	\end{pmatrix} \quad \text{and} \quad
	N_4 = \begin{pmatrix}
		0 & - c\bar{\bb_3} \\ c\bb_3 & \bb_3\rr - \bar{\rr}\bar{\bb_3}	
	\end{pmatrix}.	
\end{equation}
Letting
\begin{equation}
	N_2 = \begin{pmatrix}
		\nn & \mm \\ -\bar{\mm} & \ll
	\end{pmatrix} \quad \text{where} \quad \nn, \ll \in \Im(\mathbb{H}), \quad \mm \in \mathbb{H},
\end{equation}
we can use
\begin{equation} \label{eq:N2_n+_branch4_N2_condition}
	\tfrac12 [\beta, \vt N_2 \vt^\dagger] = \beta \vt \mb N_3 \vt^\dagger
		+ \vt N_3 (\beta \vt \mb)^\dagger
\end{equation}
to write $N_2$ in terms of $\bb_3$ and $\uu$.  First let $\vt = (0,1)$ to find
\begin{equation}
	\tfrac12 [\beta, \ll] = -c [ \beta, \bb_3 \uu\bar{\bb_3}] \quad \forall 0 \neq \beta \in \Im(\mathbb{H}).
\end{equation}
Therefore, 
\begin{equation}
	\ll = -2c \bb_3\uu\bar{\bb_3}.
\end{equation}
Next, substitute in $\vt = (1, 1)$ to get
\begin{equation}
	[\beta, 2\Im(\mm)] + \tfrac12 [\beta, \ll] = -c [\beta, \bb_3\uu\bar{\bb_3}].
\end{equation}
Using the previous result, this tells us that $\Im(\mm) = 0$.  Analogous calculations
with $\vt = (0, \ii)$ and $\vt = (1, \ii)$ show that, in fact, all of $\mm$ must vanish.
Finally, substituting in $\vt = (1, 0)$ into \eqref{eq:N2_n+_branch4_N2_condition}, we find $\nn = 0$ since the R.H.S. vanishes.  
Therefore, we are left with
\begin{equation}
	N_2 = \begin{pmatrix}
		0 & 0 \\ 0 & -2c\bb_3\uu\bar{\bb_3}
	\end{pmatrix}.
\end{equation}
We would have arrived at the same expression had we used $N_4$ and
\begin{equation}
	\tfrac12 [\pi, \vt N_2 \vt^\dagger] = \pi \vt \mp N_4 \vt^\dagger
		+ \vt N_4 (\pi \vt \mp)^\dagger.
\end{equation}
This form of $N_2$ automatically satisfies all other conditions it is involved in
from both the $[\B, \Q, \Q]$ and $[\P, \Q, \Q]$ identities.  Finally, we can
put this $N_2$ into 
\begin{equation}
	0 = \mh N_2 + N_2 \mh^\dagger
\end{equation}
to get
\begin{equation}
	0 = \hh_4 \ll + \ll \bar{\hh_4},
\end{equation}
where $\hh_4 = \bb_3(\hh_1 - \uu)\bb_3^{-1}$ and $\ll= -2c\bb_3\uu\bar{\bb_3}$.  Working through
some algebra, noting $\Re(\uu^2) = -1$ and the fact $c \neq 0$ for $N_2 \neq 0$, we arrive
at $\hh_1 \uu = \overbar{\hh_1\uu}$.  Since $\uu \in \Im(\mathbb{H})$, this forces $\hh_1 \in \Im(\mathbb{H})$
such that $\uu$ and $\hh_1$ are collinear. 
\\ \\
Now turn to $N_0$ and consider
\begin{equation}
	\begin{split}
		- \Re(\vt N_0\vt^\dagger) \pi &= \pi \vt \mp N_3 \vt^\dagger + \vt N_3 (\pi \vt \mp)^\dagger \\
		\Re(\vt N_0\vt^\dagger) \beta &= \beta \vt \mb N_4 \vt^\dagger + \vt N_4 (\beta \vt \mb)^\dagger.
	\end{split}
\end{equation}
Letting $\vt=(1, 0)$ in either of these conditions, we find that $a = 0$.  
Next, substituting $\vt = (0, 1)$ into the second condition produces
\begin{equation}
	- b = 2 c |\bb_3|^2.
\end{equation}
We would have arrived at the same result had we substituted into the first condition
and used the fact $|\uu|^2 = 1$.  Now substituting $\vt = (1, \ss)$ into the second condition, we find
\begin{equation}
	- 2 \Re(\ss\bar{\qq}) - b |\ss|^2 = 2c |\ss|^2 |\bb_3|^2.
\end{equation}
Therefore, using the previous result and letting $\ss = 1$, $\ss = \ii$, $\ss = \jj$
and $\ss = \kk$, we see that all components of $\qq$ must vanish.  All other conditions
on $N_0$ are now automatically satisfied, leaving
\begin{equation}
	N_0 = \begin{pmatrix}
		0 & 0 \\ 0 & - 2c |\bb_3|^2
	\end{pmatrix}.
\end{equation}
Equipped with these $N_i$, we can now analyse the condition from Lemma \ref{lem:N2_111}:
\begin{equation}
	\Re(\vt N_0\vt^\dagger) \vt \mh = \tfrac12 \vt N_2 \vt^\dagger \vt + \vt N_3 \vt^\dagger \vt \mb 
+ \vt N_4 s^\dagger \vt \mp.
\end{equation}
Letting $\vt = (0, 1)$:
\begin{equation} \label{eq:QQQ_branch4_n+}
	- 2c |\bb_3|^2 \begin{pmatrix}
		\hh_3 & \bb_3(\hh_1 - \uu)\bb_3^{-1}
	\end{pmatrix} = -c\bb_3\uu\bar{\bb_3} \begin{pmatrix}
	0 & 1
	\end{pmatrix} - \Im(\bb_3\uu\rr) \begin{pmatrix}
		\bb_3 & 0
	\end{pmatrix} + \Im(\bb_3\rr) \begin{pmatrix}
		\bb_3\uu & 0
	\end{pmatrix}.
\end{equation}
Concentrating on the second component, we have
\begin{equation}
	-2 c |\bb_3|^2 \bb_3 (\hh_1 - \uu) \bb_3^{-1} = -c \bb_3 \uu \bar{\bb_3}.
\end{equation}
Using the fact $|\bb_3|^2 \bb_3 = \bar{\bb_3}$ and cancelling relevant terms leaves 
\begin{equation}
	2 \bb_3\hh_1\bar{\bb_3} = 0.
\end{equation}
Since, by assumption $\bb_3 \neq 0$, we get $\hh_1 = 0$.  The first component of 
\eqref{eq:QQQ_branch4_n+} gives us a prescription for $\hh_3$,
\begin{equation}
	-2 c |\bb_3|^2 \hh_3 = - 2 \Im(\bb_3\uu\rr) \bb_3 + 2 \Im(\bb_3\rr) \bb_3\uu,
\end{equation}
therefore, we can fully describe $\mh$ in terms of $\mb$, $\mp$, and $N_1$.
\\ \\
The final conditions to consider are those from the $[\bH, \Q, \Q]$ super-Jacobi identity for $N_1$, $N_3$
and $N_4$.  First, the $N_1$ condition tell us
\begin{equation}
	0 = c\bar{\hh_3} + \rr \overbar{\bb_3\uu\bb_3^{-1}} \qquad 0 = \Re(\hh_3\rr).
\end{equation}
Notice that the second constraint here is automatically satisfied by the first, since $c \neq 0$ for a non-vanishing $N_0$.
Substituting this expression for $\hh_3$ into the previous prescription, we find
\begin{equation} \label{eq:h4}
	|\rr|^2 |\bb_3|^2 \bb_3 \uu \bb_3^{-1} = [\Im(\bb_3\rr), \Im(\bb_3\uu\rr)] + \Re(\bb_3\uu\rr) \Im(\bb_3\rr)
	- \Re(\bb_3\rr) \Im(\bb_3\uu\rr).
\end{equation}
Now, the constraints that the $N_3$ and $N_4$ conditions produce are the ones given in
sub-branch \hyperlink{N2_n+_42iii}{$\mathsf{4.2.iii}$}:
\begin{equation}
	\Im(\bb_3\rr) = - [\hh_4, [\hh_4, \Im(\bb_3\rr)]] \quad \text{and} \quad 
	\Im(\bb_3\uu\rr) = - [\hh_4, [\hh_4, \Im(\bb_3\uu\rr)]].
\end{equation}
These tell us that $\Im(\bb_3\rr)$ and $\Im(\bb_3\uu\rr)$ are perpendicular to $\hh_4$ in
$\Im(\mathbb{H})$.  Therefore, the expression in \eqref{eq:h4} becomes
\begin{equation}
	0 = \Re(\bb_3\rr), \quad 0 = \Re(\bb_3\uu\rr) \quad \text{and} \quad
	|\rr|^2 \bb_3 \uu \bar{\bb_3} = [\Im(\bb_3\rr), \Im(\bb_3\uu\rr)].
\end{equation} 
Putting all of these constraints together, we have
\begin{equation}
	\begin{split}
	\mathcal{C}_{\hat{\n}_+,\,\mathsf{4.5.iii}} = \{
		\uu^2 &= -1, \quad
		\Im(\bb_3\rr) = - [\hh_4, [\hh_4, \Im(\bb_3\rr)]], \quad
		\Im(\bb_3\uu\rr) = - [\hh_4, [\hh_4, \Im(\bb_3\uu\rr)]], \\
		0 &= \Re(\bb_3\rr), \quad
		0 = \Re(\bb_3\uu\rr), \quad
		|\rr|^2 \bb_3 \uu \bar{\bb_3} = [\Im(\bb_3\rr), \Im(\bb_3\uu\rr)] \},
	\end{split}
\end{equation}
for non-vanishing \hypertarget{N2_n+_45iii}{matrices}
\begin{equation}
	\begin{split}
	\mathcal{M}_{\hat{\n}_+,\,\mathsf{4.5.iii}} = \Big\{
		\mb &= \begin{pmatrix}
			0 & 0 \\ \bb_3 & 0
		\end{pmatrix}, \quad
		\mp = \begin{pmatrix}
			0 & 0 \\ \bb_3\uu & 0
		\end{pmatrix}, \quad
		\mh = \begin{pmatrix}
			0 & 0 \\ -c^{-1} \bb_3\uu\bb_3^{-1} \bar{\rr} & \bb_3 \uu \bb_3^{-1}
		\end{pmatrix}, \quad
		N_0 = \begin{pmatrix}
		0 & 0 \\ 0 & - 2c |\bb_3|^2
		\end{pmatrix}, \\
		N_1 &= \begin{pmatrix}
			c & \rr \\ \bar{\rr} & d
		\end{pmatrix}, \quad
		N_2 = \begin{pmatrix}
		0 & 0 \\ 0 & -2c\bb_3\uu\bar{\bb_3}
		\end{pmatrix}, \quad 
		N_3 = \begin{pmatrix}
		0 & c \overbar{\bb_3\uu} \\ - c \bb_3\uu & - 2 \Im(\bb_3\uu\rr)
		\end{pmatrix}, \quad 
		N_4 = \begin{pmatrix}
		0 & - c\bar{\bb_3} \\ c\bb_3 & 2 \Im(\bb_3\rr)	
		\end{pmatrix} \Big\}.
		\end{split}
\end{equation}
To demonstrate that there are super-extensions in this sub-branch, we will first simplify this system by letting parameters vanish where possible.  In particular, $\rr$ and $d$ in $N_1$ may be set to zero.  This reduces $\mathcal{C}_{\hat{\n}_+,\,\mathsf{4.5.iii}}$ to contain only $\uu^2 = -1$.  Now we can use the endomorphisms of $\s_{\bar{1}}$ to impose $\bb_3 = \ii$, $\uu = \jj$ and $c=1$.  With these choices, the matrices become
\begin{equation}
	\begin{split}
	\mb &= \begin{pmatrix}
			0 & 0 \\ \ii & 0
		\end{pmatrix}, \quad
		\mp = \begin{pmatrix}
			0 & 0 \\ \kk & 0
		\end{pmatrix}, \quad
		\mh = \begin{pmatrix}
			0 & 0 \\ 0 & \jj
		\end{pmatrix}, \quad
		N_0 = \begin{pmatrix}
		0 & 0 \\ 0 & - 2
		\end{pmatrix}, \\
		N_1 &= \begin{pmatrix}
			1 & 0 \\ 0 & 0
		\end{pmatrix}, \quad
		N_2 = \begin{pmatrix}
		0 & 0 \\ 0 & 2\jj
		\end{pmatrix}, \quad 
		N_3 = \begin{pmatrix}
		0 & -\kk \\ -\kk & 0
		\end{pmatrix}, \quad 
		N_4 = \begin{pmatrix}
		0 & \ii \\ \ii & 0
		\end{pmatrix} .
		\end{split}
\end{equation}
As there are no restrictions on the parameter $d$, we may introduce it without affecting our other parameter choices; however, this is not the case for $\rr$.  There are several constraints in $\mathcal{C}_{\hat{\n}_+,\,\mathsf{4.5.iii}}$ involving $\rr$; therefore, we need to examine these constraints to determine whether new parameters must be chosen.  Interrogating 
\begin{equation}
		\Im(\bb_3\rr) = - [\hh_4, [\hh_4, \Im(\bb_3\rr)]] \quad \text{and} \quad 
		\Im(\bb_3\uu\rr) = - [\hh_4, [\hh_4, \Im(\bb_3\uu\rr)]]
\end{equation} 
with the parameter choices stated above, we find that $\rr$ must vanish.  In particular, due to $\hh_4 = \bb_3 \uu \bb_3^{-1}$ having unit length, we cannot replicate the analysis of sub-branch \hyperlink{N2_n+_42iii}{$\mathsf{4.2.iii}$}, where the magnitude of $\hh_4$ was necessarily either $+\tfrac12$ or $-\tfrac12$.  Thus, we cannot produce a super-extension in this sub-branch for which $\rr \neq 0$.  This simplifies the above $(\mathcal{M}, \mathcal{C})$, such that the remaining constraints are
\begin{equation}
	\mathcal{C}_{\hat{\n}_+,\,\mathsf{4.5.iii}} = \{ \uu^2 = -1 \},
\end{equation}
and the non-vanishing matrices are now 
\begin{equation}
	\begin{split}
	\mathcal{M}_{\hat{\n}_+,\,\mathsf{4.5.iii}} = \Big\{
		\mb &= \begin{pmatrix}
			0 & 0 \\ \bb_3 & 0
		\end{pmatrix}, \quad
		\mp = \begin{pmatrix}
			0 & 0 \\ \bb_3\uu & 0
		\end{pmatrix}, \quad
		\mh = \begin{pmatrix}
			0 & 0 \\ 0 & \bb_3 \uu \bb_3^{-1}
		\end{pmatrix}, \quad
		N_0 = \begin{pmatrix}
		0 & 0 \\ 0 & - 2c |\bb_3|^2
		\end{pmatrix}, \\
		N_1 &= \begin{pmatrix}
			c & 0 \\ 0 & d
		\end{pmatrix}, \quad
		N_2 = \begin{pmatrix}
		0 & 0 \\ 0 & -2c\bb_3\uu\bar{\bb_3}
		\end{pmatrix}, \quad 
		N_3 = \begin{pmatrix}
		0 & c \overbar{\bb_3\uu} \\ - c \bb_3\uu & 0
		\end{pmatrix}, \quad 
		N_4 = \begin{pmatrix}
		0 & - c\bar{\bb_3} \\ c\bb_3 & 0
		\end{pmatrix} \Big\}.
	\end{split}
\end{equation}
\paragraph{$\hat{\g}$}~\\ \\  Finally, substitute $\lambda = \eta = \varepsilon = 0$ and $\mu = -1$ into the second constraint in 
\eqref{eq:N2_branch4_condition} to investigate the $\hat{\g}$ case.  We
find
\begin{equation}
	[\vv, \hh_1] = -1,
\end{equation}
which, since $[\vv, \hh_1] \in \Im(\mathbb{H})$, is inconsistent.  Therefore, we cannot
get a super-extension of $\hat{\g}$ in this branch.
\subsection{Summary} \label{subsec:N2_summary}
Table \ref{tab:N2_classification} lists all the sub-branches of $\cS$ we found that contain $\N=2$ generalised Bargmann superalgebras.  Each Lie superalgebra in one of these branches is an $\N=2$ super-extension of one of the generalised Bargmann algebras given in Table \ref{tab:algebras}, taken from \cite{Figueroa-OFarrill:2017ycu}.  It is interesting that $\bZ$ only appears in
\begin{equation}
	[\B, \P] = \bZ \quad \text{and} \quad [\Q, \Q] = \bZ.
\end{equation}
Therefore, in all instances, $\bZ$ remains central after the super-extension.  In particular, this means that we may always find a kinematical Lie superalgebra (without a central-extension) by taking the quotient of our generalised Bargmann superalgebra $\s$ by the $\mathbb{R}$-span of $\bZ$, $\s / \langle \bZ \rangle$.
\begin{table}[h!]
  \centering
  \caption{Sub-branches of $\N=2$ generalised Bargmann superalgebras (with $[\Q,\Q]\neq 0$)}
  \label{tab:N2_classification}
  \setlength{\extrarowheight}{2pt}
  \rowcolors{2}{blue!10}{white}
    \begin{tabular}{l|l*{5}{|>{$}c<{$}}}\toprule
      \multicolumn{1}{c|}{(S)B} & \multicolumn{1}{c|}{$\k$}& \multicolumn{1}{c|}{$\mh$}& \multicolumn{1}{c|}{$\mz$}& \multicolumn{1}{c|}{$\mb$} & \multicolumn{1}{c|}{$\mp$} & \multicolumn{1}{c|}{$[\Q,\Q]$}\\
      \toprule
      \hyperlink{N2_a_1i}{$\mathsf{1.i}$} & \hyperlink{a}{$\hat{\a}$}  & \checkmark & & & &  \bZ  \\
      \hyperlink{N2_a_1ii}{$\mathsf{1.ii}$}  & \hyperlink{a}{$\hat{\a}$} & & & &  & \bH + \bZ \\
      \hyperlink{N2_a_21ii}{$\mathsf{2.1.ii}$} & \hyperlink{a}{$\hat{\a}$} & & & & \checkmark &  \bH  \\
      \hyperlink{N2_a_22i}{$\mathsf{2.2.i}$} & \hyperlink{a}{$\hat{\a}$} & \checkmark & & & \checkmark &  \bZ  \\
      \hyperlink{N2_a_23ii}{$\mathsf{2.3.i}$} & \hyperlink{a}{$\hat{\a}$}  & \checkmark & & & \checkmark &  \bZ + \B \\
      \hyperlink{N2_a_23i}{$\mathsf{2.3.ii}$} & \hyperlink{a}{$\hat{\a}$}  & & & & \checkmark & \bH + \bZ + \B \\
      \hyperlink{N2_a_45iii}{$\mathsf{4.5.iii}$} & \hyperlink{a}{$\hat{\a}$} & \checkmark & & \checkmark & \checkmark & \bH + \bZ + \B + \P \\
      \hyperlink{N2_ng_1i}{$\mathsf{1.i}$}  & \hyperlink{n-}{$\hat{\n}_-$} & \checkmark & & & &  \bZ  \\
      \hyperlink{N2_n-_22i}{$\mathsf{2.2.i}$} & \hyperlink{n-}{$\hat{\n}_-$}  & \checkmark & & & \checkmark &  \bZ  \\
      \hyperlink{N2_n-_23i}{$\mathsf{2.3.i}$} & \hyperlink{n-}{$\hat{\n}_-$}  & \checkmark & & & \checkmark &  \bZ + \P  \\
      \hyperlink{N2_n-_32i}{$\mathsf{3.2.i}$} & \hyperlink{n-}{$\hat{\n}_-$}  & \checkmark & & \checkmark & &  \bZ  \\ 
      \hyperlink{N2_n-_33i}{$\mathsf{3.3.i}$} & \hyperlink{n-}{$\hat{\n}_-$}  & \checkmark & & \checkmark & &  \bZ + \P \\
      \hyperlink{N2_ng_1i}{$\mathsf{1.i}$} & \hyperlink{n+}{$\hat{\n}_+$}  & \checkmark & & & &  \bZ  \\      
      \hyperlink{N2_ng_22i}{$\mathsf{2.2.i}$} & \hyperlink{n+}{$\hat{\n}_+$}  & \checkmark & & & \checkmark &  \bZ  \\ 
      \hyperlink{N2_n+_41i}{$\mathsf{4.1.i}$} & \hyperlink{n+}{$\hat{\n}_+$} & \checkmark & & \checkmark & \checkmark &  \bZ  \\      
       \hyperlink{N2_n+_42iii}{$\mathsf{4.2.iii}$} & \hyperlink{n+}{$\hat{\n}_+$} & \checkmark & & \checkmark & \checkmark &  \bZ + \B + \P  \\
      \hyperlink{N2_n+_45iii}{$\mathsf{4.5.iii}$} & \hyperlink{n+}{$\hat{\n}_+$}  & \checkmark & & \checkmark & \checkmark &  \bH + \bZ + \B + \P \\
      \hyperlink{N2_ng_1}{$\mathsf{1.i}$} & \hyperlink{g}{$\hat{\g}$} & \checkmark & & & &  \bZ  \\  
      \hyperlink{N2_ng_2i}{$\mathsf{2.2.i}$} & \hyperlink{g}{$\hat{\g}$}  & \checkmark & & & \checkmark &  \bZ  \\  
      \hyperlink{N2_g_32i}{$\mathsf{3.2.i}$} & \hyperlink{g}{$\hat{\g}$} & \checkmark & & \checkmark & &  \bZ  \\         
      \hyperlink{N2_g_33i}{$\mathsf{3.3.i}$} & \hyperlink{g}{$\hat{\g}$} & \checkmark & & \checkmark & &  \bZ + \P  \\  
       \hyperlink{N2_g_33ii}{$\mathsf{3.3.ii}$}  & \hyperlink{g}{$\hat{\g}$} & & & \checkmark & &  \bH + \bZ + \P  \\  
      \bottomrule
    \end{tabular}
    \caption*{The first column indicates the sub-branch of generalised Bargmann superalgebras,
    so that the reader may navigate back to find the conditions on the non-vanishing parameters
    of these superalgberas.  The second column then tells us the underlying generalised Bargmann algebra $\k$.
    The next four columns tells us which of the $\s_{\bar{0}}$ generators $\bH,\bZ, \B$, 
    and $\P$ act on $\Q$.  Recall, $\J$ necessarily acts on $\Q$, so we do not need to state this explicitly.
    The final column shows which $\s_{\bar{0}}$ generators occur in the $[\Q, \Q]$ bracket.
     }
\end{table}
\subsubsection{Unpacking the Notation}
Although the formalism employed in this classification was useful for our purposes, it may be unfamiliar to the reader.  Therefore, in this section, we will convert one of the $\N=2$ super-extensions of the Bargmann algebra in sub-branch \hyperlink{N2_g_33ii}{$\mathsf{3.3.ii}$} into a more standard notation.  The $\s_{\bar{0}}$ brackets have already been discussed in section \ref{subsec:FI_KLSAs}, so we will concentrate solely on the $[\s_{\bar{0}}, \s_{\bar{1}}]$ and $[\s_{\bar{1}}, \s_{\bar{1}}]$ brackets,
\begin{equation}
	[\sB(\beta), \sQ(\vt)] = \sQ(\beta \vt \mb) \quad \text{and} \quad [\sQ(\vt), \sQ(\vt)] = \Re(\vt N_0 \vt^\dagger) \sH + \Re(\vt N_1 \vt^\dagger) \sZ - \sP(\vt N_4 \vt^\dagger),
\end{equation}
where 
\begin{equation}
	\mb = \begin{pmatrix}
		0 & 0 \\ \bb_3 & 0
	\end{pmatrix} \quad
	N_0 = \begin{pmatrix}
		0 & 0 \\ 0 & -2c |\bb_3|^2
	\end{pmatrix} \quad
	N_1 = \begin{pmatrix}
		c & \rr \\ \bar{\rr} & d
	\end{pmatrix} \quad
	N_4 = \begin{pmatrix}
		0 & -c\bar{\bb_3} \\ c\bb_3 & \bb_3 \rr - \bar{\rr} \bar{\bb_3}
	\end{pmatrix}.
\end{equation}
Let $\{\sQ^1_a\}$ be a real basis for the first $\so(3)$ spinor module in $\s_{\bar{1}} = S^1 \oplus S^2$ where $a \in \{1, 2, 3, 4\}$ , and $\{\sQ^2_a\}$ be a basis for the second $\so(3)$ spinor module.  Letting $\vt = (\theta_1, \theta_2)$, and substituting the above matrices into the $[\s_{\bar{0}}, \s_{\bar{1}}]$ bracket, we get
\begin{equation}
		[\sB(\beta), \sQ^1(\theta_1)] = 0 \quad \text{and}  \quad [\sB(\beta), \sQ^2(\theta_2)] = \sQ^1(\beta \theta_2 \bb_3).
\end{equation}
Substituting $\vt = \vt'= (\theta_1, 0)$, $\vt = (\theta_1, 0)$ and $\vt'=(0,\theta_2)$, and $\vt=\vt'=(0,\theta_2)$ into the $[\s_{\bar{1}}, \s_{\bar{1}}]$ bracket we find
\begin{equation}
	\begin{split}
		[\sQ^1(\theta_1), \sQ^1(\theta_1)] &= c |\theta_1|^2 \sZ \\
		[\sQ^1(\theta_1), \sQ^2(\theta_2)] &= \Re(\theta_1\rr\bar{\theta_2})Z -\tfrac{c}{2} \sP(\theta_2\bb_3\bar{\theta_1}-\theta_1\bar{\bb_3}\bar{\theta_2}) \\ 
		[\sQ^2(\theta_2), \sQ^2(\theta_2)] &= -2c |\bb_3|^2 |\theta_2|^2 \sH + d |\theta_2|^2 \sZ - \sP(\theta_2 (\bb_3 \rr - \bar{\rr}\bar{\bb_3})\bar{\theta_2}).
	\end{split}
\end{equation} 
For the purposes of the present example, we will set the parameters of this super-extension as specified in \eqref{eq:notation_example}; therefore, we have $[\s_{\bar{0}}, \s_{\bar{1}}]$ brackets
\begin{equation}
		[\sB(\beta), \sQ^1(\theta_1)] = \sQ^2(\beta \theta_2 \ii) \quad \text{and}  \quad [\sB(\beta), \sQ^2(\theta_2)] = 0,
\end{equation}
and $[\s_{\bar{1}}, \s_{\bar{1}}]$ brackets
\begin{equation}
		[\sQ^1(\theta_1), \sQ^1(\theta_1)] = |\theta_1|^2 \sZ, \quad
		[\sQ^1(\theta_1), \sQ^2(\theta_2)] = -\tfrac{1}{2} \sP(\theta_2\ii\bar{\theta_1}+\theta_1\ii\bar{\theta_2}) \quad \text{and} \quad
		[\sQ^2(\theta_2), \sQ^2(\theta_2)] = |\theta_2|^2 \sH.
\end{equation}
Now, we can write
\begin{equation}
	[\bB_i, \bQ^2_a] = \sum_{b = 1}^4 \bQ^1_b \tensor{\beta}{_i^b_a} \quad [\bQ^1_a, \bQ^1_b] = \delta_{ab} \bZ
	\quad [\bQ^1_a, \bQ^2_b] = \sum_{i = 1}^3 \bP_i \Gamma_{ab}^i, \quad \quad [\bQ^2_a, \bQ^2_b] = \delta_{ab} \bH.
\end{equation}
Our brackets then produce the $\beta_i$ matrices
\begin{equation}
	\beta_1 = \begin{pmatrix}
		 -\mathbb{1} & 0 \\ 0 & \mathbb{1}
	\end{pmatrix} \quad
	\beta_2 = \begin{pmatrix}
		0 & - \sigma_1 \\ - \sigma_1 & 0
	\end{pmatrix} \quad
	\beta_3 = \begin{pmatrix}
		0 & \sigma_3 \\ \sigma_3 & 0
	\end{pmatrix},
\end{equation}
and the symmetric $\Gamma^i$ matrices
\begin{equation}
	\Gamma^1 = \begin{pmatrix}
		-\mathbb{1} & 0  \\ 0 & \mathbb{1}
	\end{pmatrix} \quad
	\Gamma^2 = \begin{pmatrix}
		0 & -\sigma_1 \\ -\sigma_1 & 0 
	\end{pmatrix} \quad
	\Gamma^3 = \begin{pmatrix}
		0 & \sigma_3 \\ \sigma_3 & 0
	\end{pmatrix},
\end{equation}
where $\sigma_1$ and $\sigma_3$ are the first and third Pauli matrix, respectively.  This $\N=2$ Bargmann superalgebra takes the same form as the $(2+1)$-dimensional Bargmann superalgebra utilised in \cite{Andringa:2013mma}.
\section{Discussion} \label{sec:discussion}
In this paper, we classified the $\N=1$ super-extensions of the generalised Bargmann algebras with three-dimensional spatial isotropy up to isomorphism.  We also presented the non-empty sub-branches of the variety $\cS$ describing the $\N=2$ super-extensions of the generalised Bargmann algebras.  To simplify this classification problem, we utilised a quaternionic formalism such that $\so(3)$ scalar modules were described by copies of $\mathbb{R}$, $\so(3)$ vector modules were described by copies of $\Im(\mathbb{H})$, and $\so(3)$ spinor modules were described by copies of $\mathbb{H}$.  We began by defining a universal generalised Bargmann algebra, which, under the appropriate setting of some parameters, may be reduced to the centrally-extended static kinematical Lie algebra $\hat{\a}$, the centrally-extended Newton-Hooke algebras $\hat{\n}_\pm$, or the Bargmann algebra $\hat{\g}$.  The most general form for the $[\s_{\bar{0}}, \s_{\bar{1}}]$ and $[\s_{\bar{1}}, \s_{\bar{1}}]$ bracket components were found before substituting them into the super-Jacobi identities and finding the constraints on the parameters for these maps.  Because of the formalism in use, solving these constraints amounted to some linear algebra over the quaternions, and paying attention to the allowed basis transformations $\G \subset \GL(\s_{\bar{0}}) \times \GL(\s_{\bar{1}})$.  Since we are only interested in supersymmetric extensions of these algebras, we limited ourselves to those branches which allow for non-vanishing $[\Q, \Q]$.  The results of the $\N=1$ and $\N=2$ analyses are in Tables \ref{tab:N1_classification} and \ref{tab:N2_classification}, respectively.  We found 9 isomorphism classes in the $\N=1$ case, and 22 non-empty sub-branches in the $\N=2$ case.  
\\ \\
These classifications have a few interesting features.  The $\N=1$ classification showed that if we centrally-extended a kinematical Lie algebra before finding its super-extensions, we will generally obtain very different results than if we super-extended the algebra before finding its central extensions.   It would be interesting to investigate whether there are any special properties of those generalised Bargmann superalgebras which can arise through both procedures. 
\\ \\
Although, not particularly interesting in the $\N=1$ case, the double complex structure found in which $\mb$ and $\mp$ act as the differentials on modules $\s_{\bar{1}}$ may be interesting to study for $\N \geq 2$.  Notice that this interpretation is possible due to the kinematical Lie algebras having $[\B, \B] = [\P, \P] = 0$; therefore, we will have a double complex on any Lie superalgebra with these brackets.  In particular, all non-relativistic and ultra-relativistic kinematical Lie algebras have these brackets: this structure will be present for all investigations into extended supersymmetry for Galilean and Carrollian spacetimes.  Thus a full understanding of this structure may shed some light on extended supersymmetry in these regimes. 
\\ \\
As with the classification of the $\N=1$ super-extensions of the non-centrally-extended kinematical Lie algebras in \cite{Figueroa-OFarrill:2019ucc}, both the $\N=1$ classification and $\N=2$ branch analysis presented here demonstrate that each generalised Bargmann algebra can have many possible super-extensions.  With the exception of the superalgebras of sub-branch \hyperlink{N2_a_21ii}{$\mathsf{2.1.ii}$} for the centrally-extended static KLA $\hat{\a}$, every super-extension has the central extension $\bZ$ in the $[\Q, \Q]$ component of the bracket.\footnote{Non-relativistic superalgebras with $[\Q, \Q] = \bZ$ have been considered in papers such as \cite{Meyer:2017zfg, CLARK198491}.} Therefore, understanding the significance of the supersymmetry containing this internal component appears to be vital for a full understanding of these generalised Bargmann superalgebras and their phenomenology. 
\\ \\
A clear next step in this project would be to determine the automorphism groups for each of the generalised Bargmann superalgebras to determine their admissible Lie super pairs as was done for the kinematical Lie superalgebras without central extension in \cite{Figueroa-OFarrill:2019ucc}.  With this information, we could then classify the possible generalised Bargmann superspaces in $(3+1)$-dimensions, and, \textit{"superising"} the work in \cite{Figueroa-OFarrill:2019sex}, find the geometric properties of these spaces, such as the invariant structures and their associated symmetries. 
\\ \\
In this paper, we restricted ourselves to only the first section in Table \ref{tab:ce_algebras}; therefore, another obvious extension to the current work would be to classify the super-extensions for the algebras in the other two sections.  One further direction of investigation would be to classify the generalised Bargmann superalgebras for different dimensions $D$ and higher $\N$.  Notice that the formalism introduced in section \ref{subsec:N2_setup} could easily be recycled for investigations into $D=3$ kinematical Lie superalgebras with $\N>2$.  Wanting to explore $\N$-extended super-extensions, we would only change the size of the $\s_{\bar{1}}$ vector space and the quaternionic matrices, ensuring we were working with $\s_{\bar{1}} = \bigoplus_{i=1}^{\N} S^i$, where $S^i$ is a copy of the four-dimensional real $\so(3)$ spinor module, and $\mh, \mz, \mb, \mp \in \Mat_{\N}(\mathbb{H})$. It may also be interesting to look at $D=2$ due to the connection with Chern-Simons theories (see \cite{deAzcarraga:2002xi, Gomis:2019nih}), and to determine the invariant tensors for the Lie superalgebras presented in this paper and try to map them down to the invariant tensors for the Lie superalgebras in one dimension lower.
\\ \\
In addition to introducing more $\so(3)$ spinor modules into the $\s_{\bar{1}}$ vector space as suggested above, we could look to extend the $\s_{\bar{0}}$ vector space.  Having demonstrated how one may incorporate the extra $\so(3)$ scalar generator $\bZ$ into the underlying vector space, we could introduce the additional generators to consider classifying Maxwell superalgebras, or graded conformal  Lie superalgebras, extending the work of \cite{Figueroa-OFarrill:2018ygf} in the latter case. \\
\begin{table}[h!]
  \centering
  \caption{Supergravity Algebras}
  \label{tab:SG_algebras}
  \setlength{\extrarowheight}{2pt}
  \rowcolors{2}{blue!10}{white}
    \begin{tabular}{l|l*{6}{|>{$}c<{$}}}\toprule
      \multicolumn{1}{c|}{S} & \multicolumn{1}{c|}{$\k$} &  \multicolumn{1}{c|}{(S)B} & \multicolumn{1}{c|}{$\bH$}& \multicolumn{1}{c|}{$\bZ$}& \multicolumn{1}{c|}{$\B$} & \multicolumn{1}{c|}{$\P$} & \multicolumn{1}{c|}{$[\Q,\Q]$}\\
      \toprule
      3 & \hyperlink{a}{$\hat{\a}$} & - & \checkmark & & & & \bH + \bZ \\
      - & \hyperlink{a}{$\hat{\a}$} & \hyperlink{N2_a_1ii}{\mathsf{1.ii}} & & & &  & \bH + \bZ \\
      - & \hyperlink{a}{$\hat{\a}$} & \hyperlink{N2_a_21ii}{\mathsf{2.1.ii}} & & & & \checkmark &  \bH  \\
      - & \hyperlink{a}{$\hat{\a}$} & \hyperlink{N2_a_23i}{\mathsf{2.3.i}} & \checkmark & & & \checkmark &  \bZ + \B \\
      - & \hyperlink{a}{$\hat{\a}$} & \hyperlink{N2_a_23ii}{\mathsf{2.3.ii}} & & & & \checkmark & \bH + \bZ + \B \\
      - & \hyperlink{a}{$\hat{\a}$} & \hyperlink{N2_a_45iii}{\mathsf{4.5.iii}} & \checkmark & & \checkmark & \checkmark & \bH + \bZ + \B + \P \\
      - & \hyperlink{n-}{$\hat{\n}_-$} & \hyperlink{N2_n-_23i}{\mathsf{2.3.i}} & \checkmark & & & \checkmark &  \bZ + \P  \\
      - & \hyperlink{n-}{$\hat{\n}_-$}& \hyperlink{N2_n-_33i}{\mathsf{3.3.i}} & \checkmark & & \checkmark & &  \bZ + \P \\
      - & \hyperlink{n+}{$\hat{\n}_+$} & \hyperlink{N2_n+_42iii}{\mathsf{4.2.iii}} & \checkmark & & \checkmark & \checkmark &  \bZ + \B + \P  \\
      - & \hyperlink{n+}{$\hat{\n}_+$} & \hyperlink{N2_n+_45iii}{\mathsf{4.5.iii}} & \checkmark & & \checkmark & \checkmark &  \bH + \bZ + \B + \P \\
      - & \hyperlink{g}{$\hat{\g}$} & \hyperlink{N2_g_33i}{\mathsf{3.3.i}} & \checkmark & & \checkmark & &  \bZ + \P  \\  
      - & \hyperlink{g}{$\hat{\g}$} & \hyperlink{N2_g_33ii}{\mathsf{3.3.ii}} & & & \checkmark & &  \bH + \bZ + \P  \\  
      \bottomrule
    \end{tabular}
    \caption*{The first column gives the unique identifier for the $\N=1$  generalised Bargmann superalgebras $\s$,
    and the second column tells us the underlying generalised Bargmann algebra $\k$.
    For the $\N=2$ superalgebras, the third column indicates the sub-branch the generalised Bargmann superalgebra comes from,
    so that the reader may navigate back to find the conditions on the non-vanishing parameters
    of the superalgebra.  The next four columns tells us which of the $\s_{\bar{0}}$ generators $\bH, \bZ, \B$, 
    and $\P$ act on $\Q$.  Recall, $\J$ necessarily acts on $\Q$, so we do not need to state this explicitly.
    The final column shows which $\s_{\bar{0}}$ generators occur in the $[\Q, \Q]$ bracket.
     }
\end{table}~\\
For future research, it is interesting that many of these superalgebras can be gauged to produce supergravity theories in $(3+1)$-dimensions.  Following the relativistic case, for a superalgebra to gauge to a non-trivial supergravity theory, we must have the commutator of two (local) supersymmetry transformations producing a (local) spacetime translation; therefore, $[\Q, \Q]$ must have $\bH$, $\P$, or $\bH$ and $\P$ on the right-hand-side.  A list of the sub-branches in which such generalised Bargmann superalgebras are found is given in Table \ref{tab:SG_algebras}.  
\\ \\
Another exciting application of the superalgebras classified here lies in possible holographic dualities.  There is an extensive literature investigating non-relativistic holography (see \cite{Hartong:2014oma, Christensen:2013rfa, Hartong:2015wxa, Hartong:2015zia} for a few examples), in which torsional Newton-Cartan (TNC) geometries are dual to Ho\v{r}ava-Lifshitz gravity.  On the Newton-Cartan side of this duality, we have some geometry that may be obtained by gauging a non-relativistic algebra \cite{Bergshoeff:2014uea}; therefore, we may consider taking any of the generalised Bargmann algebras as our starting point. To obtain the dual theory, we may use the following procedure.  Let $\phi$ be an isomorphism which exchanges the two $\so(3)$ scalar modules $\bH$ and $\bZ$, and exchanges the two $\so(3)$ vector modules $\B$ and $\P$.  To be more explicit, define $\phi$ as
\begin{equation}
	\begin{split}
	\phi &: \s \rightarrow \s'
	\end{split} \quad \text{such that} \quad 
	\begin{split}
		\phi(\bH) &= \bZ \\
		\phi(\bZ) &= \bH
	\end{split} \quad
	\begin{split}
		\phi(\bB_i) &= \bP_i \\
		\phi(\bP_i) &= \bB_i \\
	\end{split} \quad
		\phi(\bQ^A_a) = \bQ^A_a, \nonumber
\end{equation}
where $i$ runs over $\so(3)$ vector indices $\{1, 2, 3\}$, $a$ runs over $\so(3)$
spinor indices $\{1, 2, 3, 4\}$, and $A$ labels our spinor module $\{1, 2\}$.
Focussing solely on the generalised Bargmann superalgebras described in Table \ref{tab:holographic_algebras},
the $\N=1$ cases have two brackets in common:
\begin{equation}
	[\sB(\beta), \sP(\pi)] = \Re(\bar{\beta}\pi) Z \quad \text{and} \quad
		[\sQ(s), \sQ(s)] = |s|^2 \sZ. \nonumber
\end{equation}
For the $\N=2$ superalgebras, these two brackets are very similar:
\begin{equation}
	[\sB(\beta), \sP(\pi)] = \Re(\bar{\beta}\pi) \sZ \quad \text{and} \quad
		[\sQ(s), \sQ(s)] = \Re(s N_1 s^\dagger) \sZ, \nonumber
\end{equation}
where $N_1^\dagger = N_1$.  Notice that by sending $\bZ$ to $\bH$, the two $\s$
brackets including the mass generator $\bZ$ become $[\B, \P] = \bH$ and $[\Q, \Q] = \bH$;
thus, under this isomorphism, we obtain Carroll superalgebras from these 
generalised Bargmann superalgebras.\footnote{This duality was also recognised in the non-supersymmetric context in \cite{Duval:2014uoa}.}   By exchanging the $\so(3)$ vector modules $\B$ and $\P$ as well, we have the following interpretation for the generalised Bargmann algebras. The centrally-extended static kinematical Lie algebra $\hat{\a}$, with only $[\B, \P] = \bZ$, becomes the Carroll algebra.  The centrally-extended dS Galilean algebra $\hat{\n}_-$ becomes
\begin{equation}
	[\B, \P] = \bH, \quad [\bZ, \B] = -\B, \quad [\bZ, \P] = \P. \nonumber
\end{equation}
Interpreting $\bZ$ as a dilatation, these are the brackets for the Carroll Lifshitz algebra first discussed in \cite{Gibbons:2009me}. In fact, our $\N=1$ extension of this algebra is very similar to the superalgebra found to describe the Carroll superparticle in \cite{Bergshoeff:2015wma}.  The only difference being that, in our context, $[\bZ, \Q] = \Q$, whereas they have the dilatation acting trivially on $\Q$.  That we have such a bracket may not be too surprising given the two $\N=1$ Galilean superalgebras found in \cite{Figueroa-OFarrill:2019ucc}.  There, we found the Galilean superalgebra obtained through a contraction of the Poincaré superalgebra, in which the only bracket involving $\s_{\bar{1}}$ was $[\Q, \Q] = -\P$; but, we also found a super-extension of the Galilean algebra which included $[\bH, \Q] = \Q$.  Analogously, it may be that since the super-extension of the Carroll-Lifshitz algebra presented in \cite{Bergshoeff:2015wma} comes from a limit, this bracket is not present.  The centrally-extended AdS Galilean algebra $\hat{\n}_+$ and the Bargmann algebra become
\begin{gather}
		[\B, \P] = \bH, \quad [\bZ, \B] = \P, \quad [\bZ, \P] = -\B, \nonumber \\ \text{and} \nonumber\\
		[\B, \P] = \bH, \quad [\bZ, \B] = \P, \nonumber
\end{gather}
respectively.  For now, we do not have any interpretation for these Lie algebras; however, it may be of interest to explore which theories require an extension of the Carroll algebra by a generator that acts on the $\so(3)$ vector modules through a rotation (the $\hat{\n}_+$ case) or a nilpotent matrix (the $\hat{\g}$ case).  Using this isomorphism, we may systematically produce Carroll superalgebras corresponding to each of the generalised Bargmann superalgebras.  With these connections to non-relativistic holography, it would be interesting to use the methods applied in this classification to the generalised Lifshitz and Schrödinger algebras presented in \cite{Figueroa-OFarrill:2018ygf}.
\\ \\
If interested in extending this work to Lifshitz algebras, it is important to note that papers concentrating on supersymmetric Lifshitz field theories such as \cite{Chapman:2015wha, Arav:2019tqm} utilise a super-extension of the Lifshitz algebra in which the modules of $\s_{\bar{1}}$ transform as scalars under $\so(3)$. Therefore, the classification method highlighted in this paper would need to be modified to take this change of $\so(3)$ action into account.
\\ \\
\begin{table}[h!]
  \centering
  \caption{Generalised Bargmann superalgebras with $[\Q, \Q] = \bZ$}
  \label{tab:holographic_algebras}
  \setlength{\extrarowheight}{2pt}
  \rowcolors{2}{blue!10}{white}
    \begin{tabular}{l|l*{6}{|>{$}c<{$}}}\toprule
      \multicolumn{1}{c|}{S} & \multicolumn{1}{c|}{$\k$} &  \multicolumn{1}{c|}{(S)B} & \multicolumn{1}{c|}{$\bH$}& \multicolumn{1}{c|}{$\bZ$}& \multicolumn{1}{c|}{$\B$} & \multicolumn{1}{c|}{$\P$} & \multicolumn{1}{c|}{$[\Q,\Q]$}\\
      \toprule
      1 & \hyperlink{a}{$\hat{\a}$} & - & & & & & \bZ  \\    
      2 & \hyperlink{a}{$\hat{\a}$} & - & \checkmark & & & & \bZ  \\    
      - & \hyperlink{a}{$\hat{\a}$} & \hyperlink{N2_a_1i}{\mathsf{1.i}} & \checkmark & & & &  \bZ  \\
      - & \hyperlink{a}{$\hat{\a}$} & \hyperlink{N2_a_22i}{\mathsf{2.2.i}} & \checkmark & & & \checkmark &  \bZ  \\
      4 & \hyperlink{n-}{$\hat{\n}_-$} & - & & & & & \bZ  \\ 
      5 & \hyperlink{n-}{$\hat{\n}_-$} & - & \checkmark & & & & \bZ  \\       
      - & \hyperlink{n-}{$\hat{\n}_-$} & \hyperlink{N2_ng_1i}{\mathsf{1.i}} & \checkmark & & & &  \bZ  \\
      - & \hyperlink{n-}{$\hat{\n}_-$} & \hyperlink{N2_n-_22i}{\mathsf{2.2.i}} & \checkmark & & & \checkmark &  \bZ  \\
      - & \hyperlink{n-}{$\hat{\n}_-$} & \hyperlink{N2_n-_32i}{\mathsf{3.2.i}} & \checkmark & & \checkmark & &  \bZ  \\ 
      6 & \hyperlink{n+}{$\hat{\n}_+$} & - & & & & &  \bZ  \\
      7 & \hyperlink{n+}{$\hat{\n}_+$} & - & \checkmark & & & &  \bZ  \\
      - & \hyperlink{n+}{$\hat{\n}_+$} & \hyperlink{N2_ng_1i}{\mathsf{1.i}} & \checkmark & & & &  \bZ  \\      
      - & \hyperlink{n+}{$\hat{\n}_+$} & \hyperlink{N2_n+_22i}{\mathsf{2.2.i}} & \checkmark & & & \checkmark &  \bZ  \\
      - & \hyperlink{n+}{$\hat{\n}_+$} & \hyperlink{N2_n+_41i}{\mathsf{4.1.i}} & \checkmark & & \checkmark & \checkmark &  \bZ  \\      
      8 & \hyperlink{g}{$\hat{\g}$} & - & & & & & \bZ  \\  
      9 & \hyperlink{g}{$\hat{\g}$} & - & \checkmark & & & & \bZ  \\  
      - & \hyperlink{g}{$\hat{\g}$} & \hyperlink{N2_ng_1i}{\mathsf{1.i}} & \checkmark & & & &  \bZ  \\  
      - & \hyperlink{g}{$\hat{\g}$} & \hyperlink{N2_g_22i}{\mathsf{2.2.i}} & \checkmark & & & \checkmark &  \bZ  \\  
      - & \hyperlink{g}{$\hat{\g}$}& \hyperlink{N2_g_32i}{\mathsf{3.2.i}} & \checkmark & & \checkmark & &  \bZ  \\  
      \bottomrule
    \end{tabular}
    \caption*{The first column gives the unique identifier for each of the $\N=1$ generalised Bargmann superalgebra $\s$,
    and the second column tells us the underlying generalised Bargmann algebra $\k$.
    For the $\N=2$ superalgebras, the third column indicates the sub-branch the generalised Bargmann superalgebra comes from,
    so that the reader may navigate back to find the conditions on the non-vanishing parameters
    of the superalgebra.  The next four columns tells us which of the $\s_{\bar{0}}$ generators $\bH, \bZ, \B$, 
    and $\P$ act on $\Q$.  Recall, $\J$ necessarily acts on $\Q$, so we do not need to state this explicitly.
    The final column shows which $\s_{\bar{0}}$ generators occur in the $[\Q, \Q]$ bracket.
     }
\end{table}
\section*{Acknowledgements}
I am grateful to José Figueroa-O'Farrill for first proposing this project and his continual guidance throughout.  I would also like to thank Jan Rosseel for his incredibly helpful comments on a previous version of this paper.  Finally, I want to thank Jelle Hartong for bringing my attention to the supersymmetric Aristotelian literature.  This work is funded by the Engineering and Physical Sciences Research Council (EPSRC).
\printbibliography
\end{document}